\documentclass[review, nopreprintline, authoryear]{elsarticle}
\usepackage[a4paper, left=2cm, right=2cm, top=2.5cm, bottom=2.5cm]{geometry}

\usepackage[english]{babel}
\usepackage[utf8]{inputenc}
\usepackage[T1]{fontenc}
\usepackage{color}
\usepackage{lmodern, microtype}

\usepackage{amsmath}
\usepackage{amsthm}
\usepackage{amstext}
\usepackage{amssymb}
\usepackage{latexsym}
\usepackage{dsfont}
\usepackage{mathrsfs}

\usepackage{caption}
\usepackage{subcaption}

\usepackage{setspace}
\usepackage[normalem]{ulem}
\usepackage{bbold}
\usepackage{natbib}
\setcitestyle{authoryear,open={(},close={)}} %Citation-related commands

\DeclareMathAlphabet{\mathbb}{U}{msb}{m}{n}

\usepackage{hyperref}
\hypersetup{
    colorlinks=true,
    unicode=true
}

\usepackage{graphicx}
\usepackage{float}
\usepackage{caption, booktabs}
\usepackage{graphics}
\usepackage{subcaption}
\usepackage{lipsum}
\usepackage{amsmath,amssymb,amsthm,amsfonts}
\usepackage[table, dvipsnames]{xcolor}
\usepackage{mathrsfs}
\usepackage{soul}

\usepackage{threeparttable}

\newtheorem{prop}{Proposition}

\newtheorem{example}{Example}
\newtheorem{remarks}{Remarks}

\theoremstyle{definition}

\newcolumntype{H}{>{\setbox0=\hbox\bgroup}c<{\egroup}@{}}

\usepackage{algorithm2e}
\RestyleAlgo{ruled}
\usepackage{algpseudocode}
\usepackage{dsfont}

\newcommand{\SP}[1]{\color{black}{{#1}} \color{black}} 
\newcommand{\new}[1]{\textcolor{black}{#1}}

\usepackage{tikz}
\usepackage{lineno}
\modulolinenumbers[0]

\begin{document}

% Comment out to disable line numbers
%\linenumbers

\begin{frontmatter}

    \journal{Journal of Econometrics
    }

    \title{Copula-Based Time Series for Non-Gaussian and Non-Markovian Stationary Processes}

    %% Group authors per affiliation:

    \author[1]{Sven Pappert\corref{cor1}}
    \ead{pappert@statistik.tu-dortmund.de}
    \cortext[cor1]{Corresponding author}
\address[1]{Department of Statistics, TU Dortmund University, Germany}

	\author[2]{Harry Joe}
    \ead{Harry.joe@ubc.ca}
    %\cortext[cor1]{Corresponding author}
\address[2]{Department of Statistics, University of British Columbia, Canada}

    \begin{abstract}
   In the copula-based approach to univariate time series modeling, the finite dimensional temporal dependence of a stationary time series is captured by a copula. Recent studies investigate how copula-based time series models can be generalized to have long-term autoregressive effects. We study a generalization that comes from a Markov sequence of order $p$ and a $q$-dependent sequence. We derive the relation of the model to Gaussian-ARMA models and to the Gaussian-GARCH$(1,1)$ model. We investigate distributional properties of the process and discuss the maximum likelihood estimation (MLE). Additionally we analyze the copula moving aggregate process of order one, or MAG$(1)$, as it is a basic building block. Last we test the model in probabilistic forecasting studies on US inflation and German wind energy production.
    \end{abstract}
    \begin{keyword}
        Copula, Dependence modeling, Identifiability, Probabilistic Forecasting, Tail dependence, Time series analysis
    \end{keyword}
\end{frontmatter}

\section{Introduction}
\label{Sec:Introduction}
% General introduction to Copula-based TS
\SP{Copula-based time series models have been used for non-Gaussian univariate time series.} The joint distribution of $(p+1)$ consecutive observations is decomposed into the copula and the unconditional stationary distribution. If the underlying process is a Markov process of order $p$, the copula captures the serial dependence of the process. The model allows for non-linear serial dependence and can model clustering of extreme events properly by assigning a copula with suitable tail dependence. Additionally, the modeling of the unconditional stationary distribution is completely decoupled from the modeling of serial dependence, resulting in more flexibility. Copula-based time series models (or multivariate version thereof) are subject to extensive research and are used for applications.
% Some examples from related literature
To this end, the model is succesfully employed for forecasting of volatile time series by e.g. \cite{loaiza2018time} and \cite{pappert2023forecasting}. \cite{chen2006estimation, nagler2022stationary} derive asymptotic properties of (semi-parametric) estimation in the univariate and the multivariate case respectively. %\cite{beare2010copulas, ibragimov2005copula} investigate the forms of temporal dependencies,
\cite{simard2015forecasting, brechmann2015copar, nagler2022stationary, beare2015vine} consider multivariate extensions. Last, there is an emerging research brand on generalizing the model to have long-term serial dependence by \cite{mcneil2022time, pappert2024moving, bladt2025semiparametric}.
%Contextualization
This direction of research is also the context in which this work lies is. The purpose of this work is to explore a generalization that allows the copula-based time series model to have long-term autoregressive \SP{or non-Markov} effects. As it turns out, this generalization is equivalent to finding a copula-based generalization of the autoregressive moving average process (ARMA). We are interested in this generalization for the following reason.
When the process that is to be modeled is not a Markov process of order $p$, modeling the distribution of $(p+1)$ consecutive observations is not sufficient to capture the serial dependence. In fact, if the true data-generating process has asymptotically decreasing autocorrelation, e.g. ARMA$(1,1)$, no finite dimensional distribution is sufficient to capture the serial dependence. Additionally, we may be able to model processes with large $p$, that is a large autoregressive order, in a more parsimonious way. Now we discuss the existing approaches.\\
% copula-arma generalizations
In the related literature three attempts for generalizing copula-based time series models are available. \SP{First, \cite{joe2014dependence}, in Sect. 3.14, proposes an approach is inspired by the Poisson-ARMA model by \cite{mckenzie1988some}.} The basic idea is to model a time series by combining a AR$(p)$ copula-based time series model with a $q$-dependent copula-based time series model. The AR time series is a latent process in this model and serves as an innovation for the $q$-dependent time series. Second, \cite{mcneil2022time} and \cite{bladt2025semiparametric} propose to consider an AR$(\infty)$ copula-based time series model and to reparametrize in terms of linear ARMA$(p,q)$ parameters, where $p,q < \infty$. The authors conjecture the well-behavedness of the emerging time series based on the well-behavedness of the underlying linear ARMA model.
Third, \cite{pappert2024moving} proposes the moving aggregate modified autoregressive copula-based time series model, which similarly to \cite{joe2014dependence} introduces a $q$-dependent copula-based time series model.\footnote{In the paper, a $q$-dependent process is called 'moving aggregate'. We will use both names in this manuscript.} The model is also combined with an AR copula-based time series model. In contrast to the model by \cite{joe2014dependence}, the $q$-dependent process is the latent process and is used as innovation process for the AR process. A major problem of this model is that, in its pure form, it does not have a uniform distribution as unconditional stationary distribution. This problem essentially prevents proper marginal modeling. The solution proposed by \cite{pappert2024moving} is to introduce an auxiliary transformation that ensures uniformity. However, it is cumbersome to properly employ this transformation.
In summary, all of these models enable copula-based time series models to have infinite memory. Additionally, all these models can be understood as non-linear and probabilistic generalizations of the classic linear ARMA model. %While the models by \cite{mcneil2022time} and \cite{pappert2024moving} have some issues, the common innovations approach by \cite{joe2014dependence} seems not so suffer from these type of theoretical problems.
\\
% The contribution
The purpose of this paper is to examine and advance the theory for the model proposed by \cite{joe2014dependence} and related processes. In its original publication, the model is introduced, the joint distribution of the time series and the innovations is derived and the likelihood is calculated. \SP{However, no dependence properties are obtained, and the link to Gaussian-ARMA is claimed without details.} The model is given in terms of the following time series updating equations,
\begin{align}
U_t = h(\varepsilon_t,\hdots,\varepsilon_{t-q+1}, W_{t-q}), \quad
W_t = g(\varepsilon_t,W_{t-1},\hdots,W_{t-p}),
\quad
\varepsilon_t \stackrel{iid}{\sim} \, U(0,1).
\label{Eq:Model(p,q)_updating_Eq}
\end{align}
\SP{The sequence $\{\varepsilon_t\}_{t\in\mathbb{Z}}$ contains $i.i.d.$ innovations; $\varepsilon_t$ is independent of random variables with subscript $(t-1)$ or less.} The latent time series $\{W_t\}_{t\in\mathbb{Z}}$ follows an autoregressive copula-based time series model of order $p$, (compare \cite{chen2006estimation}, \cite{beare2010copulas}). The mapping $g:(0,1)^{(p+1)} \rightarrow (0,1)$ is the conditional quantile function corresponding to some $(p+1)$-variate copula $C$. In the following we may call this copula the 'AR-copula'. It holds that $W_t \sim U(0,1)\, \forall t \in \mathbb{Z}$ and that the copula of $(W_t,\hdots,W_{t-p})$ is $C$. The time series $\{U_t\}_{t\in\mathbb{Z}}$ is obtained by plugging in $W_{t-q}$ jointly with the innovations into the mapping $h:(0,1)^{(q+1)} \rightarrow (0,1)$. The mapping $h$ again is a conditional quantile function corresponding to some $(q+1)$-dimensional copula $K$. In the following, $K$ will be called the MAG-copula.
As an example for a quantile function corresponding to a copula, we give the quantile function of a bivariate Gaussian copula. But first, we give the conditional cumulative distribution function (CDF). The conditional CDF corresponding to a bivariate Gaussian copula $A$ with parameter $\alpha$ is given as (cf. \cite{smith2010modeling})
\begin{align*}
A_{2|1}(u|v) = \Phi\left( \frac{\Phi^{-1}(u) - \alpha \Phi^{-1}(v)}{\sqrt{1 - \alpha^2}} \right), \quad u,v \in (0,1).
\end{align*}
The conditional quantile function can be derived by inverting as
\begin{align*}
A_{2|1}^{-1}(p|v) = \Phi\left( \sqrt{1-\alpha^2} \Phi^{-1}(p) + \alpha \Phi^{-1}(v) \right), \quad p,v \in (0,1)
\end{align*}
This form will play an important role for the Gaussian-MAG$(1)$ process and the relation of the model under investigation to Gaussian-ARMA models discussed later.
%
% Properties
%
Since $(W_{t-q},\varepsilon_{t-q+1},\hdots,\varepsilon_t)$ in Eq.~\ref{Eq:Model(p,q)_updating_Eq} are independent and standard uniformly distributed, it follows that $U_t \sim U(0,1)$. Consequently, any desired unconditional stationary distribution can be obtained by quantile-transforming $\{U_t\}$. This property is an advantage to the model by \cite{pappert2024moving}, which needs an additional transformation to ensure uniformity.
% Precise research questions:
These features motivate us to further \SP{investigate the process defined by Eq.~\ref{Eq:Model(p,q)_updating_Eq}.} We make the following contributions:
\begin{itemize}
\item[a)] First, we examine in detail the $1$-dependent copula-based time series model, i.e. $\{V_t\}_{t\in\mathbb{Z}}$ given as $V_t = h(\varepsilon_{t}, \varepsilon_{t-1})$. This process is the building block of the model under investigation and also the MAGMAR-copula model by \cite{pappert2024moving}. We show that time series inherits some dependence properties from the MAG$(1)$-copula, such as positive quadrant dependence (PQD) and stochastic increasingness (SI'ness). However, it seems that the tail dependence coefficients of $(V_t, V_{t-1})$ is restricted. A known result on the form of tail dependence matrices by \cite{embrechts2016bernoulli} gives an absolute upper bound as $\frac{1}{2}$. Using numerical approximations and simulations we observe that most MAG$(1)$-copula time series only exhibit a little to potentially no tail dependence in the joint distribution of consecutive observations. %This result has far-reaching consequences.
%Since any continuous $1$-dependent time series model has a copula-based $1$-dependent representation, no continuous $1$-dependent time series seems to have tail dependence.
%
\item[b)] We show that a (subset of) Gaussian-ARMA process is recovered if the AR and MAG copulas are chosen as Gaussian and the time series is quantile transformed by the Gaussian quantile function. However, there are subtleties. For general model orders $p,q > 1$, a subset of a Gaussian-ARMA$(p,q+p-1)$ is recovered. Due to the structure of the process additional terms arise that cause the distorted MA-order.
\item[c)] We derive the copulas necessary to recover the ARCH and GARCH process. The copulas of a GARCH process are also investigated by \cite{dias2024garch}. The difference to our approach is that their approach captures the serial dependence employing the infinite partial dependence model by \cite{mcneil2022time}. We capture the dependence by an appropriate choice of the AR and MAG copulas. For the case where $q=0$, i.e. the ARCH-model, our results coincide with the one from \cite{dias2024garch} (albeit, we give a different derivation).
\item[d)] We derive the copula of consecutive observations from the \SP{model in Eq.~\ref{Eq:Model(p,q)_updating_Eq} with model} orders $(p,q) \in \{(1,1), (1,2),(2,1), (2,2)\}$.
\item[e)] With regards to estimation, we first discuss the (near) non-identifiability of the classic linear MA$(1)$ and our copula-based MAG$(1)$. We show that, similar to the classic MA$(1)$, the copula version also has two representations. The two representation can be obtained by permuting the order of the innovations in the updating equation. We explore potential identifiability problems.
\item[f)] We discuss conditions for the consistency of the maximum-likelihood estimators of the \SP{process from Eq.~\ref{Eq:Model(p,q)_updating_Eq},} following the approach by \cite{pappert2024moving}. The crucial step is to find conditions to establish the ergodicity of the score-time series based on the theory of stochastic recurrence equations, cf. \cite{douc2014nonlinear}. Then general results for M-estimators for dependent data can be used, cf. \cite{wooldridge1994estimation}. Currently the conditions are on a high-level and future research directions could be to translate these high-level conditions to conditions on the copula parameters.
\item[g)] Last, we model and forecast time series with different features using the \SP{model from Eq.~\ref{Eq:Model(p,q)_updating_Eq}.} These time series are i) US-inflation and ii) German wind-power production. The \texttt{R} code for estimation, simulation and forecasting is available upon request. The implementation is based on the \texttt{rvinecopulib} package \new{(see \cite{nagler2022rvinecopulib})}.
\end{itemize}
In remainder of this manuscript, we introduce the model in Sect.~\ref{Sec:Model}. We proceed to derive the relation to other models, theoretical properties and discuss the iterative estimation. In Sect.~\ref{Sec:Nums_Sims} we use simulations and numerical approximations to explore properties of the MAG$(1)$ and the \SP{model in Eq.~\ref{Eq:Model(p,q)_updating_Eq}.} In Sect.~\ref{Sec:Forecasting}, we use the model for probabilistic forecasting and in Sect.~\ref{Sec:Discussion} we discuss our findings.

\section{Copula-ARMA generalization}
\label{Sec:Model}
\SP{In Eq.~\ref{Eq:Model(p,q)_updating_Eq} the mappings $g$ and $h$ are conditional quantile functions corresponding to the AR- and MAG copulas, $C$ and $K$, respectively.} The copula $C$ is a D-vine copula where all copulas in a tree are set to be same copula. This condition is placed in order to assure stationarity. This condition may also be called \textit{translational invariance}, cf. \cite{nagler2022stationary}. We may call vine-copulas of this form \textit{stationary D-vine} or \textit{Toeplitz vine}. \SP{Due to the D-vine parametrization, the parameters governing the influences from different lags are algebraically independent.} For illustration, we show the contour plot of such vine-copula in Fig.~\ref{Fig:contour_sd_vine}. Note that in this D-vine copula-based approach to time series modeling, the conditional dependencies are modeled. The copula $K$ is set to be a D-vine copula with independent margins. \SP{This means that $K$ is the copula of $(U_t, \varepsilon_{t-1},\hdots,\varepsilon_{t-q+1}, W_{t-q})$ with $( \varepsilon_{t-1},\hdots,\varepsilon_{t-q+1}, W_{t-q})$ being independent. $K$ consists of $q$ bivariate copulas, which govern the dependence between $(U_t,\varepsilon_{t-1})$, $(U_t,\varepsilon_{t-2}|\varepsilon_{t-1})$ up to $(U_t, W_{t-q}|\varepsilon_{t-1},\hdots,\varepsilon_{t-q+1})$ with algebraically independent parameters.}
%
%i.e. for $K$ being $r$-dimensional, the margin $K_{2:r}$ is the independence copula $C^{\perp}$. This condition enables an easy calculation of the joint distribution of $(U_t, \varepsilon_{t-1},\hdots, \varepsilon_{t-q})$ (see below).
%
Such vine copulas may be called \SP{\textit{MAG D-vine}} or $q$\textit{-dependent D-vine}. For illustration, we show the contour of such a vine-copula in Fig.~\ref{Fig:contour_MA_vine}. For the special case where $K = C^{\perp}$, the time series simplifies to $U_t = \varepsilon_t$, i.e. a standard uniform $i.i.d.$ process. For $p =q = 1$ and $K = C^+$, we see that $U_t = W_{t-1} = C_{2|1}^{-1}(\varepsilon_{t-1}|W_{t-2}) = C_{2|1}^{-1}(\varepsilon_{t-1}|U_{t-1})$. This is the copula AR process with shifted innovations. The shift in the innovations does not affect the properties of the time series at all.
\begin{figure}[t]
\centering
\includegraphics[scale=0.65]{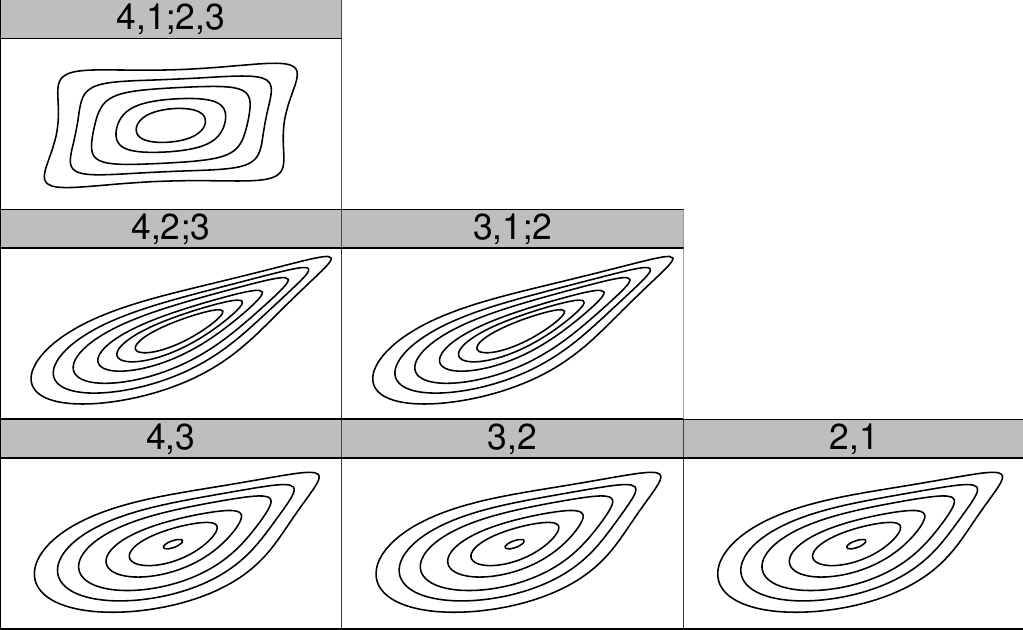}
\caption{Contour plots of a stationary D-vine corresponding to an autoregressive copula-based time series model of order $p = 3$.}
\label{Fig:contour_sd_vine}
\end{figure}
\begin{figure}[t]
\centering
\includegraphics[scale=0.65]{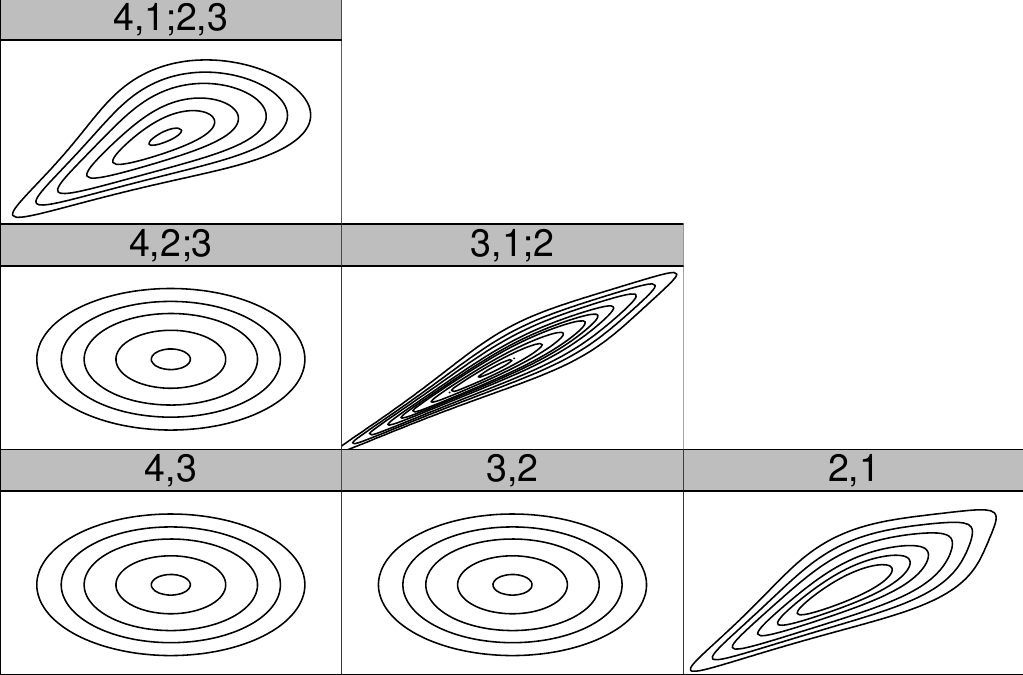}
\caption{Contour plots of a \SP{MAG D-vine} corresponding to a $q$-dependent copula-based time series model of order $q = 3$. Only dependencies involving variable $1$ are different from independence.}
\label{Fig:contour_MA_vine}
\end{figure}
\noindent
\SP{In this section, we first derive the relation between the model from Eq.~\ref{Eq:Model(p,q)_updating_Eq} and the linear Gaussian-ARMA and the Gaussian-GARCH$(1,1)$ model.
Then we proceed to investigate distributional properties such as stationarity, ergodicity, joint and conditional distributions, tail dependencies and measures of association. and last, we discuss the maximum likelihood estimation.}
\subsection{Relation to Other Models}
In this section, we derive the relation between the  \SP{model in Eq.~\ref{Eq:Model(p,q)_updating_Eq}} and popular time series models. In particular, we first show that for the AR and the MAG copula being the Gaussian copula, the transformed time series $\{\Phi^{-1}(U_t)\}_{t\in\mathbb{Z}}$ follows a Gaussian ARMA process. The relation between the parameters of the copula and the parameters of the emerging Gaussian ARMA is not trivial. Second, we show that we can recover the GARCH model with a suitable choice of copulas. We formalize the relation to the Gaussian ARMA process in the following proposition. We note that part a) is not new, but we include still for the sake of completeness.
\begin{prop}[Recovering Gaussian ARMA]
\label{prop:GaussianARMA}
\SP{Let $\{U_t\}_{t\in\mathbb{Z}}$ be defined through Eq.~\ref{Eq:Model(p,q)_updating_Eq} with latent process $W_t \sim U(0,1)$ and innovations $\varepsilon_t \stackrel{iid}{\sim} U(0,1)$. Let the copula corresponding to the AR-component be given as a $(p+1)$-dimensional stationary Gaussian D-vine copula, which has a bivariate Gaussian copula with parameter $\alpha_j$ for lag $j \in \{1,\hdots,p\}$. Let the copula corresponding to the MA-component be given as a $(q+1)$-dimensional Gaussian \SP{MAG D-vine,} which has a bivariate copula with parameter $\beta_j$ for lag $j \in \{1,\hdots,q\}$.
Furthermore, let the time series $\{Y_t\}_{t\in\mathbb{Z}}$ be given as $Y_t = \Phi^{-1}(U_t)$ and the time series $\{X_t\}_{t\in\mathbb{Z}}$ as $X_t = \Phi^{-1}(W_t)$. Then the following holds.}
\begin{itemize}
\item[a)] The time series $\{X_t\}_{t\in\mathbb{Z}}$ is a Gaussian AR$(p)$ process following the updating equation
\begin{align*}
X_t = \sum_{j = 1}^p \phi^{(p)}_j X_{t-j} + \xi_t,
\end{align*}
\SP{where $\xi_t = \sigma_{\alpha} \Phi^{-1}(\varepsilon_t) \stackrel{iid}{\sim} N(0, \prod_j(1-\alpha_j^2))$ with $ \sigma_{\alpha} = \prod_j(1-\alpha_j^2)^{\frac{1}{2}}$ being the standard deviation of $\xi_t$ so that $\text{Var}(X_t) = 1$ for all $t$. The parameters $\phi^{(p)}_j$ are defined recursively as,}
\begin{align}
    \phi^{(p)}_j =
    \begin{cases}
        \phi_j^{(p-1)} - \alpha_p \phi_{p-j}^{(p-1)}, & j \in \{1,\hdots,p-1\},
        \\
        \alpha_p, & j = p.
    \end{cases}
    \label{Eq:Gaussian_copula_AR(p)_parameters}
\end{align}
Compare Eq. 17 and 18 in \cite{mcneil2022time}, which also provide this result. We still include it for the sake of completeness.
\item[b)] If $p=0$, the process $\{Y_t\}_{t\in\mathbb{Z}} = \{\Phi^{-1}(U_t)\}_{t\in\mathbb{Z}}$ is a Gaussian MA$(q)$-process and follows the updating equation
\begin{align*}
Y_t = \eta_t + \sum_{i = 1}^{q} \theta_i^{(q)} \eta_{t-i},
\end{align*}
\SP{where $\eta_t = \sigma_{\beta} \Phi^{-1}(\varepsilon_t) \stackrel{iid}{\sim} N(0, \prod_{j = 1}^q (1 - \beta_j^2))$ with $\sigma_{\beta} = \prod_{j = 1}^q (1 - \beta_j^2)^{\frac{1}{2}}$, which is the standard deviation of $\eta_t$ such that $\text{Var}(Y_t) = 1$ for all $t$. The parameters $\theta_i^{(q)}$ are given as}
\begin{align}
\theta_i^{(q)} = \sigma_{\beta}^{-1} \left[ \prod_{j = 0}^{i-1} (1 - \beta_j^2)^{\frac{1}{2}}\right] \beta_i,
\label{Eq:Gaussian_copula_MAG(p)_parameters}
\end{align}
where we set $\beta_0 = 0$.
\item[c)] For arbitrary $p,q \in \mathbb{N}$, the time series $\{Y_t\}_{t\in\mathbb{Z}}$ follows the updating equation,
\begin{align}
Y_t = \eta_t + \sum_{j = 1}^p \phi_j^{(p)} Y_{t-j} + \sum_{j = 1}^{p+q-1} \Psi_{j}^{(p,q)} \eta_{t-j},
\label{Eq:Gaussian_copula_ARMA(p)_parameters}
\end{align}
where the parameters $\phi_j^{(p)}, \ i \in \{1,\hdots,p\}$ are given as Eq.~\ref{Eq:Gaussian_copula_AR(p)_parameters} and the parameters $\Psi_{j}^{(p,q)}, \ j \in \{1,\hdots,p+q-1\}$ are given as
\begin{align*}
\Psi^{(p,q)}_j =& \tilde{\theta}_j^{(q)} - \tilde{\phi}^{(p)}_j - \zeta_j^{(p,q)},
\\
\tilde{\theta}_j^{(q)} =& \theta_j^{(q)} \times \mathds{1}\{j \leq q \land q > 1\} + \theta_q^{(q)} \sigma_{\alpha} \times \mathds{1}\{j = q\},
\\
\tilde{\phi}_j^{(p)} =& \phi_j^{(p)} \times \mathds{1}\{j \leq p\},
\\
\zeta_j^{(p,q)} =& \left(\sum_{(r,s) \in S_j^{(p,q)}} \phi^{(p)}_r \theta_s^{(q)} \right) \SP{\times \mathds{1}\{q > 1\}},
\\
S_j^{(p,q)} =& \{(r,s) \in \mathbb{N}^2: r \leq p, s \leq q, r+s = j\},
\end{align*}
with $\theta_j^{(q)}$ as in Eq.~\ref{Eq:Gaussian_copula_MAG(p)_parameters}.
\end{itemize}
\end{prop}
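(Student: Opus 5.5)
The argument works entirely on the Gaussian scale. Every conditional quantile function in a Gaussian D-vine is built by composing the bivariate map $A_{2|1}^{-1}(p|v)=\Phi(\sqrt{1-\alpha^2}\Phi^{-1}(p)+\alpha\Phi^{-1}(v))$ from the introduction. After applying $\Phi^{-1}$, the composition collapses to an affine recursion in the normal scores. The plan is therefore to write $g$ and $h$ explicitly as affine maps of $\Phi^{-1}$ of their arguments, and then read off coefficients.

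For a), I would proceed by induction on $p$. In a stationary Gaussian D-vine, the conditional distribution of $X_t$ given $X_{t-1},\dots,X_{t-p}$ is obtained by peeling trees. The last tree couples $X_t$ and $X_{t-p}$ given the intermediate lags through the partial correlation $\alpha_p$. Its arguments are the standardized forward and backward prediction residuals of order $p-1$. By stationarity, the backward coefficients equal the forward ones reversed. Plugging the residuals into the affine form of $A_{2|1}^{-1}$ yields the Durbin–Levinson recursion $\phi_j^{(p)}=\phi_j^{(p-1)}-\alpha_p\phi_{p-j}^{(p-1)}$ and $\phi_p^{(p)}=\alpha_p$. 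The same step shows that the innovation variance multiplies by $(1-\alpha_p^2)$, which gives $\sigma_\alpha^2=\prod_j(1-\alpha_j^2)$. Since this is exactly Eq.~17–18 of \cite{mcneil2022time}, I would cite it and only sketch the induction.

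For b), the situation is simpler because the conditioning variables in the MAG D-vine are independent. Conditioning an independent variable on other independent variables changes nothing, so every h-function argument in tree $k$ equals the raw variable $\varepsilon_{t-k}$ (or $W_{t-q}$ for $k=q$). The only variable that actually gets transformed is $U_t$'s own chain. Inverting tree by tree, from tree $q$ down to tree 1, gives
\begin{align*}
\Phi^{-1}(U_t)=\beta_1 z_{t-1}+\sqrt{1-\beta_1^2}\Bigl(\beta_2 z_{t-2}+\sqrt{1-\beta_2^2}\bigl(\cdots(\beta_q z_{t-q}+\sqrt{1-\beta_q^2}\,z_t)\cdots\bigr)\Bigr),
\end{align*}
where $z_s=\Phi^{-1}(\varepsilon_s)$ and $z_{t-q}$ is replaced by $X_{t-q}$ when $p\ge1$. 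Expanding this, the coefficient of $z_t$ is $\sigma_\beta$, and the coefficient of $z_{t-i}$ is $\prod_{j<i}(1-\beta_j^2)^{1/2}\beta_i$. With $p=0$ we have $W_{t-q}=\varepsilon_{t-q}$, so dividing by $\sigma_\beta$ after setting $\eta=\sigma_\beta z$ gives \eqref{Eq:Gaussian_copula_MAG(p)_parameters}. In particular, the Gaussian quantile transformation gives unit variance automatically.

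For c), the representation above gives
\begin{align*}
Y_t=\eta_t+\sum_{i=1}^{q-1}\theta_i^{(q)}\eta_{t-i}+\sigma_\beta\theta_q^{(q)}X_{t-q}.
\end{align*}
I would apply the AR operator $\phi(B)=1-\sum_r\phi_r^{(p)}B^r$ to both sides. Because $\phi(B)X_{t-q}=\sigma_\alpha z_{t-q}=\sigma_\alpha\sigma_\beta^{-1}\eta_{t-q}$, the last term becomes $\theta_q^{(q)}\sigma_\alpha\eta_{t-q}$, which is the extra indicator term in $\tilde\theta$. The remaining work is to move $\sum_r\phi_rY_{t-r}$ to the right-hand side and collect the coefficients of $\eta_{t-j}$ from the product $\phi(B)\bigl(1+\sum_{i<q}\theta_iB^i\bigr)$. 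That collection gives the convolution terms $\zeta_j$ over $S_j$, the subtracted $\phi_j$ coming from the constant term, and maximal lag $p+q-1$.

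I expect the main obstacle to be this bookkeeping in the last step. The paper's sign conventions for $\tilde\phi$ and $\zeta$ have to be reconciled with the fact that $Y_{t-j}$ itself contains $\eta_{t-j}$ and earlier $\eta$'s. Concretely, writing $Y_t-\sum_r\phi_rY_{t-r}$ rather than $X$-terms introduces $-\phi_r\eta_{t-r}$ and $-\phi_r\theta_s\eta_{t-r-s}$ terms. I also need to make sure that the $s=q$ summand is excluded from $S_j$ where it is already absorbed, and to handle the edge case $q=1$, where the indicators switch off. I would verify the formula first for $(p,q)=(1,1)$ and $(1,2)$ by hand, and then present the general coefficient matching.
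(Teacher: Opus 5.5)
Your route is essentially the paper's. You write $g$ and $h$ as affine maps in normal scores and obtain
\[
Y_t=\eta_t+\sum_{i=1}^{q-1}\theta_i^{(q)}\eta_{t-i}+\sigma_\beta\theta_q^{(q)}X_{t-q}.
\]
You then eliminate the latent $X_{t-q}$ through the AR recursion. The paper does this by substituting $\sigma_\beta\theta_q^{(q)}X_{t-q-j}=Y_{t-j}-\eta_{t-j}-\sum_{i=1}^{q-1}\theta_i^{(q)}\eta_{t-j-i}$, which is the same as your application of $\phi(B)$. Parts a) and b) go through as you describe them. Your nested expression for $\Phi^{-1}(U_t)$ has exactly the coefficients $\sigma_\beta$ and $\beta_i\prod_{j<i}(1-\beta_j^2)^{1/2}$ that the paper gets by inverting the composed $h$-functions.

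The gap is the step you defer as bookkeeping in c). It cannot be reconciled with the displayed coefficients, because for $q\ge 2$ those coefficients are wrong. Carrying out your collection correctly gives
\begin{align*}
\Psi_j^{(p,q)} = \theta_j^{(q)}\mathds{1}\{j\le q-1\} + \theta_q^{(q)}\sigma_\alpha\mathds{1}\{j=q\} - \phi_j^{(p)}\mathds{1}\{j\le p\} - \sum_{r\le p,\ s\le q-1,\ r+s=j}\phi_r^{(p)}\theta_s^{(q)} .
\end{align*}
The statement differs from this in two places:
\begin{itemize}
\item[(i)] It uses $\mathds{1}\{j\le q\land q>1\}$ in $\tilde\theta_j^{(q)}$, so at $j=q$ it counts $\theta_q^{(q)}(1+\sigma_\alpha)$.
\item[(ii)] It allows $s=q$ in $S_j^{(p,q)}$, which adds spurious terms $\phi_r^{(p)}\theta_q^{(q)}$ at lags $q+r$ for $r\le p-1$.
\end{itemize}

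A one-line check shows the statement is false. Take all $\alpha_j=0$. Then $W_t=\varepsilon_t$, $\sigma_\alpha=1$ and $\phi^{(p)}\equiv 0$, so the process is the MAG$(q)$ of part b) and its lag-$q$ coefficient must be $\theta_q^{(q)}$. The displayed formula instead gives $2\theta_q^{(q)}$. Your own planned check at $(p,q)=(1,2)$ shows the same thing: you get $\Psi_2=\theta_2^{(2)}\sigma_\alpha-\alpha_1\theta_1^{(2)}$, not $\theta_2^{(2)}(1+\sigma_\alpha)-\alpha_1\theta_1^{(2)}$.

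You sensed half of this with the $s=q$ terms, but they are not ``absorbed'' anywhere. $\phi(B)$ removes every contribution of $X_{t-q}$ except $\theta_q^{(q)}\sigma_\alpha\eta_{t-q}$, and $\theta_q^{(q)}$ never appears on its own. For $q=1$ the indicators make the displayed formula coincide with the correct one.

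The paper's proof has the correct penultimate line, with $\sum_{i=1}^{q-1}$ in both the MA part and the double sum. It slips only when transcribing that line into $\tilde\theta_j^{(q)}$ and $S_j^{(p,q)}$. You should prove and state the corrected version with $j\le q-1$ and $s\le q-1$, which also makes the $\mathds{1}\{q>1\}$ factors redundant.
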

\begin{proof}
The proof is given in ~\ref{App:Proof:GaussianARMA}.
\end{proof}
Now we give some examples where we calculate the coefficients from Eqs.~\ref{Eq:Gaussian_copula_AR(p)_parameters}-\ref{Eq:Gaussian_copula_ARMA(p)_parameters} explicitly and do sanity checks by showing that $\text{Var}(Y_t) = 1$ for the respective time series.
\begin{example}[Gaussian-MAG$(1)$]
\label{Example:Gaussian_MA(1)}
\SP{Consider $Y_t = \Phi^{-1}(V_t) = \Phi^{-1}(K_{2|1}^{-1}(\varepsilon_t| \varepsilon_{t-1}))$, where $K_{21}$, which is the copula of $(V_t,\varepsilon_{t-1})$,} is a Gaussian copula with parameter $\beta_1$ and $\varepsilon_t \stackrel{iid}{\sim} U(0,1)$. \SP{Then $q=1$ and $\{Y_t\}_{t\in\mathbb{Z}}$ follows the updating equation $Y_t = \theta_1^{(1)} \eta_{t-1} + \eta_t$. We have $\sqrt{\text{Var}(\eta_t)} = \sigma_{\beta} = (1-\beta_1^2)^{\frac{1}{2}}$ and}
\begin{align*}
\theta_1^{(1)} =  (1-\beta_1^2)^{-\frac{1}{2}} [1-\beta^2_0]^{\frac{1}{2}} \beta_1 = \frac{\beta_1}{(1-\beta_1^2)^{\frac{1}{2}}}.
\end{align*}
%\begin{align}
%V_t =& \Phi\Big((1 - \beta^2)^{\frac{1}{2}}\Phi^{-1}(w_t) + \beta \Phi^{-1}(w_{t-1})\Big)
\\
%Y_t =& \varepsilon_t + \frac{\beta}{(1 - \beta^2)^{\frac{1}{2}}} \varepsilon_{t-1}, \quad \varepsilon_t \stackrel{iid}{\sim} N(0, (1-\beta^2))
%\end{align}
As a sanity check, we test whether it holds true that $\text{Var}(Y_t) = 1$. In fact
\begin{align*}
\text{Var}(Y_t) \SP{= \text{Var}(\eta_t) + \left(\theta_1^{(1)}\right)^2 \text{Var}(\eta_{t-1})} = (1-\beta_1^2) + \frac{\beta_1^2(1-\beta_1^2)}{1 - \beta_1^2} = 1.
\end{align*}
This verifies that the process $V_t$ is in fact $U(0,1)$ and $Y_t \sim N(0,1)$.
\end{example}
\begin{example}[Gaussian-MAG$(2)$]
\SP{Consider $Y_t = \Phi^{-1}(V_t) = \Phi^{-1}(K_{3|21}^{-1}(\varepsilon_t|\varepsilon_{t-1}, \varepsilon_{t-2}))$ where $\varepsilon_t \stackrel{iid}{\sim} U(0,1)$ and $K_{321}$ is a $3$-dimensional Gaussian MAG D-vine copula with parameters $\beta_1$ and $\beta_2$. Then the random vector $(V_t,\varepsilon_{t-1})$ has a Gaussian copula with parameter $\beta_1$, the (conditional) random vector $(V_t, \varepsilon_{t-2}|\varepsilon_{t-1})$ has a Gaussian copula with parameter $\beta_2$ and $(\varepsilon_{t-1}, \varepsilon_{t-2})$ are independent. We have $q = 2$ and $\{Y_t\}_{t\in\mathbb{Z}}$ follows the updating equation $Y_t = \theta_{1}^{(2)} \eta_{t-1} + \theta_{2}^{(2)} \eta_{t-2} + \eta_t$ where} $\sqrt{\text{Var}}(\eta_t) = \sigma_{\beta} = (1-\beta_1^2)^{\frac{1}{2}}(1-\beta_2^2)^{\frac{1}{2}}$. Furthermore
\begin{align*}
\theta_1^{(2)} =& (1-\beta_1^2)^{-\frac{1}{2}}(1-\beta_2^2)^{-\frac{1}{2}} \beta_1 = \frac{\beta_1}{(1-\beta_1^2)^{\frac{1}{2}}(1-\beta_2^2)^{\frac{1}{2}}}\quad \text{and}
\\
\theta_2^{(2)} =& (1-\beta_1^2)^{-\frac{1}{2}}(1-\beta_2^2)^{-\frac{1}{2}} [1-\beta_1^2] \beta_2 = \frac{\beta_2}{(1-\beta_2^2)^{\frac{1}{2}}}.
\end{align*}
Then $Y_t$ may be represented as
\begin{align*}
Y_t = \eta_t + \frac{\beta_1}{(1-\beta^1)^{\frac{1}{2}}(1-\beta^2)^{\frac{1}{2}}} \eta_{t-1} + \frac{\beta_2}{(1 - \beta_2^2)^{\frac{1}{2}}} \eta_{t-2}
\end{align*}
We again check whether $\text{Var}(Y_t) = 1$,
\begin{align*}
\text{Var}(Y_t) = (1-\beta_1^2)(1-\beta_2^2) + \beta_1^2 + \beta_2^2(1-\beta_1^2) = 1.
\end{align*}
\end{example}
\begin{example}[Gaussian copulas and $p=1$, $q=1$ in Eq.~\ref{Eq:Model(p,q)_updating_Eq}]
Let \SP{$\{U_t\}_{t\in\mathbb{Z}}$ follow the model in Eq.~\ref{Eq:Model(p,q)_updating_Eq} with $p = q = 1$. Let the copula corresponding to the AR component be given as bivariate Gaussian with parameter $\alpha$ and let the copula corresponding to the MA component be given as bivariate Gaussian with parameter $\beta$.} Then $p+q-1 = 1$ and $Y_t = \Phi^{-1}(U_t)$ follows the updating equation
\begin{align*}
Y_t = \phi^{(1)}_1 Y_{t-1} + \Psi^{(1,1)}_1 \eta_{t-1} + \eta_{t},
\end{align*}
where \SP{$\sqrt{\text{Var}(\eta_t)} = (1-\beta^2)^{\frac{1}{2}}$ and} the AR parameter is given as $\phi^{(1)}_1 = \alpha$. The MA parameter is comprised of $\tilde{\theta}_1^{(1)} = \theta_1^{(1)} \sigma_{\alpha} = \frac{\beta (1-\alpha^2)^{\frac{1}{2}}}{(1-\beta^2)^{\frac{1}{2}}} $ and $\zeta_1^{(1,1)} = 0$ as well as $\tilde{\phi}_{1}^{(1)} = \phi^{(1)}_1 = \alpha$. Consequently $\Psi^{(1,1)}_1 = \frac{\beta (1-\alpha^2)^{\frac{1}{2}}}{(1-\beta^2)^{\frac{1}{2}}}  - \alpha$.
We again check whether it holds that $\text{Var}(Y_t) = 1$. To this end we note that the variance of an ARMA$(1,1)$ process, say $Z_t$, with parameters $(\phi,\theta,\sigma)$ is given as
\begin{align*}
\text{Var}(Z_t) = \frac{1 + 2\phi\theta + \theta^2}{1 - \phi^2} \sigma^2.
\end{align*}
Then it holds that
\begin{align*}
    \text{Var}(Y_t) =& \frac{(1 + 2 \phi_1^{(1)} \Psi_1^{(1,1)} + [\Psi_1^{(1,1)}]^2)}{1 - [\phi_1^{(1)}]^2} \sigma^2
    \\
    =& \frac{1 + 2\alpha \left[\frac{(1-\alpha^2)^{\frac{1}{2}}\beta}{(1-\beta^2)^{\frac{1}{2}}} - \alpha\right] + \left[\frac{(1-\alpha^2)^{\frac{1}{2}}\beta}{(1-\beta^2)^{\frac{1}{2}}} - \alpha\right]^2}{1 - \alpha^2} \sigma^2
    \\
    =& \frac{1 + (-2\alpha^2) + 2\alpha\frac{(1-\alpha^2)^{\frac{1}{2}}\beta}{(1-\beta^2)^{\frac{1}{2}}} + \frac{(1-\alpha^2)\beta^2}{(1-\beta^2)} - 2\alpha\frac{(1-\alpha^2)^{\frac{1}{2}}\beta}{(1-\beta^2)^{\frac{1}{2}}} + \alpha^2}{1-\alpha^2}\sigma^2
    \\
    =& \frac{1 -\alpha^2 + \frac{(1-\alpha^2)\beta^2}{(1-\beta^2)}}{1 - \alpha^2}\sigma^2
    \\
    =& (1 + \frac{\beta^2}{1 - \beta^2})\sigma^2
    \\
    =& \frac{1}{1-\beta^2}\sigma^2 = 1
\end{align*}
\end{example}
\begin{example}[Gaussian copulas and $p=2$, $q=1$ in Eq.~\ref{Eq:Model(p,q)_updating_Eq}]
Let \SP{$\{U_t\}_{t\in\mathbb{Z}}$ follow the model in Eq.~\ref{Eq:Model(p,q)_updating_Eq} with $p = 2$ and $q = 1$. Let the copula corresponding to the AR component, $C_{321}$, be given as Gaussian with parameters $(\alpha_1, \alpha_2)$ and let the copula corresponding to the MA component be given as bivariate Gaussian with parameter $\beta$.} Then we have $p+q-1 = 2$ \SP{and $\{Y_t\}_{t\in\mathbb{Z}} = \{\Phi^{-1}(U_t)\}_{t\in\mathbb{Z}}$} has updating equation
\begin{align*}
Y_t = \phi^{(2)}_1 Y_{t-1} + \phi^{(2)}_2 Y_{t-2} + \Psi_1^{(2,1)} \eta_{t-1} + \Psi_2^{(2,1)} \eta_{t-2} + \eta_t,
\end{align*}
where the AR parameters are given as $\phi^{(2)}_1 = \alpha_1 - \alpha_2 \alpha_1 = \alpha_1(1-\alpha_2)$ and $\phi^{(2)}_2 = \alpha_2$. For $\Psi_1^{(2,1)}$ we we see that $\tilde{\phi}_1^{(2)} = \phi_1^{(2)} = \alpha_1(1-\alpha_2)$ and $\tilde{\theta}_1^{(1)} = \theta_1^{(1)} \sigma_{\alpha} = (1-\beta_1^2)^{-\frac{1}{2}}\beta_1 (1-\alpha_1^2)^{\frac{1}{2}}(1-\alpha_2^2)^{\frac{1}{2}}$. Furthermore $\mathds{1}\{q>1\} = 0$ and hence $\zeta_{1/2}^{(2,1)} = 0$. Now for $\Psi_2^{(2,1)}$ we have $\tilde{\phi}^{(2)}_2 = \phi^{(2)}_2 = \alpha_2$ and $\tilde{\theta}^{(1)}_2 = 0$. In total we have
\begin{align*}
Y_t = [\alpha_1(1-\alpha_2)] Y_{t-1} + \alpha_2 Y_{t-2} + \left[ \frac{\beta_1 (1-\alpha_1^2)^{\frac{1}{2}} (1-\alpha_2^2)^{\frac{1}{2}}}{(1-\beta_1^2)^{\frac{1}{2}}} - \alpha_1(1-\alpha_2) \right] \eta_{t-1} + [-\alpha_2]\eta_{t-2} + \eta_t.
\end{align*}
%The variance being one is confirmed per simulation.
\end{example}
Now we proceed to derive copulas for which the \SP{model in Eq.~\ref{Eq:Model(p,q)_updating_Eq}} can recover ARCH and GARCH like processes. Specifically for the ARCH process we obtain a similar result to the one from \cite{dias2024garch}. However for the GARCH process, where \cite{dias2024garch} use the IPD approch by \cite{mcneil2022time}, we use the \SP{model from Eq.~\ref{Eq:Model(p,q)_updating_Eq},} which is fundamentally different.
\begin{prop}[Recovering ARCH$(1)$, cf. \cite{dias2024garch}]
\label{prop:ARCH(1)}
Let $\{W_t\}_{t\in\mathbb{Z}}$ be a copula-AR model with updating equation $W_t = C_{2|1}^{-1}(\varepsilon_t, W_{t-1})$ with $\varepsilon_t \stackrel{iid}{\sim} U(0,1)$. Let $\Psi$ be the unconditional marginal distribution associated to a stationary Gaussian ARCH$(1)$ process. Then it holds that $\Psi^{-1}(W_t)$ follows a Gaussian ARCH$(1)$ process if
\begin{align*}
C_{21}(u_1, u_2) = \int_{0}^{u_2} \Phi\left[ \frac{\Psi^{-1}(u_1)}{\sqrt{\alpha_0 + \alpha_1 (\Psi^{-1}(u_2'))^2}} \right] du_2',
\end{align*}
where $(\alpha_0, \alpha_1)$ are the parameters of the stationary Gaussian ARCH$(1)$ process.
\end{prop}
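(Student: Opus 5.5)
The plan is to show that, under the stated copula, the transformed latent process $Z_t := \Psi^{-1}(W_t)$ has exactly the ARCH$(1)$ Markov transition kernel and the ARCH$(1)$ stationary marginal $\Psi$. Since both the copula-AR process and the stationary Gaussian ARCH$(1)$ process are stationary first-order Markov chains, equal marginals and equal transition kernels give equal finite-dimensional distributions. The ARCH$(1)$ process $X_t = \sigma_t Z^\ast_t$ with $\sigma_t^2 = \alpha_0 + \alpha_1 X_{t-1}^2$ and $Z^\ast_t \stackrel{iid}{\sim} N(0,1)$ has conditional CDF
\begin{align*}
P(X_t \le x \mid X_{t-1} = y) = \Phi\left( \frac{x}{\sqrt{\alpha_0 + \alpha_1 y^2}} \right).
\end{align*}
Its stationary marginal $\Psi$ is continuous and strictly increasing (it is a scale mixture of normals). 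Hence $(\Psi(X_t), \Psi(X_{t-1}))$ has uniform margins, and its copula, say $C^\ast$, satisfies
\begin{align*}
C^\ast_{2|1}(u_1 \mid u_2) = \Phi\left[ \frac{\Psi^{-1}(u_1)}{\sqrt{\alpha_0 + \alpha_1 (\Psi^{-1}(u_2))^2}} \right],
\end{align*}
where $u_2$ is the conditioning (lag) argument.

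The first step is to note that the displayed $C_{21}$ is exactly the integral of this conditional CDF over the conditioning variable. By the standard identity $C(u_1,u_2) = \int_0^{u_2} C_{2|1}(u_1 \mid u_2')\,du_2'$, valid for any copula, this gives $C_{21} = C^\ast$. One must also confirm that $C_{21}$ is a genuine copula. It is, because it is the joint CDF of $(\Psi(X_t),\Psi(X_{t-1}))$ under stationarity. The uniform-margin property $\int_0^1 \Phi[\Psi^{-1}(u_1)/\sigma(u_2')]\,du_2' = u_1$ is precisely the stationarity equation for $\Psi$. I also need to be careful about argument ordering: in this formula the first argument is the current value and the second is the lag, consistent with $W_t = C_{2|1}^{-1}(\varepsilon_t \mid W_{t-1})$.

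The second step treats the copula-AR updating equation. Since $\varepsilon_t$ is uniform and independent of $W_{t-1}$, we get $P(W_t \le u_1 \mid W_{t-1} = u_2) = C_{2|1}(u_1 \mid u_2)$. Applying the increasing map $\Psi^{-1}$ then yields
\begin{align*}
P(Z_t \le x \mid Z_{t-1} = y) = \Phi\left( \frac{x}{\sqrt{\alpha_0 + \alpha_1 y^2}} \right).
\end{align*}
Equivalently, setting $Z^\ast_t := \Phi^{-1}(\varepsilon_t)$ gives the explicit representation $Z_t = \sqrt{\alpha_0 + \alpha_1 Z_{t-1}^2}\, Z^\ast_t$ with i.i.d. $N(0,1)$ innovations. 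This is the ARCH$(1)$ recursion, so $\{Z_t\}$ is ARCH$(1)$ and $W_t \sim U(0,1)$ gives $Z_t \sim \Psi$.

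The main obstacle is rigor about the stationary distribution $\Psi$. It requires $\alpha_1$ to lie in the range ensuring a unique strictly stationary solution (e.g. $E\log(\alpha_1 Z^{\ast 2}) < 0$). It also requires $\Psi$ to be continuous and strictly increasing, so that $\Psi^{-1}$ is a bijection and the conditional-CDF computation is exact. I would assume these conditions as part of the hypothesis "stationary Gaussian ARCH$(1)$" and cite uniqueness to conclude that the stationary copula-AR solution coincides in law with the ARCH process.
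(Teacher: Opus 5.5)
Your proposal is correct and follows essentially the same route as the paper, which applies the PIT to the ARCH$(1)$ recursion to get $U_t=\Psi[\sqrt{\alpha_0+\alpha_1\Psi^{-1}(U_{t-1})^2}\,\Phi^{-1}(\varepsilon_t)]$, reads this map as $C_{2|1}^{-1}$, inverts it to the conditional CDF, and integrates over the lag argument. You use the same identity, but read in the direction the proposition actually states (copula $\Rightarrow$ ARCH), and you add checks the paper leaves implicit: that $C_{21}$ is a genuine copula, and that $\Psi$ is continuous and strictly increasing under the strict-stationarity condition.
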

\begin{proof}
A proof is given in Chapter 3.2 of \cite{dias2024garch}. In Appendix~\ref{App:Proof:ARCH(1)} we give an alternative proof.
\end{proof}
\begin{prop}[Recovering GARCH$(1,1)$]
\label{prop:GARCH(1,1)}
Let $\{U_t\}_{t\in\mathbb{Z}}$ \SP{follow the model in Eq.~\ref{Eq:Model(p,q)_updating_Eq} with $p=q=1$, i.e.} $U_t = K_{2|1}^{-1}(\varepsilon_t, W_{t-1})$, where $W_t = C_{2|1}^{-1}(\varepsilon_t, W_{t-1})$ and $\varepsilon_t \stackrel{iid}{\sim} U(0,1)$. Let $\Psi$ be the unconditional marginal distribution associated to a stationary Gaussian-GARCH$(1,1)$ process. Then it holds that $\Psi^{-1}(U_t)$ follows a Gaussian-GARCH$(1,1)$ process if
\begin{align*}
C_{21}(u_1, u_2) =& \int_{0}^{u_2} \Phi\left[ \sqrt{\frac{\Gamma^{-1}(u_1)^2}{\alpha_0 \alpha_1 + \alpha_1 \Gamma^{-1}(u_2')^2} - \frac{\beta}{\alpha_1}} \right] du_2',
\\
K_{21}(u_1, u_2) =& \int_0^{u_2} \Phi\left[ \frac{\Psi^{-1}(u_1)}{\sqrt{\alpha_0 + \Gamma^{-1}(u_2')^2}} \right] du_2'
\end{align*}
where $(\alpha_0, \alpha_1, \beta_1)$ are the parameters of the stationary Gaussian-GARCH$(1,1)$ process and $\Gamma$ is the CDF of the modified-ARCH process defined in the proof.
\end{prop}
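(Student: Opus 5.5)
The plan is to construct the latent process explicitly from the GARCH recursion. I then check that the two conditional quantile functions $g=C_{2|1}^{-1}$ and $h=K_{2|1}^{-1}$ reproduce it when driven by the \emph{same} innovation $\varepsilon_t$, as in Eq.~\ref{Eq:Model(p,q)_updating_Eq}. The argument mirrors the alternative proof of Proposition~\ref{prop:ARCH(1)} given in Appendix~\ref{App:Proof:ARCH(1)}.

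\textbf{Step 1: the modified-ARCH process.} Write the stationary Gaussian-GARCH$(1,1)$ as $Y_t=\sigma_t Z_t$ with $Z_t=\Phi^{-1}(\varepsilon_t)$ and $\sigma_t^2=\alpha_0+\alpha_1Y_{t-1}^2+\beta\sigma_{t-1}^2$. Define the latent quantity
\begin{align*}
X_t=\operatorname{sign}(Z_t)\sqrt{\alpha_1Y_t^2+\beta\sigma_t^2}=\operatorname{sign}(Z_t)\,\sigma_t\sqrt{\alpha_1Z_t^2+\beta}.
\end{align*}
This is $\mathcal F_t$-measurable, and $\sigma_{t+1}^2=\alpha_0+X_t^2$. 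Substituting $\sigma_t^2=\alpha_0+X_{t-1}^2$ gives the first-order recursion
\begin{align*}
X_t=\operatorname{sign}(Z_t)\sqrt{(\alpha_0+X_{t-1}^2)(\alpha_1Z_t^2+\beta)}.
\end{align*}
So $\{X_t\}$ is a Markov chain driven by $\varepsilon_t$; this is the ``modified-ARCH'' process. Under the GARCH stationarity condition it is strictly stationary, because it is a measurable function of the stationary solution $(Y_t,\sigma_t)$. Let $\Gamma$ be its (continuous) marginal CDF and set $W_t=\Gamma(X_t)\sim U(0,1)$.

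\textbf{Step 2: identify $C$.} Compute $P(X_t\le x\mid X_{t-1}=x')$ for $x>0$. Since $X_t$ is strictly increasing in $Z_t$, this equals $P\bigl(Z_t\le\sqrt{x^2/(\alpha_1(\alpha_0+x'^2))-\beta/\alpha_1}\bigr)$ on the region where the square root is real. It is $1/2$ on the gap $|x|\le\sqrt{\beta(\alpha_0+x'^2)}$, and the negative side follows by symmetry. Transforming by $\Gamma$ gives $C_{2|1}(u_1\mid u_2)$ as the integrand in the statement, and integrating over $u_2'$ yields $C_{21}$. Because $X_t$ is monotone in $\varepsilon_t$, $W_t=C_{2|1}^{-1}(\varepsilon_t\mid W_{t-1})$ holds exactly, not merely in distribution.

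\textbf{Step 3: identify $K$.} Given $W_{t-1}$, that is $X_{t-1}=\Gamma^{-1}(W_{t-1})$, we have $Y_t=\sqrt{\alpha_0+X_{t-1}^2}\,\Phi^{-1}(\varepsilon_t)$, which is increasing in $\varepsilon_t$. Its conditional CDF is therefore $\Phi\bigl(y/\sqrt{\alpha_0+\Gamma^{-1}(u_2)^2}\bigr)$, and with $y=\Psi^{-1}(u_1)$ this is the integrand of $K_{21}$. Hence $U_t:=\Psi(Y_t)=K_{2|1}^{-1}(\varepsilon_t\mid W_{t-1})$, so $\Psi^{-1}(U_t)=Y_t$ is GARCH$(1,1)$.

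\textbf{Main obstacle.} Steps 2 and 3 are routine. The delicate part is showing that $C_{21}$ and $K_{21}$ are genuine copulas: the integrands must be valid conditional CDFs, which is guaranteed by construction, and their integrals must have uniform margins, which uses stationarity, i.e.\ that $\Gamma$ is also the law of $X_{t-1}$. A further subtlety is the sign convention and the flat piece of $C_{2|1}$ at level $1/2$. I would handle these by defining $C_{2|1}$ piecewise and noting that the square root is interpreted as $0$ on the gap, or signed, so that the generalized inverse still returns the correct $X_t$ almost surely.
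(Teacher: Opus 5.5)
Your skeleton matches the paper's. You build a latent state from the GARCH recursion, apply the PIT $\Gamma$, and read off $C_{2|1}^{-1}$ and $K_{2|1}^{-1}$ as the maps $(\varepsilon_t,W_{t-1})\mapsto W_t$ and $(\varepsilon_t,W_{t-1})\mapsto U_t$. The genuine difference is the factor $\operatorname{sign}(Z_t)$, and it sits exactly where the paper's argument fails.

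The paper uses the unsigned state $X_t=\sqrt{(\alpha_0+X_{t-1}^2)(\alpha_1\eta_t^2+\beta_1)}$. This depends on $\varepsilon_t$ only through $\Phi^{-1}(\varepsilon_t)^2$, so the map $\varepsilon_t\mapsto\Gamma(X_t)$ is symmetric about $\tfrac12$ and not non-decreasing. It is therefore not a conditional quantile function, and the inverted $C_{2|1}(u\mid v)=\Phi[\sqrt{\cdot}\,]$ that the paper reports never drops below $\tfrac12$. Switching to the true conditional CDF $2\Phi(\sqrt{\cdot}\,)-1$ of the unsigned state does not help. Its quantile function feeds in $\Phi^{-1}((1+\varepsilon_t)/2)$ rather than the $|\Phi^{-1}(\varepsilon_t)|$ that enters $Y_t$, so the GARCH recursion is lost.

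Your signed state is strictly increasing in $Z_t$ given $X_{t-1}$. Hence $W_t$ and $U_t$ are comonotone functions of the common innovation, which is exactly what Eq.~\ref{Eq:Model(p,q)_updating_Eq} imposes. Because $X_t^2$ is unchanged, $\sigma_{t+1}^2=\alpha_0+X_t^2$ still holds and the stated form of $K_{21}$ survives verbatim. Your indexing also removes the paper's $t$ versus $t-1$ slip in defining $X_t$.

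The price is that you prove an amended statement, and you should say so explicitly rather than calling it an interpretation. You write that transforming by $\Gamma$ ``gives $C_{2|1}(u_1\mid u_2)$ as the integrand in the statement''. That is true only for $u_1>\tfrac12$. With $\Gamma$ the symmetric law of the signed state, symmetry yields $\Phi(-\sqrt{\cdot}\,)$ for $u_1<\tfrac12$, and on the gap the radicand is negative. What your argument establishes is the proposition with $K_{21}$ as stated and with integrand
\[
\Phi\Bigl(\operatorname{sign}\bigl(\Gamma^{-1}(u_1)\bigr)\sqrt{\max\Bigl\{\frac{\Gamma^{-1}(u_1)^2}{\alpha_0\alpha_1+\alpha_1\Gamma^{-1}(u_2')^2}-\frac{\beta}{\alpha_1},\,0\Bigr\}}\Bigr)
\]
in $C_{21}$. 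The literal formula cannot be rescued by any choice of $\Gamma$: an integrand bounded below by $\tfrac12$ violates $\int_0^1 C_{2|1}(u_1\mid v)\,dv=u_1$ for $u_1<\tfrac12$. Once you make that correction explicit, your remaining checks go through. These are continuity of $\Gamma$, almost-sure equality through the flat piece at level $\tfrac12$, and uniform margins from stationarity.
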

\begin{proof}
The proof is given in \ref{App:Proof:GARCH(1,1)}.
\end{proof}
\begin{remarks}
\label{Remarks:GARCH(1,1)}
Remarks on recovering GARCH$(1,1)$
\begin{itemize}
\item[a)] The proposition regards recovering the Gaussian-GARCH$(1,1)$ process. The conditional distribution is assumed to be Gaussian in this case. The generalization to other conditional distributions (e.g. $t$ or JSU) is straight-forward by replacing $\Phi$ with a suitable distribution.
\item[b)] The proposition assures that a standard uniform process with GARCH$(1,1)$-dynamics can be constructed. This allows for processes with arbitrary stationary distributions and GARCH$(1,1)$ dynamics.
\item[c)] The proposition only derives the theoretical form of the copulas needed to mimick GARCH dynamics. In order to actually model a time series an efficient estimation scheme is necessary. Since the copulas both involve the (hard to tract) stationary distribution of a Gaussian-GARCH$(1,1)$ model, standard estimation (e.g. MLE) is computationally demanding. A suitable strategy to overcome this problem could be to transform the time series of interest first to the pseudo-observation scale by applying e.g. the (rescaled) empirical CDF, and then to quantile transform w.r.t. an initial choice of $\Psi = \Psi^{(0)}$. Then the standard GARCH$(1,1)$ estimation can be performed. The estimators will be biased due to the mismatch between the stationary distribution and the temporal dynamics. However, we can use these parameters to update the stationary distribution, i.e. $\Psi^{(1)}$. Then this distribution can be used to quantile transform the pseudo-observation time series and the standard Gaussia-GARCH$(1,1)$ estimation can be conducted. This process can be iterated. A similar strategy is used by \cite{pappert2024moving} for the estimation of a similar copula-based time series. There up to two iterations are sufficient.
\end{itemize}
\end{remarks}
Now we proceed to derive properties of the \SP{model in Eq.~\ref{Eq:Model(p,q)_updating_Eq}} and the MAG$(1)$-copula model.
\subsection{Properties of the model}
Before giving new results, we recall the properties already derived by \cite{joe2014dependence}. First, the joint distribution of $(U_t, \varepsilon_{t-1}, \hdots, \varepsilon_{t-q})$ in the $q$-dependent model is shown to be given by the MAG-copula $K$. Then the likelihood function (respectively, the joint density) is derived. Second, the likelihood of the whole model, compare Eq.~\ref{Eq:Model(p,q)_updating_Eq}, is derived for different combinations of model orders $p$ and $q$. The distribution of $U_t$ is standard uniform for all $t \in \mathbb{Z}$, \SP{if $g$ and $h$ in Eq.~\ref{Eq:Model(p,q)_updating_Eq} are appropriate copula conditional quantile functions.} Furthermore, the AR copula-based time series model is stationary and the MAG copula-based time series model is stationary and obviously $q$-dependent and hence ergodic. The full model, i.e. Eq.~\ref{Eq:Model(p,q)_updating_Eq}, is stationary since the unconditional distribution of $U_t$ is the same for all $t$ and the updating equation is not time dependent (in terms of Markov processes this property may also be called time-homogeneity). For the $1$-dependent process $\{V_t\}_{t\in\mathbb{Z}}$, following the updating equation
\begin{align*}
V_t = h(\varepsilon_t, \varepsilon_{t-1}), \quad \varepsilon_t \stackrel{iid}{\sim} U(0,1),
\end{align*}
the following holds. For $K$ being the (co-/counter-) monotonicity copula, $K = C^{+}$ or $K = C^{-}$, respectively, the process $V_t = \varepsilon_{t-1}$ or $V_t = -\varepsilon_{t-1}$ is recovered. For $K$ being the independence copula, $K = C^{\perp}$, we obtain $V_t = \varepsilon_t$, compare also Sect. 3.2 of \cite{pappert2024moving}.
%The joint distribution of consecutive observations is given as, cf. also Appendix C from \cite{pappert2024moving},
%\begin{align*}
%F_{V_t,V_{t-1}}(u,v) = \int_0^1 K_{21}(K_{2|1}(u|s), v) ds.
%\end{align*}
Now we start to \SP{state} additional properties. First, we are interested in the MAG$(1)$ process. The following distributional properties hold.
\begin{prop}[Properties of MAG$(1)$]
\label{prop:Distributional_MAG(1)}
Let $\{V_t\}_{t\in\mathbb{Z}}$ be a time series with updating equation $V_t = h(\varepsilon_t, \varepsilon_{t-1})$ and $\varepsilon_t \stackrel{iid}{\sim} U(0,1)$, \SP{where $h(\varepsilon_t,\varepsilon_{t-1}) = K_{2|1}^{-1}(\varepsilon_t|\varepsilon_{t-1})$ and $K_{21}$ is the copula of $(V_t,\varepsilon_{t-1})$.}
%Let $h$ be the conditional quantile function corresponding to a copula $K_{21}$ and let $K_{21}$ be within the Fr{\'e}chet-Hoeffding bounds.
It holds that
\begin{itemize}
\item[a)] The time series $\{V_t\}_{t\in\mathbb{Z}}$ is stationary and ergodic.
\item[b)] The joint distribution of consecutive observations is given as (cf. Appendix B.2 from \cite{pappert2024moving})
\begin{align}
F_{V_t,V_{t-1}}(u,v) = \int_0^1 K_{12}(K_{2|1}(u|s), v) ds.
\label{Eq:MAG(1)_Joint_Distribution}
\end{align}
\item[c)] If $K$ is positive quadrant dependent (PQD), then $F_{V_t, V_{t-1}}$ is also PQD. If $K$ is \SP{negative quadrant dependent (NQD),} then $F_{V_t,V_{t-1}}$ is NQD.
\item[d)] If $K_{2|1}$ is stochasticly increasing (SI) in its second argument, then $V_t$ is SI in $V_{t-1}$.
\item[e)] The \SP{Spearman's} $\rho$ associated to the joint distribution $F_{V_t, V_{t-1}}$ is bounded in absolute value by $\frac{1}{2}$ (cf. \cite{tao2025tridiagonal}).
\item[f)] The lower and upper tail dependence coefficients, $\lambda_l$ and $\lambda_u$, associated to $F_{V_t, V_{t-1}}$ are bounded in absolute value by $\frac{1}{2}$ (cf. \cite{embrechts2016bernoulli}).
\item[g)] Let $\{\tilde{V}_t\}_{t\in\mathbb{Z}}$ be a time series with updating equation $\tilde{V}_t = h(\varepsilon_{t-1}, \varepsilon_t)$, i.e. the MAG$(1)$ updating equation with reversed order. Furthermore let the MAG$(1)$-copula be the Gaussian Copula. Then it holds that $(V_t,\hdots,V_{t-r}) \stackrel{d}{=} (\tilde{V}_t,\hdots,\tilde{V}_{t-r}), \ \forall r \in \mathbb{Z}$.
\end{itemize}
\end{prop}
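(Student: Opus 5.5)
We take the items of Proposition~\ref{prop:Distributional_MAG(1)} in turn, and most reduce to short arguments. The common device is the representation $V_t = K_{2|1}^{-1}(\varepsilon_t|\varepsilon_{t-1})$. Since $\varepsilon_t$ is independent of $\varepsilon_{t-1}$, this gives $(V_t,\varepsilon_{t-1}) \sim K$. For a), $V_t$ is a fixed measurable function of the i.i.d.\ block $(\varepsilon_t,\varepsilon_{t-1})$. A measurable function of a shifted i.i.d.\ (Bernoulli) sequence is stationary and ergodic, and it is even mixing because it is $1$-dependent.

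For b), condition on $\varepsilon_{t-1}=s$. Given $s$, the variable $V_t$ depends only on $\varepsilon_t$ and $V_{t-1}$ depends only on $\varepsilon_{t-2}$, so they are conditionally independent. Moreover $P(V_t\le u\mid \varepsilon_{t-1}=s)=K_{2|1}(u|s)$. Also $(\varepsilon_{t-1},V_{t-1})$ has copula $K_{12}$ in reversed order, so $P(\varepsilon_{t-1}\le x, V_{t-1}\le v)=K_{12}(x,v)$ in the paper's notation. Integrating,
\begin{align*}
F_{V_t,V_{t-1}}(u,v)=\int_0^1 K_{2|1}(u|s)\, d_s K_{12}(s,v).
\end{align*}
We rewrite this in the stated form with the integral over $s$ by integrating by parts, or equivalently by substituting $w$ for the quantile level via $P(V_{t-1}\le v\mid\varepsilon_{t-1}=s)$. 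This follows Appendix B.2 of \cite{pappert2024moving}, which we may quote directly.

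For c), with $K$ PQD we condition on $\varepsilon_{t-1}$. Write $a(s)=K_{2|1}(u|s)$ and $b(s)=P(V_{t-1}\le v\mid \varepsilon_{t-1}=s)$, so that $F(u,v)=E[a(S)b(S)]$ with $S$ uniform. PQD alone does not make $a$ and $b$ co-monotone in $s$, so a Chebyshev-type inequality is unavailable in general. Instead we write $F(u,v)-uv=\int (a(s)-u)(b(s)-v)\,ds$. We then use that $K$ PQD means $\int_0^x (a(s)-u)\,ds\ge 0$ for all $x$, and similarly for $b$.

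This step is the main obstacle. A naive argument needs monotonicity, so we first try to prove c) under the SI hypothesis via Chebyshev's inequality: if $a$ and $b$ are both decreasing in $s$, then $E[ab]\ge E[a]E[b]=uv$. For the pure PQD case we then attempt the integration-by-parts argument above. If that fails, we fall back on the known result for $1$-dependent copula chains, cf.\ \cite{tao2025tridiagonal}. The NQD case is symmetric: $a$ decreasing and $b$ increasing gives the reverse inequality.

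For d), SI of $V_t$ in $V_{t-1}$ means $P(V_t>u\mid V_{t-1}=v)=E[1-a(\varepsilon_{t-1})\mid V_{t-1}=v]$. Here $1-a$ is increasing in $s$ by hypothesis. It therefore suffices that $\varepsilon_{t-1}$ is SI in $V_{t-1}$, i.e.\ that the reversed pair is SI. We obtain this by transferring SI from $K$: under SI, $K$ is TP2-like in the relevant direction, or we invoke composition of stochastic monotonicity. This is the second delicate point, and if needed we will assume the reverse SI as well.

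For e) and f), the bivariate margins $(V_t,V_{t-1},V_{t-2})$ form a $1$-dependent triple, because $V_t$ and $V_{t-2}$ are independent. Then $\rho(V_t,V_{t-2})=0$ and $\lambda(V_t,V_{t-2})=0$. The $3\times 3$ Spearman correlation matrix is tridiagonal Toeplitz and must be positive semidefinite. More strongly, taking $n$ consecutive terms gives a tridiagonal Toeplitz matrix with off-diagonal $\rho$, whose eigenvalues are $1+2\rho\cos(k\pi/(n+1))$. Letting $n\to\infty$ forces $|\rho|\le 1/2$, which is the \cite{tao2025tridiagonal} result. Likewise the tail dependence matrix of $(V_1,\dots,V_n)$ is tridiagonal with off-diagonal $\lambda$. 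By \cite{embrechts2016bernoulli} it is a Bernoulli-compatible (hence PSD) matrix, and the same eigenvalue argument gives $\lambda\le 1/2$.

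For g), with Gaussian $K$ we have $\Phi^{-1}(V_t)=\sqrt{1-\beta^2}Z_t+\beta Z_{t-1}$ and $\Phi^{-1}(\tilde V_t)=\sqrt{1-\beta^2}Z_{t-1}+\beta Z_t$, where $Z_t=\Phi^{-1}(\varepsilon_t)$. Both are centered Gaussian processes with unit variance, lag-one covariance $\beta\sqrt{1-\beta^2}$, and zero covariance at higher lags. Equal autocovariances give equal finite-dimensional Gaussian laws, and applying $\Phi$ componentwise yields the claim.
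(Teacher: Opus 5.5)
Parts a), e), f) and g) are fine and match the paper. Your eigenvalue bound $1+2\rho\cos(k\pi/(n+1))\ge0$ is what the cited tridiagonal and Bernoulli-compatibility results amount to, and in g) comparing autocovariances directly is equivalent to the paper's appeal to the two MA$(1)$ representations.

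\textbf{Part b), and consequently c).} Your claim that $(\varepsilon_{t-1},V_{t-1})$ has copula $K$ is false. The variable $\varepsilon_{t-1}$ is the quantile level that generates $V_{t-1}$; the conditioning variable of $V_{t-1}$ is $\varepsilon_{t-2}$. For $K=C^+$ one has $V_{t-1}=\varepsilon_{t-2}\perp\varepsilon_{t-1}$, and your display then gives $\int_0^1\mathds{1}\{s\le u\}\mathds{1}\{s\le v\}\,ds=\min\{u,v\}$ instead of $uv$. The correct factor is $b(s)=P(V_{t-1}\le v\mid\varepsilon_{t-1}=s)=\int_0^1\mathds{1}\{s\le K_{2|1}(v|w)\}\,dw$, obtained by conditioning on $\varepsilon_{t-2}=w$. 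Fubini then gives $F_{V_t,V_{t-1}}(u,v)=\int_0^1K_{21}(u,K_{2|1}(v|w))\,dw$ with $K_{21}(a,y)=P(V_t\le a,\varepsilon_{t-1}\le y)$, which is the paper's derivation.

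In that form c) takes one line: PQD gives $K_{21}(u,y)\ge uy$, so $F(u,v)\ge u\int_0^1K_{2|1}(v|w)\,dw=uv$, and NQD reverses the inequality. Within your own set-up, the missing observation is that $b$ is always nonincreasing in $s$. With $A(x)=\int_0^x(a(s)-u)\,ds\ge0$ and $A(0)=A(1)=0$, integration by parts then gives $F-uv=-\int_0^1A\,db\ge0$. No SI hypothesis is needed, and \cite{tao2025tridiagonal}, which is about correlation matrices, would not help. Your NQD sentence also has the monotonicities wrong: $b$ is never increasing, and what flips is the sign of $A$.

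\textbf{Part d) cannot be completed, and neither can the paper's proof.} Your reduction is right. Given $\varepsilon_{t-1}$, $V_t$ and $V_{t-1}$ are independent, so $P(V_t>u\mid V_{t-1}=v)=E[1-a(\varepsilon_{t-1})\mid V_{t-1}=v]$. Hence SI of $K$ together with ``$\varepsilon_{t-1}$ SI in $V_{t-1}$'' suffices. But this second link does not follow from any property of $K$. $V_{t-1}$ is always SI in $\varepsilon_{t-1}$ (that is the monotonicity of $b$), but conditioning on $V_{t-1}$ changes the law of $\varepsilon_{t-2}$, and the reverse relation can fail.

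Concretely, let $K$ have density $2$ on $[0,\frac12]^2\cup[\frac12,1]^2$. This $K$ is TP$_2$, hence SI in both arguments. Then $V_t=(\varepsilon_t+\mathds{1}\{\varepsilon_{t-1}\ge\frac12\})/2$, so $\varepsilon_{t-1}=2V_{t-1}-\mathds{1}\{V_{t-1}\ge\frac12\}$ a.s. Therefore $P(V_t>\frac12\mid V_{t-1}=v)=\mathds{1}\{v\in[\frac14,\frac12)\cup[\frac34,1)\}$, which is not monotone. So d) is false as stated.

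The paper's argument (``$\varepsilon_{t-1}=K_{2|1}(V_{t-1}|\varepsilon_{t-2})$ is increasing in $V_{t-1}$'') has exactly this hole, since that monotonicity holds only for fixed $\varepsilon_{t-2}$. Assuming reverse SI of $K$ does not rescue it, because the example has reverse SI. The hypothesis actually needed is that $\varepsilon_{t-1}$ is SI in $V_{t-1}$. This holds, for example, for Gaussian $K$, where the normal scores of $(V_{t-1},\varepsilon_{t-1})$ have correlation $\sqrt{1-\beta^2}\ge0$. Under that hypothesis your reduction finishes the proof.
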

\begin{proof}
The proof is given in~\ref{App:Proof:MAG(1)}
\end{proof}
\begin{remarks}%
\label{Remarks:Distributional_MAG(1)}
Remarks on the properties of the MAG$(1)$ model.
\begin{itemize}
\item[a)] Part g) of the proposition states that a Gaussian MAG$(1)$ process has two representations. In particular, the process $\{V_t\}_{t\in\mathbb{Z}}$, defined as $V_t = \Phi[ (1-\alpha^2)^{\frac{1}{2}} \Phi^{-1}(\varepsilon_{t-1}) + \alpha \Phi^{-1}(\varepsilon_{t})]$ with $i.i.d.$ innovations $\{\varepsilon_t\}_{t\in\mathbb{Z}}$ can be represented as $V_t = \Phi[ (1-\alpha^2)^{\frac{1}{2}} \Phi^{-1}(\eta_{t-1}) + \alpha \Phi^{-1}(\eta_{t})]$ with $i.i.d.$ innovations $\{\eta_t\}_{t\in\mathbb{Z}}$. From the proposition it is clear that $\eta_t = \varepsilon_{t-1}$ and $\eta_{t-1} = \varepsilon_t$. Since there is no temporal structure in the innovations, we are free to permute them and obtain the two representations. We see that there is a critical point at which the there is only one representation. This point is reached when $\alpha = (1-\alpha^2)^{\frac{1}{2}}$, which is solved by $\alpha = \SP{\pm} 1 / \sqrt{2}$. Furthermore, given the representation $\{V_t\}$ with parameter $\alpha$, the parameter of $\{\tilde{V}_t\}$ has to be chosen as $\alpha^{\dagger} = (1-\alpha^2)^{\frac{1}{2}}$ \SP{such that $\{\tilde{V}_t\}$ is a representation of $\{V_t\}$.} We may call $\alpha^{\dagger}$ the reciprocal to the value $\alpha$. This property is the pendant to the two representations of a Gaussian MA$(1)$ process. In Gaussian MA$(1)$ models, the two representations are
\begin{align*}
Y_t =& \varepsilon_t + \theta \varepsilon_{t-1}, \quad \varepsilon_t \sim WN(0, \sigma^2)
\\
\tilde{Y}_t =& \tilde{\varepsilon}_t + \frac{1}{\theta} \tilde{\varepsilon}_{t-1}, \quad \tilde{\varepsilon}_t \sim WN(0,\sigma^2 \theta^2),
\end{align*}
compare e.g. \cite{hamilton2020time}. While the equivalence of both representations is usually proven by showing that their autocovariance functions coincide, the property could also be shown by using the permutation argument we presented. The result from our proposition may be seen as a complement of the MA$(1)$ result. While the classic result holds for MA$(1)$ processes with elliptic white noise, our result holds for any $i.i.d.$ innovation process. The two representations give rise to a non-identifiability issue in MLE. We address this issue in Sect.~\ref{Subsec:Estimation} and investigate the likelihood by simulations in Sect.~\ref{Sec:Nums_Sims}.
\item[b)] The \SP{Spearman's} rho being bounded in absolute value by $\frac{1}{2}$ is similar to the property of classic MA$(1)$ processes. In fact we used the classic property in our proof. The statement generalizes the classic result to any $1$-dependent process. The intuition for this result becomes clear when considering the underlying process again: $V_t = K_{2|1}^{-1}(\varepsilon_t| \varepsilon_{t-1})$.
\item[c)] The absolute upper bound for the tail dependence coefficients being $\frac{1}{2}$ follows from a general argument on the form of tail dependence matrices by \cite{embrechts2016bernoulli}. However, numerical evaluations and estimations on simulations indicate a sharper bound for 'standard' copulas such as the Gumbel, $t$ or Clayton copula (see Sect.~\ref{Sec:Nums_Sims}). Furthermore we show that the tail dependence coefficients for the Fr{\'e}chet copula are bounded by $\frac{1}{4}$.
\SP{It appears that the MAG$(1)$ construction with an absolutely continuous copula $K_{21}$ might not be able to result in a time series with serial tail dependence. There could be other $1$-dependent time series constructions that can lead to a tail dependence coefficient between $\frac{1}{4}$ and $\frac{1}{2}$.}
%This poses the question if there is an additional assumption under which the bound $\frac{1}{2}$ by \cite{embrechts2016bernoulli} can be sharpened.
\end{itemize}
\end{remarks}
Now we proceed with a special toy example for a MAG$(1)$ process. We consider the Fr{\'e}chet copula as MAG-copula and investigate the induced tail dependence on consecutive observations from the resulting time series.
\begin{example}[MAG$(1)$-Fr{\'e}chet Copula process]
\label{Example:Frechet-MAG(1)}
We consider \SP{a MAG$(1)$ process $\{V_t\}_{t\in\mathbb{Z}}$,} where the MAG-copula is given as the Fr{\'e}chet copula,
\begin{align*}
\SP{K_{21}}(u_1, u_2) =& (1-\alpha) C^{\perp}(u_1, u_2) + \alpha C^+(u_1, u_2),
\nonumber
\\
=& (1-\alpha) u_1 u_2 + \alpha \min\{u_1, u_2\}.
\end{align*}
Sometimes the countermonotoncity copula, $C^{-}$ is also added to the convex combination in the Fr{\'e}chet copula. Here we omit this component. The conditional distribution and quantile function of $K_{12}$ are given as
\begin{align*}
\SP{K_{2|1}}(u_1|u_2) = (1-\alpha)u_1 + \alpha \mathds{1}\{u_2 \leq u_1\}
\end{align*}
almost everywhere. Now we calculate the implied tail dependence coefficients of the joint distribution of $(V_t, V_{t-1})$ using Eq.~\ref{Eq:MAG(1)_Joint_Distribution},
\begin{align*}
F_{V_t,V_{t-1}}(x_1,x_2) =& \int_0^1 \left[ \alpha \min\{K_{2|1}(x_1|s), x_2\} + (1 - \alpha) x_2 K_{2|1}(x_1|s) \right] ds
\nonumber
\\
=& \int_0^1 (1 - \alpha)x_2[ (1-\alpha)x_1 + \alpha \mathds{1}\{s \leq x_1\}] ds + \int_0^1 \alpha \min\{(1-\alpha)x_1 + \alpha \mathds{1}\{s \leq x_1\}, x_2\} ds
\nonumber
\\
=& (1-\alpha)^2 x_1x_2\int_0^1 ds + (1-\alpha)\alpha x_2 \int_0^{x_1} ds
\nonumber
\\
&+ \int_0^{x_1} \alpha \min\{(1-\alpha)x_1 + \alpha, x_2\}ds + \int_{x_1}^1 \alpha \min\{(1-\alpha)x_1, x_2\}ds
\nonumber
\\
=& (1-\alpha)^2 x_1 x_2 + (1-\alpha)\alpha x_1 x_2 + x_1 \alpha \min\{ (1-\alpha)x_1 + \alpha, x_2 \} + (1 - x_1) \alpha \min\{(1-\alpha)x_1,x_2\}
\end{align*}
\SP{Now we consider $\lim_{u\rightarrow 0} F_{V_t,V_{t-1}}(u,u) / u$. For small $u$ relative to $\alpha$, $F_{V_t,V_{t-1}}(u,u)$ simplifies to}
\begin{align*}
F_{V_t,V_{t-1}}(u,u) = (1-\alpha)^2 u^2 + (1-\alpha)\alpha u^2 + \alpha u^2 + (1-u)\alpha(1-\alpha)u.
\end{align*}
Then it is clear that $\lim_{u\rightarrow 0} F_{V_t,V_{t-1}}(u,u) / u = \alpha (1 - \alpha)$. This expression is maximal at $\alpha = \frac{1}{2}$ with a value of $\frac{1}{4}$.
\end{example}

Now we proceed with the model from Eq.~\ref{Eq:Model(p,q)_updating_Eq} with model orders $p = 1$ and $q = 1$. In the following proposition, we again state distributional properties.
\begin{prop}
\label{prop:CICARMA(1,1)}
\SP{Let $\{U_t\}_{t\in\mathbb{Z}}$ follow the model from Eq.~\ref{Eq:Model(p,q)_updating_Eq} with $p=q=1$, i.e. $\{U_t\}_{t\in\mathbb{Z}}$ follows the updating equation, $U_t = h(\varepsilon_t, W_{t-1})$, with $W_s = g(\varepsilon_s, W_{s-1})$ and $\varepsilon_t \stackrel{iid}{\sim} U(0,1)$. The mapping $g(\,\cdot\,,\,\cdot\,) = C_{2|1}(\,\cdot\,|\,\cdot\,)$ is the conditional quantile function corresponding to the bivariate copulas $C_{21}$, which by this construction is the copula of $(W_s,W_{s-1})$ for all $s$. The mapping $h(\,\cdot\,,\,\cdot\,) = K_{2|1}(\,\cdot\,|\,\cdot\,)$ is the conditional quantile function corresponding to the bivariate copulas $K_{21}$, which is the copula of $(U_t,W_{t-1})$ for all $t$.} The following holds
\begin{itemize}
\item[a)] \SP{If $\{W_s\}_{s\in \mathbb{Z}}$ is stationary and ergodic,} the time series $\{U_t\}_{t\in\mathbb{Z}}$ is stationary and ergodic.
\item[b)] \SP{Assuming $C_{21}$ is absolutely continuous with density $c_{21}$,} the joint distribution of $(U_t, U_{t-1})$ is given as
\begin{align}
F_{U_t, U_{t-1}}(x_1, x_2) = \int_0^1 \int_0^{C_{2|1}^{-1}\big(K_{2|1}(x_2|w_2) \big | w_2\big)} K_{2|1}(x_1|w_1) c_{21}(w_1, w_2)\, dw_1 \, dw_2
\label{Eq:CoInnoCo(1,1)_joint_distribution}
\end{align}
%\item[c)] If $K_{2|1}$ and $C_{2|1}$ are SI, then the copula of $(U_t, U_{t-1})$ is PQD.
%\item[c)] The Spearman's rho associated to $(U_t,U_{t-1})$ is not bounded besides the natural bound $-1\leq \rho_{S} \leq 1$.
%\item[d)] The lower and upper tail dependence coefficients are not bounded besides the natural bound $-1 \leq \lambda_u, \lambda_l \leq 1$.
\end{itemize}
\end{prop}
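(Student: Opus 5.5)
The plan is to reduce both parts to statements about the latent Markov chain $\{W_s\}$, using the fact that the innovation can be recovered from two consecutive latent values. Since $W_t = C_{2|1}^{-1}(\varepsilon_t\,|\,W_{t-1})$ and $u\mapsto C_{2|1}(u|w)$ is a (continuous, for absolutely continuous $C_{21}$) conditional CDF, we have almost surely
\begin{align*}
\varepsilon_t = C_{2|1}(W_t\,|\,W_{t-1}),
\qquad\text{hence}\qquad
U_t = K_{2|1}^{-1}\big(C_{2|1}(W_t|W_{t-1})\,\big|\,W_{t-1}\big) =: \Lambda(W_t,W_{t-1}).
\end{align*}
The map $\Lambda:(0,1)^2\to(0,1)$ is measurable and does not depend on $t$.

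For part a), I would use this representation directly. $\{U_t\}$ is a time-invariant measurable function of the block process $\{(W_t,W_{t-1})\}$, which is stationary and ergodic whenever $\{W_s\}$ is. The standard factor result then gives that $\{U_t\}$ is stationary and ergodic: a time-invariant measurable function of finitely many coordinates of a stationary ergodic sequence is again stationary and ergodic (e.g.\ Stout's Theorem 3.5.8, or the corresponding statement in \cite{douc2014nonlinear}). The only care needed is that the identity $\varepsilon_t = C_{2|1}(W_t|W_{t-1})$ holds almost surely, which suffices because ergodicity is insensitive to null sets. This identity is automatic when $C_{2|1}(\cdot|w)$ is continuous, which holds here since $C_{21}$ is absolutely continuous.

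For part b), I would compute $P(U_t\le x_1, U_{t-1}\le x_2)$ by conditioning on $(W_{t-1},W_{t-2})=(w_1,w_2)$.
\begin{itemize}
\item[(i)] By monotonicity of the conditional quantile, $U_t\le x_1$ iff $\varepsilon_t\le K_{2|1}(x_1|W_{t-1})$. Because $\varepsilon_t$ is independent of everything indexed by $t-1$ or earlier, the conditional probability of this event given $(w_1,w_2)$ is $K_{2|1}(x_1|w_1)$.
\item[(ii)] Similarly, $U_{t-1}\le x_2$ iff $\varepsilon_{t-1}\le K_{2|1}(x_2|W_{t-2})$. Substituting $\varepsilon_{t-1}=C_{2|1}(W_{t-1}|W_{t-2})$ and applying the increasing map $C_{2|1}^{-1}(\cdot|w_2)$, this becomes the deterministic event $w_1\le C_{2|1}^{-1}\big(K_{2|1}(x_2|w_2)\big|w_2\big)$.
\item[(iii)] Finally, $(W_{t-1},W_{t-2})$ has copula $C_{21}$, with density $c_{21}$ on the uniform scale. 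Integrating $K_{2|1}(x_1|w_1)\,\mathds{1}\{w_1\le C_{2|1}^{-1}(K_{2|1}(x_2|w_2)|w_2)\}$ against $c_{21}(w_1,w_2)$ yields exactly Eq.~\ref{Eq:CoInnoCo(1,1)_joint_distribution}.
\end{itemize}

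The main obstacle is the rigorous handling of the equivalences in (i) and (ii), since conditional quantile functions are only generalized inverses. I would verify $\{C^{-1}(p|w)\le x\}=\{p\le C(x|w)\}$, which holds for right-continuous CDFs. I would also confirm that the set where the equivalences fail has probability zero, using continuity of $C_{2|1}(\cdot|w)$ from the absolute continuity of $C_{21}$. A secondary point is the independence of $\varepsilon_t$ from $(W_{t-1},W_{t-2})$. This is immediate from the causal construction stated after Eq.~\ref{Eq:Model(p,q)_updating_Eq}, where $\varepsilon_t$ is independent of all variables with index $t-1$ or less.
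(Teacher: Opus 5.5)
\textbf{Part b).} This is the paper's own computation: condition on $(W_{t-1},W_{t-2})$, turn $\{\varepsilon_{t-1}\le K_{2|1}(x_2|W_{t-2})\}$ into an event for $W_{t-1}$ via the monotone map $C_{2|1}^{-1}(\cdot|W_{t-2})$, and integrate against $c_{21}$. Your handling of generalized inverses and null sets there is correct.

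\textbf{Part a): the gap.} Your representation $U_t=\Lambda(W_t,W_{t-1})$ needs $\varepsilon_t=C_{2|1}(W_t|W_{t-1})$ a.s. You justify this by absolute continuity of $C_{21}$, but that hypothesis belongs only to part b). Part a) assumes nothing beyond stationarity and ergodicity of $\{W_s\}$.

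Without continuity of $C_{2|1}(\cdot|w)$, the innovation genuinely cannot be recovered. Take the Fr{\'e}chet copula $C_{21}=(1-a)C^{\perp}+aC^{+}$ with $0<a<1$. The chain stays at $W_{t-1}$ with probability $a$ and otherwise redraws uniformly. It therefore satisfies a Doeblin condition and is stationary and ergodic, so it meets the hypothesis of a). Yet on the event $\{W_t=W_{t-1}\}$, which has probability $a$, $\varepsilon_t$ is still uniform on an interval of length $a$ given the entire $W$-path. With, e.g., $K_{21}=C^{\perp}$ (so $U_t=\varepsilon_t$), $U_t$ is then not a function of $\{W_s\}$ at all. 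No factor-of-$W$ argument can cover the generality in which a) is stated.

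\textbf{The fix.} The paper's route is to view $U_t=h(\varepsilon_t,W_{t-1})$ as a function of the pair. This needs no continuity, but it needs ergodicity of $Z_t=(\varepsilon_t,W_{t-1})$, which the paper asserts without proof and which you can supply as follows.
\begin{itemize}
\item Since $\varepsilon_{t+1}$ is independent of everything indexed $\le t$, $\{Z_t\}$ is a stationary Markov chain with stationary law $U(0,1)\otimes U(0,1)$ and kernel $Qf(e,w)=\int_0^1 f(e',g(e,w))\,de'$.
\item If $Q\mathds{1}_A=\mathds{1}_A$ a.e., then $\mathds{1}_A(e,w)=\phi(g(e,w))$ a.e., where $\phi(w)=\int_0^1\mathds{1}_A(e',w)\,de'$.
\item Integrating over $e$ gives $\phi(w)=\int_0^1\phi(g(e,w))\,de$ for a.e.\ $w$. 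So $\phi$ is a bounded harmonic function of the $W$-chain, hence a.e.\ constant by ergodicity of $\{W_s\}$.
\item Because $g(\varepsilon,W)\sim U(0,1)$ for independent uniforms, $\mathds{1}_A$ is then a.e.\ constant.
\end{itemize}
Hence $\{Z_t\}$ is ergodic, and $U_t=h(Z_t)$ is a stationary ergodic factor of it.

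Your $\Lambda$-argument remains a correct, and somewhat more transparent, proof of the special case where $C_{2|1}(\cdot|w)$ is continuous for a.e.\ $w$.
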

\begin{proof}
The proof is given in~\ref{App:Proof:CICARMA(1,1)}.
\end{proof}
\begin{remarks}
\label{Remarks:Properties_CoInnoCo(1,1)}
Remarks on properties of the \SP{model from Eq.~\ref{Eq:Model(p,q)_updating_Eq}} with $p=q=1$.
\begin{itemize}
\item[a)] For $K_{21} = C^{\perp}$ it holds that $F_{U_t,U_{t-1}} = C^{\perp}$ irrespective of the choice of $C_{21}$. For $K_{21} = C^+$ we see that $F_{U_t,U_{t-1}}$
equals the AR-copula \SP{$C_{21}$.} Last, for $C_{21} = C^{\perp}$, we recover the MAG$(1)$ joint distribution \SP{from Eq.~\ref{Eq:MAG(1)_Joint_Distribution}.} In \ref{App:Special_cases_CoInnoCo(1,1)}, we derive these results in detail. 
\end{itemize}
\end{remarks}
\begin{prop}
\label{prop:Joint_Distributions_higher_orders}
\SP{Let $\{U_t\}_{t\in\mathbb{Z}}$ follow the model from Eq.~\ref{Eq:Model(p,q)_updating_Eq} with AR copula $K_{(p+1),\hdots,1}$, MAG copula $C_{(q+1),\hdots,1}$, latent standard uniform process $\{W_s\}_{s\in\mathbb{Z}}$, $i.i.d.$ standard uniform innovations $\{\varepsilon_t\}_{t\in\mathbb{Z}}$ and model orders $p$ and $q$.} Then it holds that
%the joint distribution of consecutive observations $(U_t, U_{t-1})$ is given as
\begin{itemize}
\item[a)] \SP{For $p=1,q=2$, where $C_{21}$ is the copula of $(W_s,W_{s-1})$ for all $s$ and $K_{321}$ is the copula of the random vector $(U_t,\varepsilon_{t-1},W_{t-1})$ for all $t$, the joint distribution of consecutive observations is given as}
\begin{align*}
F_{U_t,U_{t-1}}(x_1, x_2) = \int_{(0,1)^3} K_{3|21}(x_1|e_1,w_2) \times c_{12}(w_2, w_3) \times \mathds{1}\{ e_1 \leq K_{3|21}(x_2|C_{2|1}(w_2|w_3), w_3) \}dw_2 dw_3 de_1,
\end{align*}
\SP{assuming $C_{12}$ is absolutely continuous.}
\item[b)] \SP{For $p=2,q=1$, where $C_{321}$ is the copula of $(W_s,W_{s-1},W_{s-2})$ for all $s$ and $K_{21}$ is the copula of $(U_t,W_{t-1})$ for all $t$, the joint distribution of consecutive observations is given as}
\begin{align*}
F_{U_t,U_{t-1}}(x_1, x_2) = \int_{(0,1)^3} K_{2|1}(x_1|w_1) \times c_{321}(w_1, w_2, w_3) \times \mathds{1}\{ w_1 \leq C_{3|21}^{-1}(K_{2|1}(x_2|w_2)|w_2, w_3) \} dw_1 dw_2 dw_3,
\end{align*}
\SP{assuming $C_{321}$ is absolutely continuous.}
\item[c)] \SP{For $p=2,q=2$, where $C_{321}$ is the copula of $(W_s,W_{s-1},W_{s-2})$ for all $s$ and $K_{321}$ is the copula of $(U_t,\varepsilon_{t-1}, W_{t-2})$ for all $t$, the joint distribution of consecutive observations is given as}
\begin{align*}
F(x_1,x_2) =& \int_{(0,1)^4} K_{3|21}(x_1|e_1, w_2) \times c_{321}(w_2, w_3, w_4)
\\
& \times \mathds{1}\{ e_1 \leq K_{3|21}(x_2 | C_{3|21}(w_2|w_3,w_4), w_3) \} de_1 dw_2 dw_3 dw_4,
\end{align*}
\SP{assuming $C_{321}$ is absolutely continuous.}
\end{itemize}
\end{prop}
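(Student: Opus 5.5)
The plan is the same conditioning argument that yields Eq.~\ref{Eq:CoInnoCo(1,1)_joint_distribution} in Proposition~\ref{prop:CICARMA(1,1)}. First, write $(U_t,U_{t-1})$ through the updating equations as functions of a small set of mutually independent or jointly known variables. Second, condition on everything that $U_t$ shares with $U_{t-1}$, so that the event $\{U_t\le x_1\}$ has conditional probability given by a conditional copula CDF. Third, rewrite $\{U_{t-1}\le x_2\}$ as an indicator on one remaining variable, using monotonicity of the conditional quantile function in its first argument. Throughout I use that $K_{\cdot|\cdot}^{-1}(\varepsilon|\cdot)\le x$ iff $\varepsilon\le K_{\cdot|\cdot}(x|\cdot)$, and the same for $C$. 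I also use that the innovation $\varepsilon_t$ (or $\varepsilon_{t-1}$) is independent of the past latent states.

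For a) ($p=1$, $q=2$) we have $U_t=K_{3|21}^{-1}(\varepsilon_t|\varepsilon_{t-1},W_{t-2})$ and $U_{t-1}=K_{3|21}^{-1}(\varepsilon_{t-1}|\varepsilon_{t-2},W_{t-3})$.
\begin{itemize}
\item Since $\varepsilon_{t-2}$ is recovered from $W_{t-2}=C_{2|1}^{-1}(\varepsilon_{t-2}|W_{t-3})$ as $\varepsilon_{t-2}=C_{2|1}(W_{t-2}|W_{t-3})$, the pair $(U_t,U_{t-1})$ is a function of $\varepsilon_t$, $e_1:=\varepsilon_{t-1}$, $w_2:=W_{t-2}$ and $w_3:=W_{t-3}$.
\item In this set, $\varepsilon_t$ and $\varepsilon_{t-1}$ are independent uniforms, independent of $(W_{t-2},W_{t-3})$, which has density $c_{21}$.
\item Conditioning on $(e_1,w_2,w_3)$ and integrating out $\varepsilon_t$ gives the factor $K_{3|21}(x_1|e_1,w_2)$.
\item The event $\{U_{t-1}\le x_2\}$ becomes the indicator $\mathds{1}\{e_1\le K_{3|21}(x_2|C_{2|1}(w_2|w_3),w_3)\}$.
\item Fubini over $(e_1,w_2,w_3)$ then gives the stated triple integral.
\end{itemize}

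For b) ($p=2$, $q=1$) we have $U_t=K_{2|1}^{-1}(\varepsilon_t|W_{t-1})$ and $U_{t-1}=K_{2|1}^{-1}(\varepsilon_{t-1}|W_{t-2})$.
\begin{itemize}
\item Write $\varepsilon_{t-1}=C_{3|21}(W_{t-1}|W_{t-2},W_{t-3})$, where the conditioning order follows the D-vine convention of the statement.
\item Condition on $(w_1,w_2,w_3)=(W_{t-1},W_{t-2},W_{t-3})$, which has density $c_{321}$ by stationarity of the latent AR$(2)$ process.
\item Integrating out $\varepsilon_t$, which is independent of these, gives $K_{2|1}(x_1|w_1)$.
\item The event $\{U_{t-1}\le x_2\}$ is equivalent to $C_{3|21}(w_1|w_2,w_3)\le K_{2|1}(x_2|w_2)$. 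Inverting in $w_1$ gives $w_1\le C_{3|21}^{-1}(K_{2|1}(x_2|w_2)|w_2,w_3)$.
\end{itemize}

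For c) ($p=q=2$), combine the two arguments.
\begin{itemize}
\item $U_t=K_{3|21}^{-1}(\varepsilon_t|\varepsilon_{t-1},W_{t-2})$ and $U_{t-1}=K_{3|21}^{-1}(\varepsilon_{t-1}|\varepsilon_{t-2},W_{t-3})$.
\item Use $\varepsilon_{t-2}=C_{3|21}(W_{t-2}|W_{t-3},W_{t-4})$.
\item Condition on $e_1=\varepsilon_{t-1}$, which is independent of $(W_{t-2},W_{t-3},W_{t-4})$ with density $c_{321}$.
\item This produces the factor $K_{3|21}(x_1|e_1,w_2)$ and the indicator $\mathds{1}\{e_1\le K_{3|21}(x_2|C_{3|21}(w_2|w_3,w_4),w_3)\}$.
\end{itemize}

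The main obstacle is the bookkeeping of independence: one must check that each innovation used is independent of the conditioning set, and that the inverse relations $\varepsilon_s=C_{\cdot|\cdot}(W_s|\cdots)$ hold almost surely. The latter needs continuity and strict monotonicity of the conditional CDFs, which absolute continuity gives a.e. The integration variables and the labelling of $c_{21}$ versus $c_{12}$ must also be matched carefully to the statement.
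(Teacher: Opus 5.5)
Your proposal is correct and follows essentially the same route as the paper. Both arguments rewrite the two events via the conditional CDFs and eliminate the shared innovation through the latent AR recursion. They then condition on $(\varepsilon_{t-1},W_{t-2},W_{t-3})$, $(W_{t-1},W_{t-2},W_{t-3})$, or $(\varepsilon_{t-1},W_{t-2},W_{t-3},W_{t-4})$ respectively, and integrate out $\varepsilon_t$. Your explicit substitution $\varepsilon_{t-2}=C_{2|1}(w_2|w_3)$ in a) in fact spells out a step the paper's proof leaves implicit.

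One small correction to your closing remark: absolute continuity does not give strict monotonicity of the conditional CDFs, but you do not need it. Continuity alone yields $\varepsilon_s=C(W_s|\cdot)$ a.s. Any flat pieces you meet when inverting in b) carry zero conditional mass under $c_{321}$.
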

\begin{proof}
The proof is given in \ref{App:Proof:Joint_Distribution_higher_orderss}.
\end{proof}

Now we \SP{state the likelihood of the model from Eq.~\ref{Eq:Model(p,q)_updating_Eq}} and provide an algorithm that calculates the likelihood iteratively.
\subsection{Estimation}
\label{Subsec:Estimation}
In this section we discuss the maximum likelihood estimation \new{(MLE).} \cite{joe2014dependence} \SP{has a general result for} the likelihood, with a Jacobian and specific details for $(p,q) \in \{(0,1), (0,2), (1,2), (2,1)\}$. We note that estimation properties of the MAG$(1)$ process are also discussed by \cite{pappert2024moving}.\\
In order to derive the likelihood,
%\SP{given candidate copulas $C_{(p+1),\hdots,1}$ and $K_{(q+1),\hdots,1}$,}
the conditional density of $U_t|\mathcal{F}_{t-1}$ is necessary, where $\mathcal{F}_{s}$ denotes all information up to and including $s \in \mathbb{Z}$. \new{Assuming $K_{(q+1),\hdots,1}$ is absolutely continuous, the conditional density of $U_t|\mathcal{F}_{t-1}$ from the model in Eq.~\ref{Eq:Model(p,q)_updating_Eq} with arbitrary orders $p$ and $q$ is given as}
\begin{align*}
f_{U_t|\mathcal{F}_{t-1}}(x) = k_{(q+1),\hdots,1}(x,\varepsilon_{t-1},\hdots,\varepsilon_{t-q+1}, W_{t-q}),
\end{align*}
%
%\begin{align*}
%f_{U_t|\mathcal{F}_{t-1}}(x) = k_{(q+1),\hdots,1}(x, \hat{\varepsilon}_{t-1},\hdots,\hat{\varepsilon}_{t-q+1}, \hat{W}_{t-q}),
%\end{align*}
where $k_{(q+1),\hdots,1}$ is the $(q+1)$-variate density \SP{of the MAG-copula,} $(\varepsilon_{t-1},\hdots,\varepsilon_{t-q+1})$ are the \SP{innovations} and $W_{t-q}$ the latent AR-process. \new{Given a set of candidate AR and MAG copulas, which we may denote by $\hat{K}_{(q+1),\hdots,1}$ and $\hat{C}_{(p+1),\hdots,1}$, with candidate parameter values, for which the likelihood ought to be estimated, the innovation at time $s$ can be estimated as $\hat{\varepsilon}_s = \hat{K}_{(q+1)|q,\hdots,1}(U_s|\hat{\varepsilon}_{s-1},\hdots,\hat{\varepsilon}_{s-q+1},\hat{W}_{t-q})$, given the preceding values.} Similarly \SP{the latent AR-process $\hat{W}_s$ at time $s$ can be estimated as $\hat{W}_{s} = \hat{C}_{(p+1)|p,\hdots,1}^{-1}(\hat{\varepsilon}_s|\hat{W}_{s-1},\hdots,\hat{W}_{s-p})$ given the parameters and preceding values.} It is evident that $\{\hat{\varepsilon}_s\}$ and $\{\hat{W}_s\}$ need to be calculated iteratively. In algorithm~\ref{Algo:PARMA_nll} we show how the likelihood can be calculated using the \texttt{rvinecopulib}-package in \texttt{R}. Given the algorithm for calculating the likelihood, numerical optimization can be utilized to find the minimum of the negative log-likelihood. In order to establish consistency, the stationarity and ergodicity of \SP{the estimated latent processes $\{\hat{\varepsilon}_s\}$ and $\{\hat{W}_s\}$ is needed for all feasible parameter values. Assuming that the data-generating process is the model from Eq.~\ref{Eq:Model(p,q)_updating_Eq}, the stationarity and ergodicity of the observed process $\{U_t\}_{t\in\mathbb{Z}}$ follows. If the copulas are chosen such that $\{\hat{W}_t\}$ fulfills the conditions from Theorem 4.40 by \cite{douc2014nonlinear}, the stationarity and ergodicity of $\{\hat{W}_t\}$ and $\{\hat{\varepsilon}_t\}$ follows.} Then, under additional assumptions, Theorem 4.2 from \cite{wooldridge1994estimation} can be applied to establish the uniform weak law of large numbers for the log-likelihood estimator. Then Theorem 4.3 by \cite{wooldridge1994estimation} can be applied to establish consistency (again under additional assumptions). Establishing stationarity and ergodicity of the estimated latent processes seems to be the crucial step. Currently it not clear how these conditions can be translated to conditions on the AR and MAG copulas.\\

%
%\begin{prop}
%Let $\{U_t\}_{t \in \mathbb{Z}}$ be a common innovations copula process with $p=q=1$, AR-copula $C_{21}$ and MAG copula $K_{21}$.
%\end{prop}
%
%\clearpage
\vspace*{1cm}
\begin{minipage}{\textwidth}
\centering
\begin{algorithm}[H]
\caption{likelihood calculation for the \SP{model from Eq.~\ref{Eq:Model(p,q)_updating_Eq}} based on the \texttt{R}-package \texttt{rvinecopulib}. \SP{The objects \texttt{model{\_}AR} and \texttt{model{\_}MA} contain all information about the AR copula $C_{(p+1),\hdots,1}$ and the MAG copula $K_{(q+1),\hdots,1}$ respectively.}}\label{Algo:PARMA_nll}
%\begin{algorithmic}
\SetKwInOut{Input}{input}
\SetKwInOut{Output}{output}
\Input{$x$; \texttt{model{\_}AR}, \texttt{model{\_}MA}, \SP{$p$ and $q$}}
\Output{Negative Log-Likelihood}
%\KwData{$n \geq 0$}
%\KwResult{$y = x^n$}
$l \gets \texttt{length(x)}$\\
$d, e, W \gets \texttt{vector()}$ \\
%$p \gets \texttt{length(family{\_}AR)}$\\
%$q \gets \texttt{length(family{\_}MA)}$\\
$r \gets \max(p,q)$\\
$d_{1:r}, e_{1:r}, W_{1:r} \gets (1,0.5,0.5)$\\
\For{$t\gets (r+1)$ \KwTo $l$}{
	$e^{\dagger} \gets \texttt{inverse{\_}rosenblatt}(u = \left(W_{t-q}, e_{(t-q+1):(t-1)}, 0.5\right), \texttt{model{\_}MA})$\\
	$e_t \gets \texttt{rosenblatt}(x = (e^{\dagger}_{1:q},x_t), \texttt{model{\_}MA})_{(q+1)}$ \\
	$W^{\dagger} \gets \texttt{rosenblatt}(x = (W_{(t-p):(t-1)}, 0.5), \texttt{model{\_}AR})$\\
    $W_{t} \gets \texttt{inverse{\_}rosenblatt}((W^{\dagger}_{1:p}, e_t), \texttt{model{\_}AR})_{(p+1)}$\\
    $d_t \gets \texttt{dvinecop}(u = (x_t, e_{(t-1):(t-q+1)}, W_{(t - q)}), \texttt{model{\_}MA})$
    }
    $Q \gets \sum_{t = 1}^l [-\log d_t]$\\
    \Return $Q$
%\end{algorithmic}
\end{algorithm}
%}
\end{minipage}

\section{Simulation and Numerical Experiments}
\label{Sec:Nums_Sims}
We evaluate dependence measures of the model from Eq.~\ref{Eq:Model(p,q)_updating_Eq} for lag orders $(p,q) \in \{ (0,1), (1,1)\}$. In particular we are interested in Spearman's rho, upper/lower tail dependence coefficients and the tail order. We refer to Sect. 2.13 and 2.16 by \cite{joe2014dependence} for tail dependence and the tail order, respectively. We use the joint distributions derived in the respective propositions, cf. prop.~\ref{prop:Distributional_MAG(1)} and~\ref{prop:CICARMA(1,1)}, and evaluate the integrals numerically. For the numerical approximation of integrals, we use Gaussian quadrature. \SP{Furthermore, the tail dependence coefficients are approximated by setting $u = 0.05$ in $F_{U_t,U_{t-1}}(u,u) / u$ and $[1 - 2(1-u) + F_{U_t, U_{t-1}}(1-u,1-u)] / u$, respectively. The tail order is also approximated by setting $u=0.05$ in $\log F_{U_t, U_{t-1}}(1-u,1-u)] / \log u$ and analogous for the upper tail dependence. These approximations are called 5\%-tail dependence coefficients and 5\%-tail order, respectively.} For Spearman's rho, we also estimate the dependence measures on simulated \SP{time series with $n=1000$ observations.}\\
In Fig.~\ref{Fig:DependenceMeasures_MAG(1)_Gauss}, the estimated dependence measure of consecutive observations from the Gaussian MAG$(1)$ process are displayed. First, the implied Spearman's rho, then the lower tail dependence coefficient/order and the upper tail dependence coefficient/order. The results are expected. First, the numerical approximation and simulation show Spearman's rho associated to the joint distribution of $(V_t, V_{t-1})$ is bounded in absolute value by $\frac{1}{2}$. This is in line with part e) of proposition~\ref{prop:Distributional_MAG(1)}. As the Gaussian copula exhibits no tail dependence, we also do not expect tail dependence in the joint distribution of consecutive observations from a Gaussian-MAG$(1)$ process. This expectation is backed by the insights from proposition~\ref{prop:GaussianARMA}: The $\Phi$-quantile transformed Gaussian-MAG$(1)$ is a Gaussian-MA$(1)$ process, which exhibits no tail dependence. Since a contemporaneous monotone transformation (i.e. the $\Phi$-quantile transformation) does not affect dependence properties, we can conclude that Gaussian-MAG$(1)$ does not have tail dependence. This is what we can observe in the numerical approximations. The approximated \SP{5\%-tail} dependence coefficient is close to zero with a few fluctuations. The \SP{5\%-tail order} is above the critical value of $1$. For the Frank copula we observe similar behavior in Fig.~\ref{Fig:DependenceMeasures_MAG(1)_Frank}. Now for the Clayton and Gumbel copula in Fig.~\ref{Fig:DependenceMeasures_MAG(1)_Clayton} and Fig.~\ref{Fig:DependenceMeasures_MAG(1)_Gumbel}, respectively, we first observe that the absolute bound of $\frac{1}{2}$ for Spearman's rho is confirmed, again in line with proposition~\ref{prop:Distributional_MAG(1)}. For the tail dependence coefficients the conclusion is not as clear.
%Since the Clayton and Gumbel copula exhibit lower and upper tail dependence, respectively, it could be reasonable to expect tail dependence in the corresponding Clayton/Gumbel-MAG$(1)$ process. However,
We observe that the tail order stays above the critical limit of $1$ \SP{for all copulas under consideration.} Furthermore, the known absolute bound of $\frac{1}{2}$, cf. \SP{\cite{embrechts2016bernoulli}} or part f) of proposition~\ref{prop:Distributional_MAG(1)}, is not attained. \SP{Compare the comment in Remark~\ref{Remarks:Distributional_MAG(1)} part c).}\\
%We observe the same phenomenon for the (symmetric) tail dependencies of the $t$-copula in Fig.~\ref{Fig:DependenceMeasures_MAG(1)_t}.
%If tail dependence exists in the MAG$(1)$ process for these copulas, it appears to be bounded at least by $\frac{1}{4}$, which also is the upper bound for the tail dependence of the Fr{\'e}chet-MAG$(1)$ process, cf. example~\ref{Example:Frechet-MAG(1)}. When comparing the approximated tail dependence coefficient with the one from the Gaussian-MAG$(1)$ in Fig.~\ref{Fig:DependenceMeasures_MAG(1)_Gauss} it is also conceivable that the bound for these copulas is $0$. In any case we observe a discrepancy between the upper bound derived by \cite{embrechts2016bernoulli} and the observed tail dependence coefficients for the processes in Figures~\ref{Fig:DependenceMeasures_MAG(1)_Gauss}-\ref{Fig:DependenceMeasures_MAG(1)_Frank}. It seems reasonable to think that there is an additional assumption under which the known bound of $\frac{1}{2}$ can be sharpened.\\
Now we consider simulations \SP{from the model from Eq.~\ref{Eq:Model(p,q)_updating_Eq}} with model orders $p=q=1$. We want to demonstrate that the dependence measures are not as restricted as in the MAG$(1)$-case. Due to computational demand reasons, we rely on correlations estimated on simulations \SP{and not on integral approximations.} We examine the dependence of a model with a Gumbel AR-copula and a Gaussian MAG-copula. In Fig.~\ref{Fig:ARMA(1,1)_rho}, the estimated Spearman's rho values are displayed in dependence of the AR parameter. The MAG parameter for these simulations is fixed to $\beta \in\{0, 0.25, 0.9375\}$ respectively. For $\theta=0$ we observe that the Spearman's rho value is zero irrespective of the AR parameter value. This is expected since the process simplifies to $i.i.d.$ uniform noise for the MAG-copula being independence (cf. Sect.~\ref{Sec:Introduction}). Now for $\theta = 0.9375$, which is close to comonotonicity, The \SP{AR-copula's} implied Spearman's rho is almost recovered. This is also not surprising because the process approaches a copula AR process for the MAG-copula being close to comonotonicity. For an intermediate value of the MAG parameter, $\theta = 0.25$, the Spearman's rho value is bounded. A numerical approximation/estimation on simulated data of the tail-dependence coefficients would also be interesting, however, due to the instability of the approximation and estimation \SP{in this case}, we can not report reliable results.\\
Another research question, which we investigate by simulation, is related to part g) of proposition~\ref{prop:Distributional_MAG(1)}. Recall that it is shown that for a Gaussian-MAG$(1)$ process $\{V_t\}_{t\in\mathbb{Z}}$, there is a process $\{\tilde{V}_t\}_{t\in\mathbb{Z}}$ which is equal in distribution to $\{V_t\}_{t\in\mathbb{Z}}$. This result in the process $\{V_t\}$ having two representations. We explore how these to representations affect MLE. To this end we simulate a copula-MAG$(1)$ time series, calculate the likelihood for different parameter values and check whether the minimum of the \SP{negative log-likelihood (NLL)} is at the true parameter value, which \SP{is} used in simulation. First, we check the behavior of the NLL of the Gaussian-MAG$(1)$. In Fig.~\ref{Fig:nll_Gausian_MAG(1)} the estimated NLL is plotted \SP{against} the copula parameter for three different values of the true parameter. The three different values are $\alpha\in\{0.25, 0.5, 0.875\}$. The true value is marked by the red dashed line. Additionally, the blue dashed line marks the reciprocal value to the true parameter value. This is the parameter value that needs to be chosen in the construction of $\{\tilde{V}_t\}_{t\in\mathbb{Z}}$ such that it is a representation of $\{V_t\}_{t\in\mathbb{Z}}$, \SP{compare Remark~\ref{Remarks:Distributional_MAG(1)}, part a).} For the Gaussian copula the reciprocal value can be calculated as $\alpha^{\dagger} = \sqrt{1 - \alpha^2}$. We also recall that the critical value for the two representations of the Gaussian-MAG$(1)$ process is $1 / \sqrt{2} \approx 0.7071$. Given these information, the plots in Fig.~\ref{Fig:nll_Gausian_MAG(1)} can be interpreted. First, for $\alpha \in \{0.25, 0.5\}$ we observe that the minimum of the NLL is attained at the true parameter value. This is expected. However, for $\alpha = 0.875$, which is above the critical threshold $1 / \sqrt{2}$ we observe that the minimum of the NLL is attained at the reciprocal value. Since the MAG$(1)$ model with the reciprocal value is just a different representation of the \SP{data-generating process,} the MLE does not give rise to bias. We conclude that the NLL is minimized by values which lie below the critical value. The reason for this behavior can be found when considering proposition 7 by \cite{pappert2024moving}. The proposition concerns the consistency of MLE of the MAG$(1)$ process. A crucial step in the derivation is that \SP{the time series of estimated innovations $\{\hat{e}_t\}_{t\in\mathbb{Z}}$ which is defined by the updating equation, $\hat{e}_t = \hat{K}_{2|1}(V_t|\hat{e}_{t-1})$, where $\{V_t\}$ is the observed process, is stationary and ergodic.}
%Note that $\hat{K}_{2|1}$ is an estimated CDF of a copula and in general does not coincide with $K_{2|1}$.}
\cite{pappert2024moving} assures this by assuming that $\hat{K}_{2|1}$ induces a contraction-on-average in the sense of Theorem 4.40 by \cite{douc2014nonlinear}. For the Gaussian copula to induces a contraction-on-average, the parameter has to fulfill $|\hat{\alpha}| < 1 / \sqrt{2}$.\footnote{\SP{This can be seen by considering the stochastic recurrence equation function $\tilde{f}_z(x) = \Phi^{-1}(\hat{K}_{2|1}(z|\Phi(x))$. This mapping induces a contraction if the parameter of $\hat{K}_{2|1}$, say $\hat{\alpha}$, fulfills $|\hat{\alpha} / (1 - \hat{\alpha}^2)^{\frac{1}{2}}| < 1$. This is fulfilled if $|\alpha| < 1 / \sqrt{2}$.}} If \SP{$\hat{\alpha}$} is outside the bound, then the estimated innovation time series is not stationary and ergodic. This is exemplified in the left plot of Fig.~\ref{Fig:MAG(1)_residuals}. After a burn-in period, the time series only oscillates. For the reciprocal of the true value, the residuals are stationary and ergodic. This is exemplified in the right plot of Fig.~\ref{Fig:MAG(1)_residuals}. Notice that the critical value for \SP{$\{\hat{e}_t\}_{t\in\mathbb{Z}}$} to be stationary and ergodic coincides with the critical value of the two representations of the Gaussian-MAG$(1)$ process. We note that this behavior of the residuals and likelihood is similar to the behavior of the classical Gaussian MA$(1)$. As discussed in Remark~\ref{Remarks:Distributional_MAG(1)}, part a), this model also has two representations. For estimation purposes the parameter is restricted to the \textit{invertible region} $(-1,1)$ (compare Sect. 3.7 by \cite{hamilton2020time}). When relating the Gaussian-MAG$(1)$ to the classical Gaussian MA$(1)$ by $\Phi$-quantile transforming, see Example~\ref{Example:Gaussian_MA(1)}, we see that the relation between the MAG-parameter $\beta_1$ and the MA-parameter $\theta_1$ is given as $\theta_1 = \beta_1 / (1-\beta_1^2)^{\frac{1}{2}}$. Plugging in the critical value for $\beta_1$ we see that \SP{$\theta_1 = \pm 1$} is recovered which is the critical value of the classical MA$(1)$ process. These findings imply that the MLE can not be consistent for $|\hat{\alpha}| \geq 1 / \sqrt{2}$. This in in line with proposition 7 by \cite{pappert2024moving}. In order to assure consistency, the parameter space may be restricted accordingly. This is not a problem, since $1 / \sqrt{2}$ is also the critical value for the two representations. \SP{Hence a Gaussian-MAG$(1)$ with estimated $|\hat{\alpha}| < 1 / \sqrt{2}$ is not restricted in its temporal dependence by the bounds on $\hat{\alpha}$.} \\
We also investigate the NLL for the Gumbel-MAG$(1)$ process. In Fig.~\ref{Fig:nll_Gumbel_MAG(1)} we again show the estimated NLL in dependence of parameter values. The red-dashed line again marks the true value and the blue-dashed line an approximate reciprocal value. The approximate reciprocal value is calculated as follows: We calculate the Kendall's tau value implied by the parameter of the Gumbel copula $\alpha$, then we calculate the parameter of a Gaussian copula with such Kendall's tau value. Denote this value by $\rho^{*}$. We then calculate $\rho^{\dagger} = \sqrt{1 - \rho^{*}}$, obtain the Kendall's tau value implied by this parameter and last calculate the parameter of the Gumbel copula, which has the same Kendall's tau value. This is only an approximation. In fact it is not even proven that there are two representations for the Gumbel-MAG$(1)$ process. But to explore the process, we make use of the approximation. Now for $\alpha = 1.25$, the NLL is minimized at the true value. For $\alpha = 2.1$, the NLL attains its minimum at the approximate reciprocal value. However, for $\alpha = 4.5$ neither the true value nor the approximate reciprocal value minimize the likelihood, although the approximate reciprocal value is close. \SP{The results indicate that there may be identifiability for Gumbel-MAG$(1)$ process, unlike with the Gaussian copula.}
%We draw two conclusions from these observations. \SP{First, the results indicate that there may be identifiability for Gumbel-MAG$(1)$ process, unlike with the Gaussian copula.} \SP{Second,} The critical value lies between $\alpha = 1.25$ and $\alpha = 2.1$. These values correspond to Gaussian values of $\rho^{*} = 0.3090$ and $\rho^{*} = 0.7331$ respectively. Second, for intermediate values of the parameter, the reciprocal approximation works reasonable. 

\newpage
\begin{figure}
\centering
\begin{subfigure}[t]{0.3\textwidth}
\centering
\includegraphics[scale=0.2]{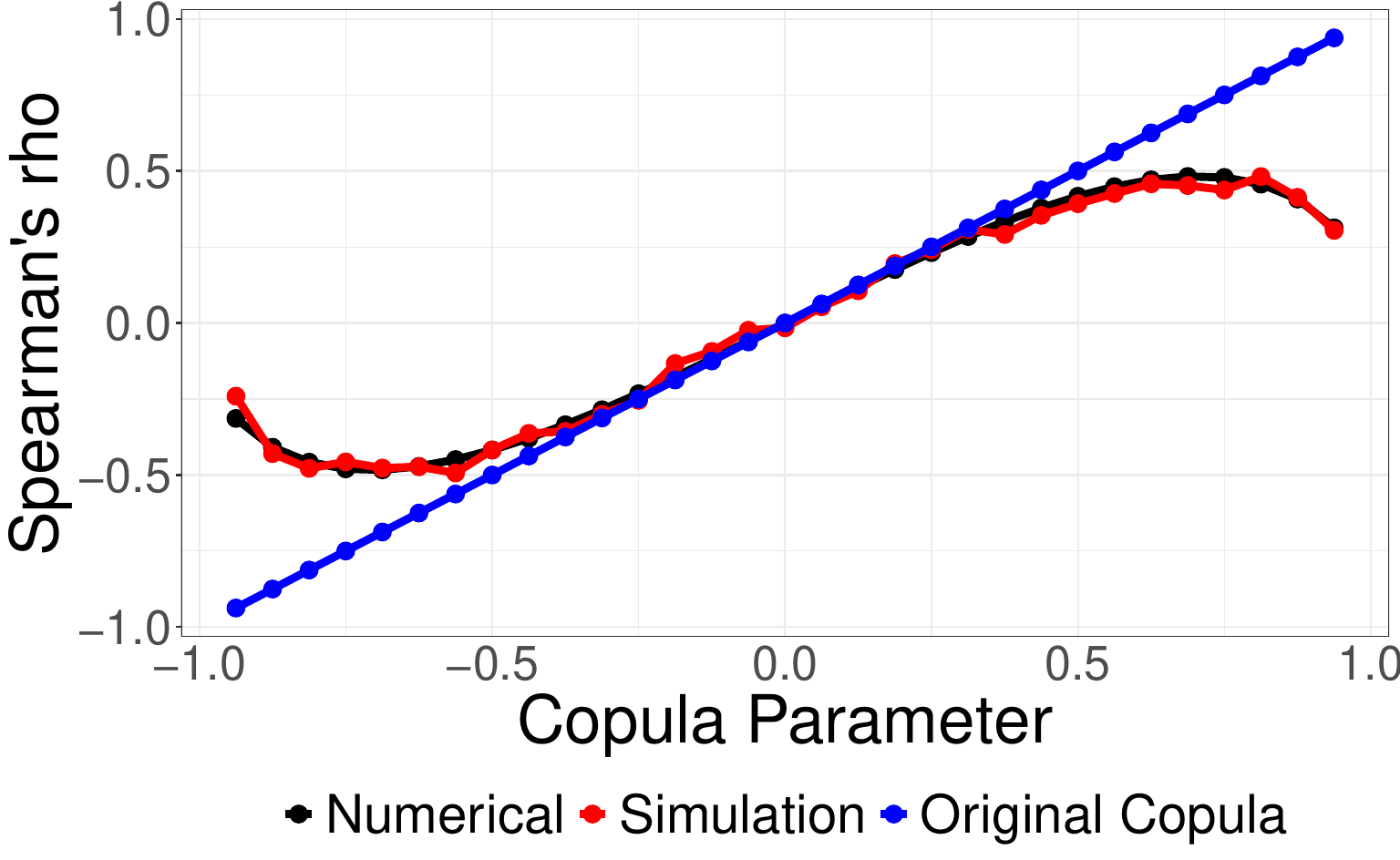}
\caption{Spearman's rho Gaussian-MAG$(1)$.}
\end{subfigure}
~
\begin{subfigure}[t]{0.3\textwidth}
\centering
\includegraphics[scale=0.2]{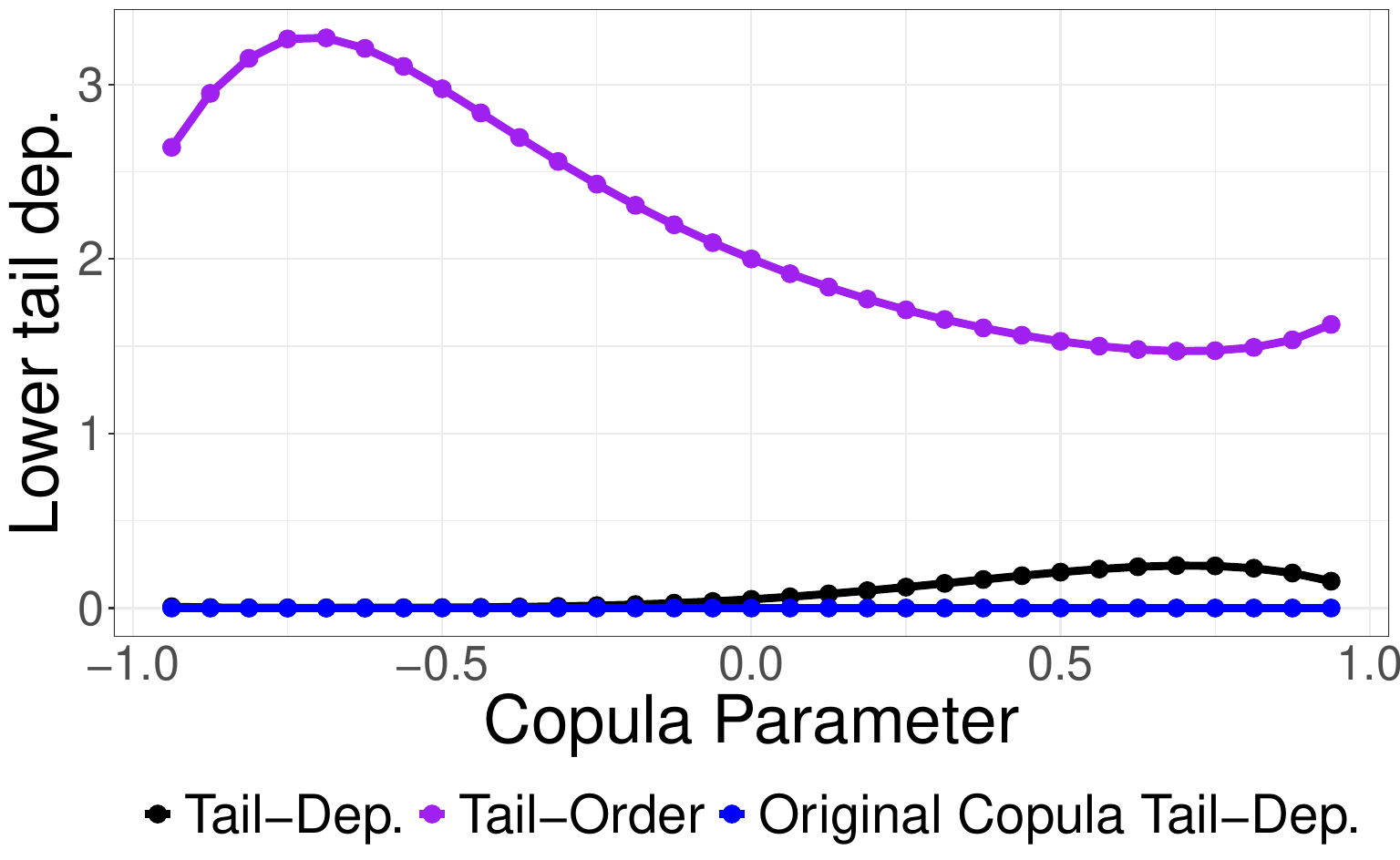}
\caption{Lower tail Gaussian-MAG$(1)$.}
\end{subfigure}
~
\begin{subfigure}[t]{0.3\textwidth}
\centering
\includegraphics[scale=0.2]{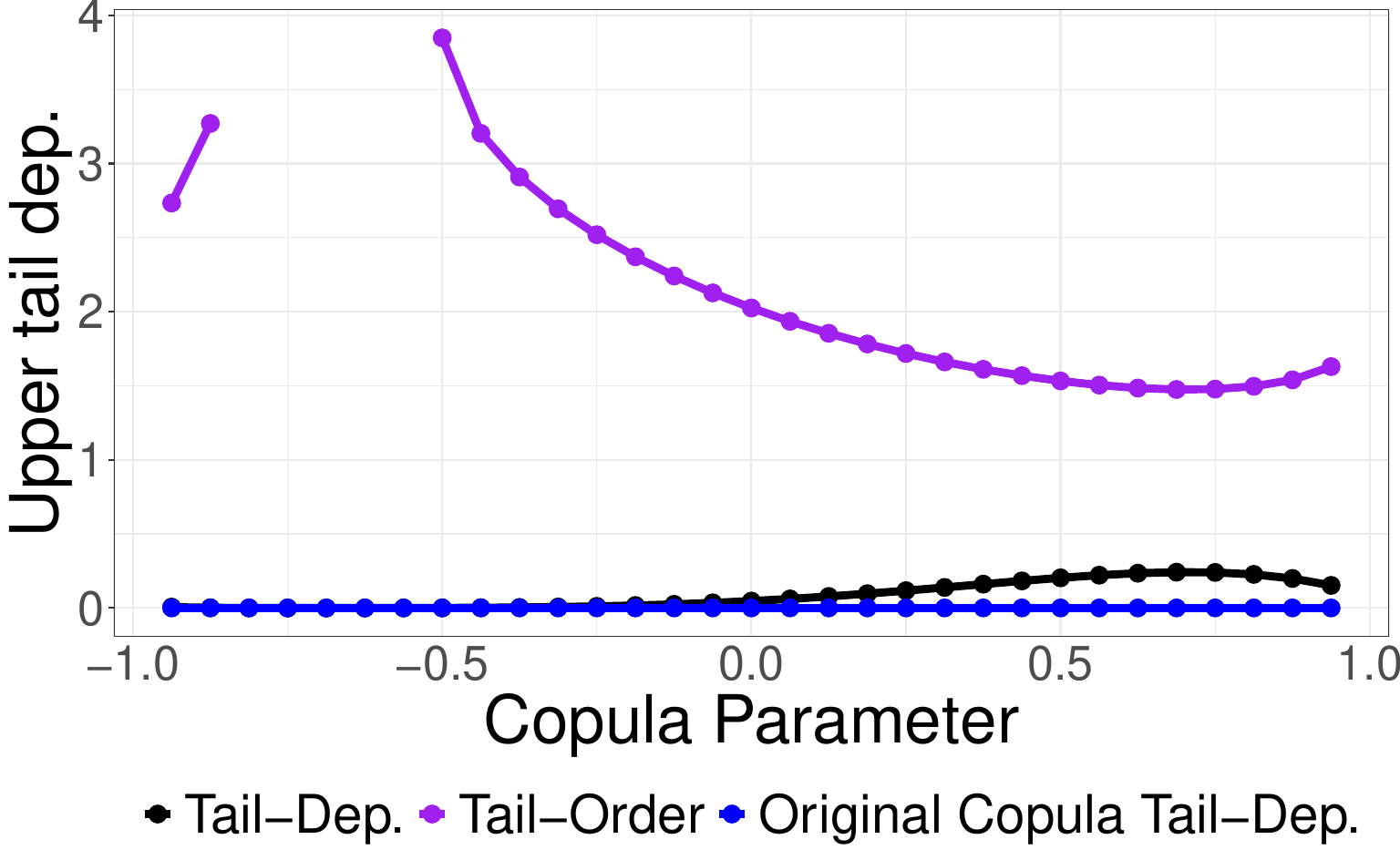}
\caption{Upper tail Gaussian-MAG$(1)$.}
\end{subfigure}
\caption{Numerical and simulation-based Spearman's rho, upper and lower \SP{5\%-tail dependence coefficients and order} of the copula of consecutive observations from a \SP{Gaussian-MAG$(1)$,} \SP{along with the dependence measures of the original copula $K_{21}$.}}
\label{Fig:DependenceMeasures_MAG(1)_Gauss}
\end{figure}

\begin{figure}
\centering
\begin{subfigure}[t]{0.3\textwidth}
\centering
\includegraphics[scale=0.2]{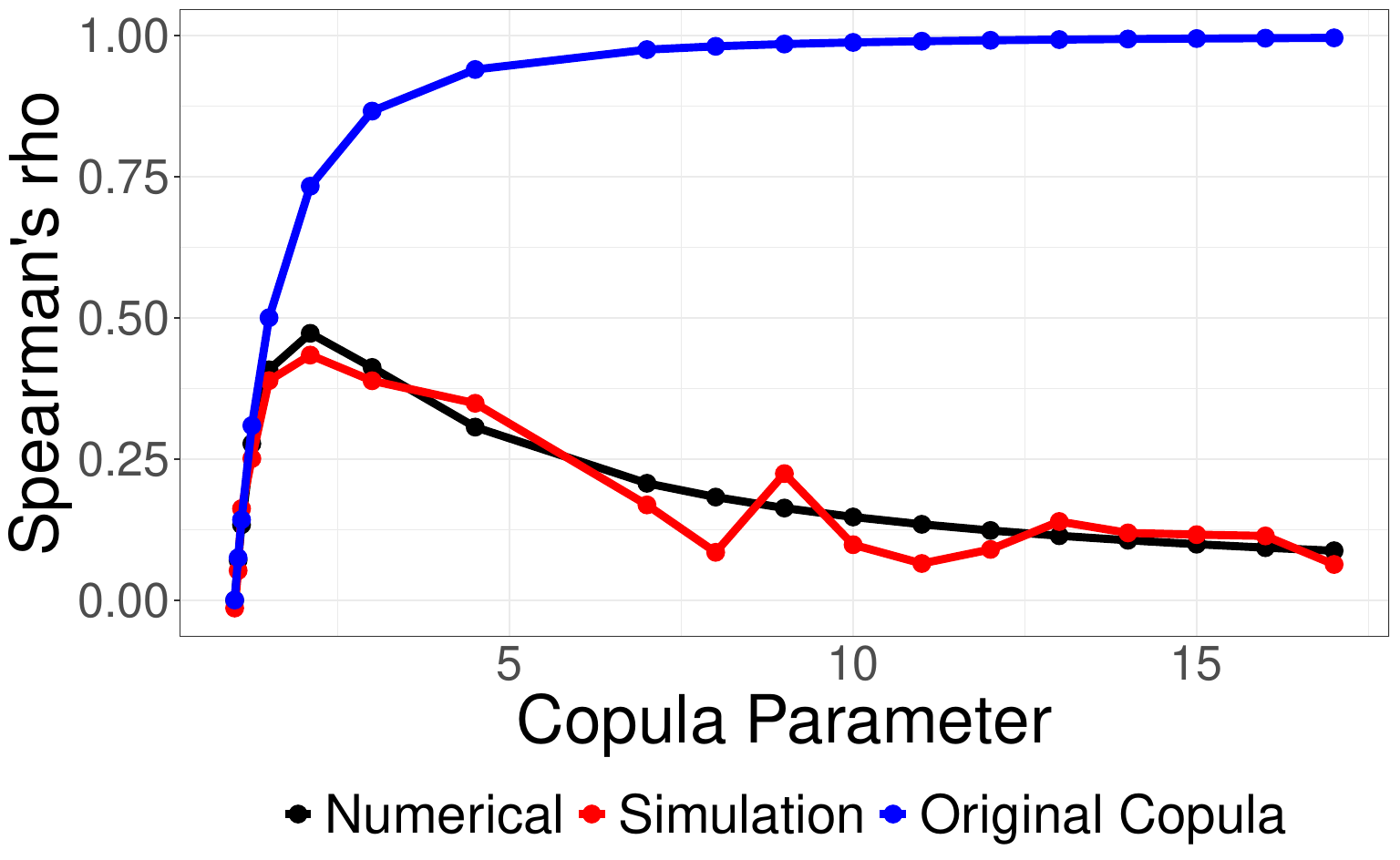}
\caption{Spearman's rho Gumbel-MAG$(1)$.}
\end{subfigure}
~
\begin{subfigure}[t]{0.3\textwidth}
\centering
\includegraphics[scale=0.2]{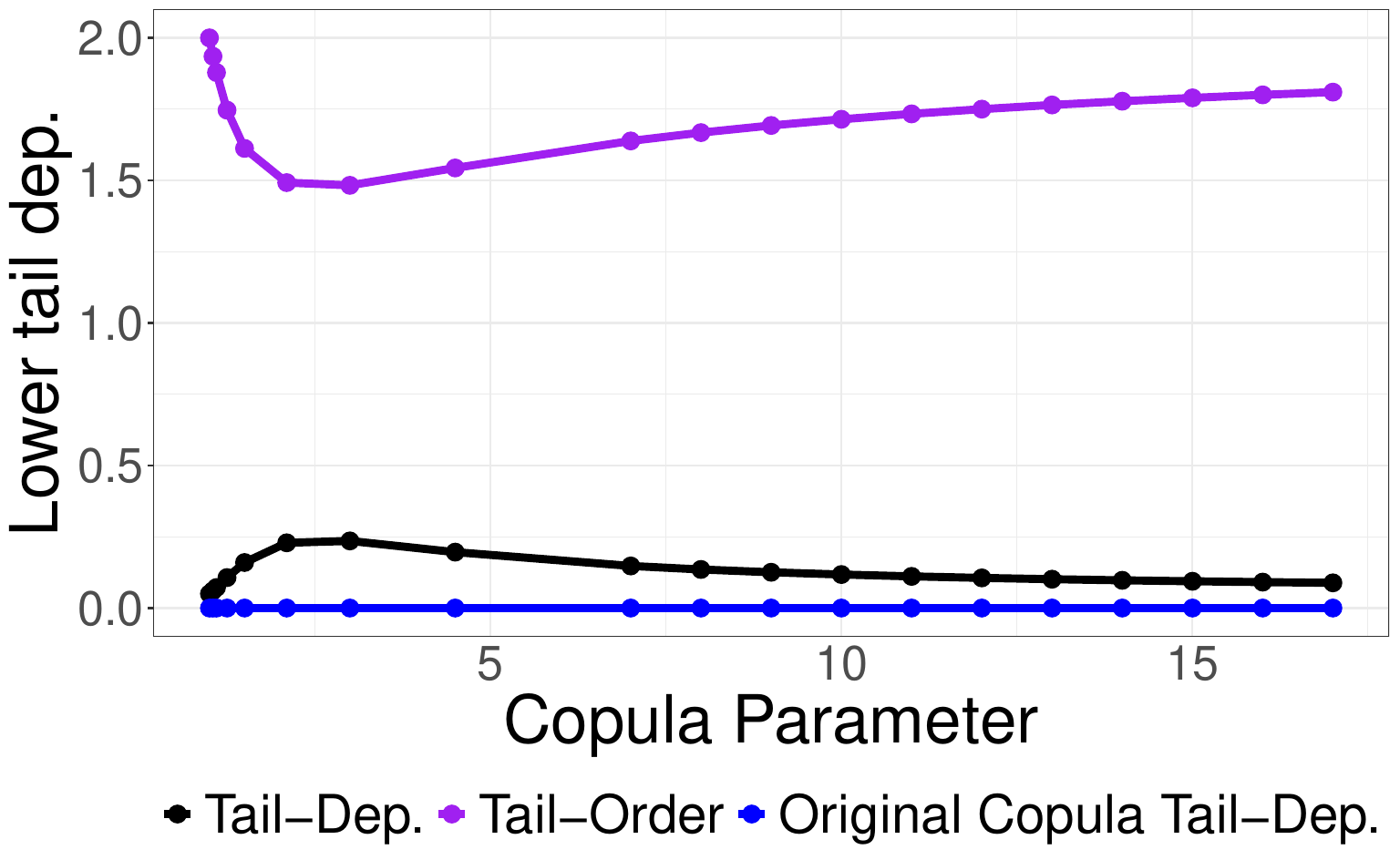}
\caption{Lower tail Gumbel-MAG$(1)$.}
\end{subfigure}
~
\begin{subfigure}[t]{0.3\textwidth}
\centering
\includegraphics[scale=0.2]{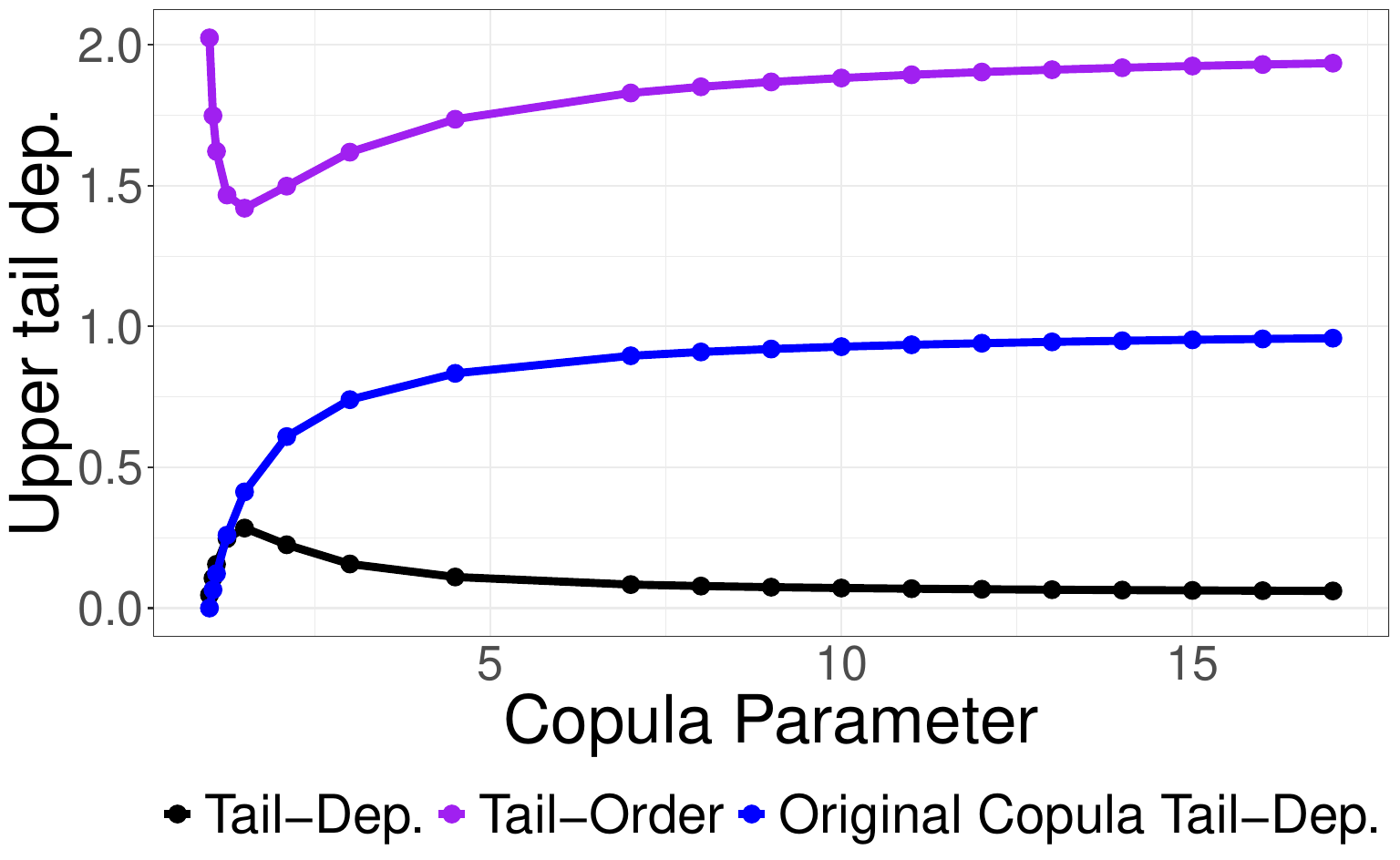}
\caption{Upper tail Gumbel-MAG$(1)$.}
\end{subfigure}
\caption{Numerical and simulation-based Spearman's rho, upper and lower \SP{5\%-tail dependence coefficients and order} of the copula of consecutive observations from a \SP{Gumbel-MAG$(1)$,} \SP{along with the dependence measures of the original copula $K_{21}$.}}
\label{Fig:DependenceMeasures_MAG(1)_Gumbel}
\end{figure}

\begin{figure}
\centering
\begin{subfigure}[t]{0.3\textwidth}
\centering
\includegraphics[scale=0.2]{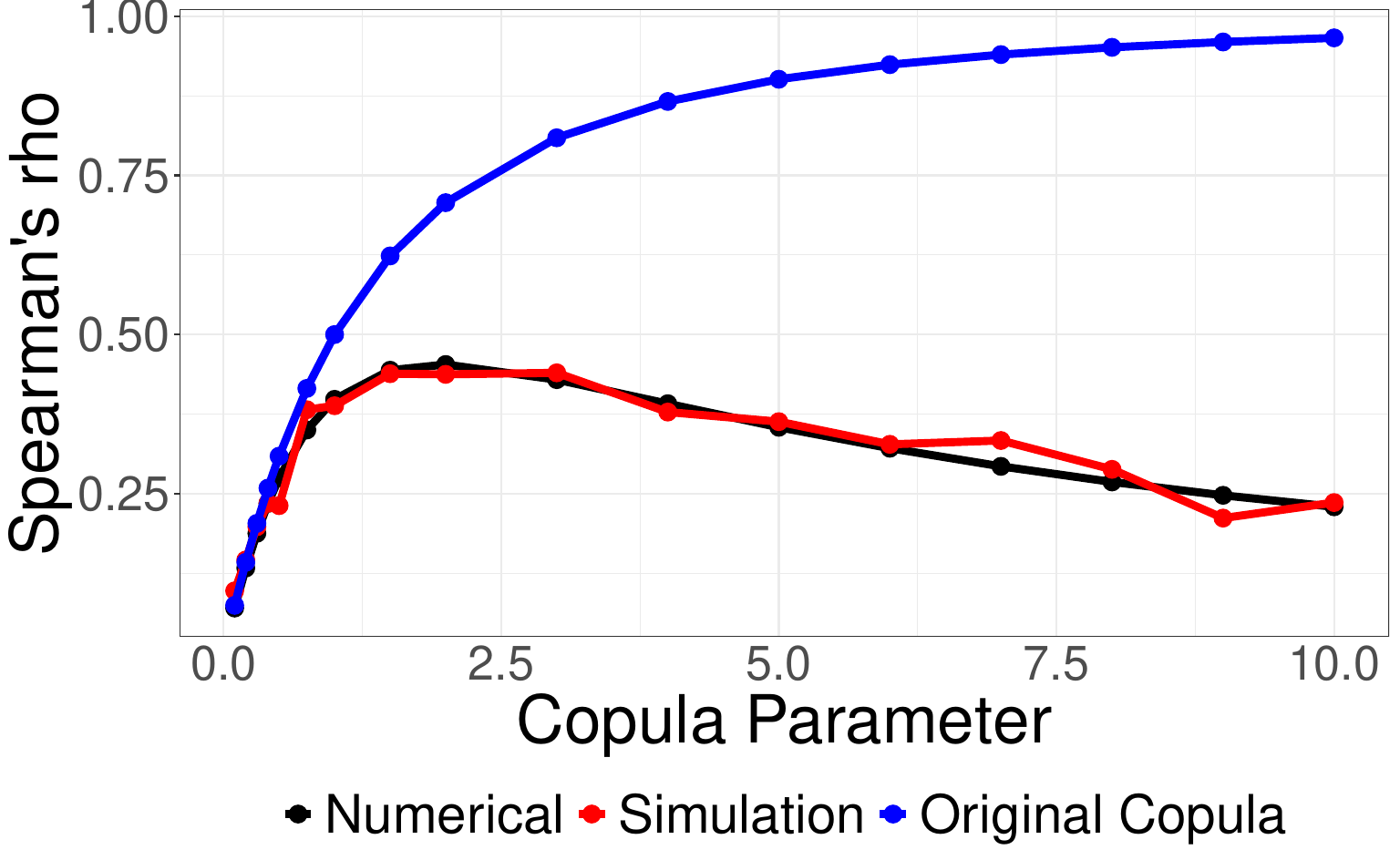}
\caption{Spearman's rho Clayton-MAG$(1)$.}
\end{subfigure}
~
\begin{subfigure}[t]{0.3\textwidth}
\centering
\includegraphics[scale=0.2]{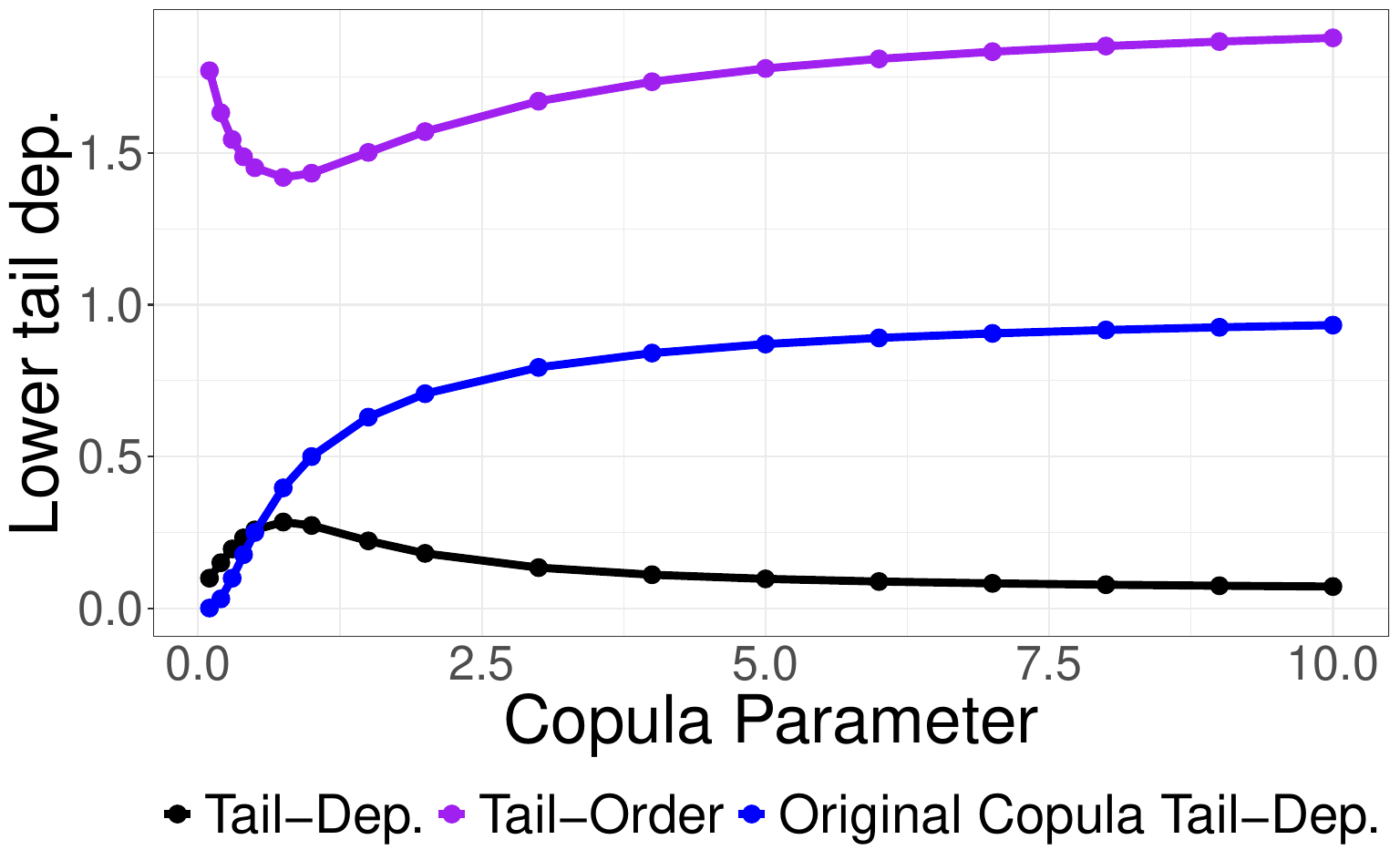}
\caption{Lower tail Clayton-MAG$(1)$.}
\end{subfigure}
~
\begin{subfigure}[t]{0.3\textwidth}
\centering
\includegraphics[scale=0.2]{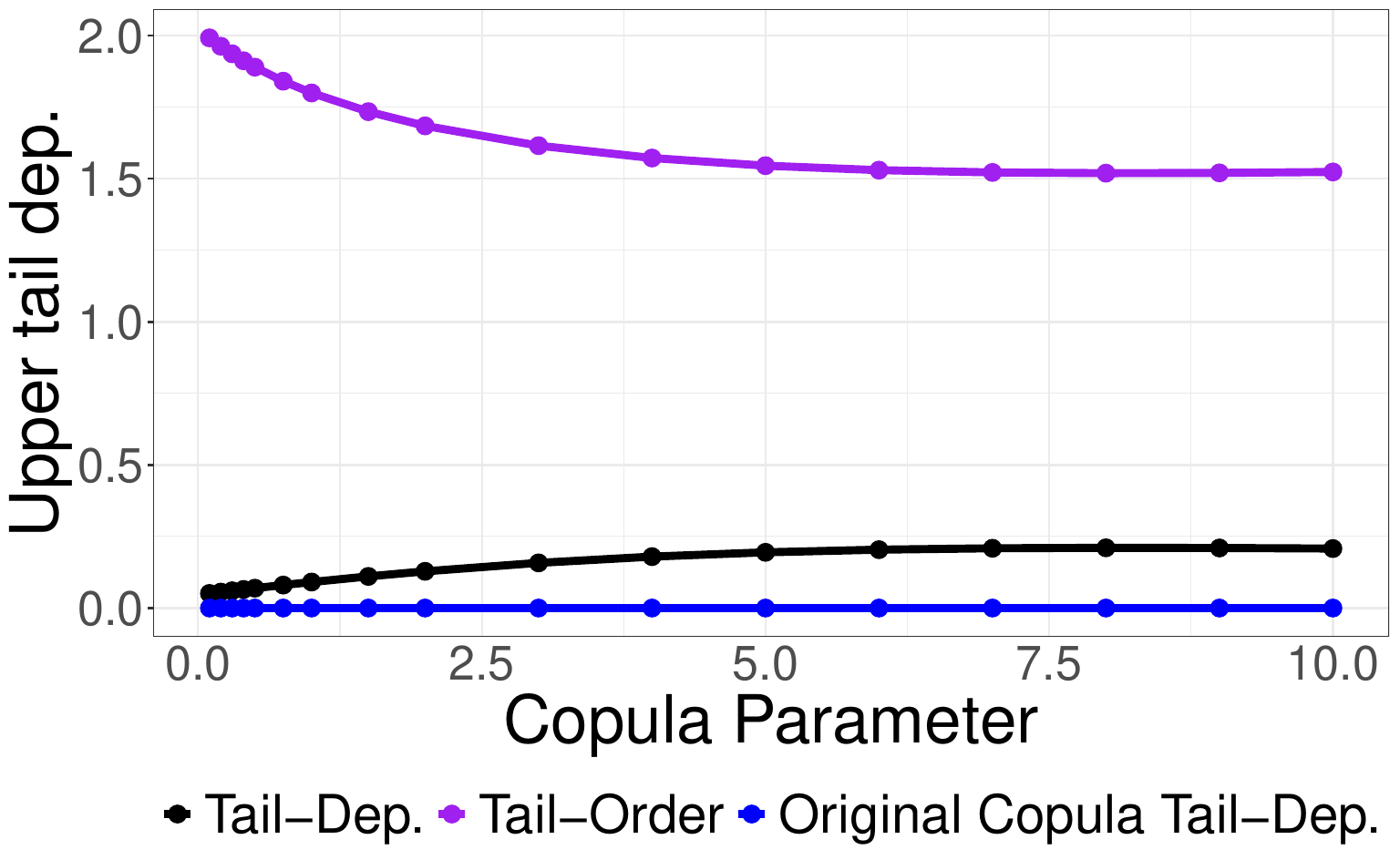}
\caption{Upper tail Clayton-MAG$(1)$.}
\end{subfigure}
\caption{Numerical and simulation-based Spearman's rho, upper and lower \SP{5\%-tail dependence coefficients and order} of the copula of consecutive observations from a \SP{Clayton-MAG$(1)$,} \SP{along with the dependence measures of the original copula $K_{21}$.}}
\label{Fig:DependenceMeasures_MAG(1)_Clayton}
\end{figure}

\begin{figure}
\centering
\begin{subfigure}[t]{0.3\textwidth}
\centering
\includegraphics[scale=0.2]{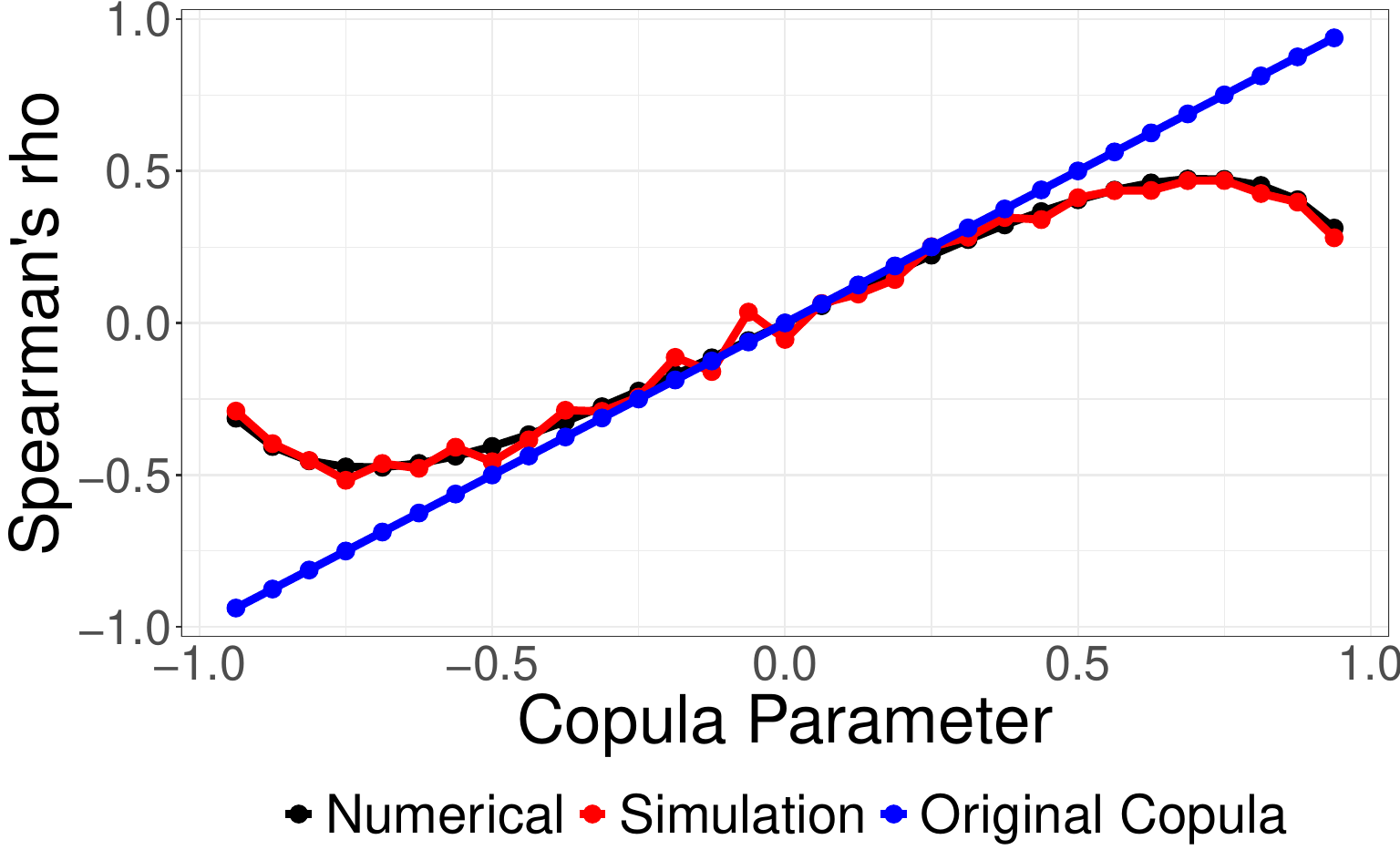}
\caption{Spearman's rho $t$-MAG$(1)$.}
\end{subfigure}
~
\begin{subfigure}[t]{0.3\textwidth}
\centering
\includegraphics[scale=0.2]{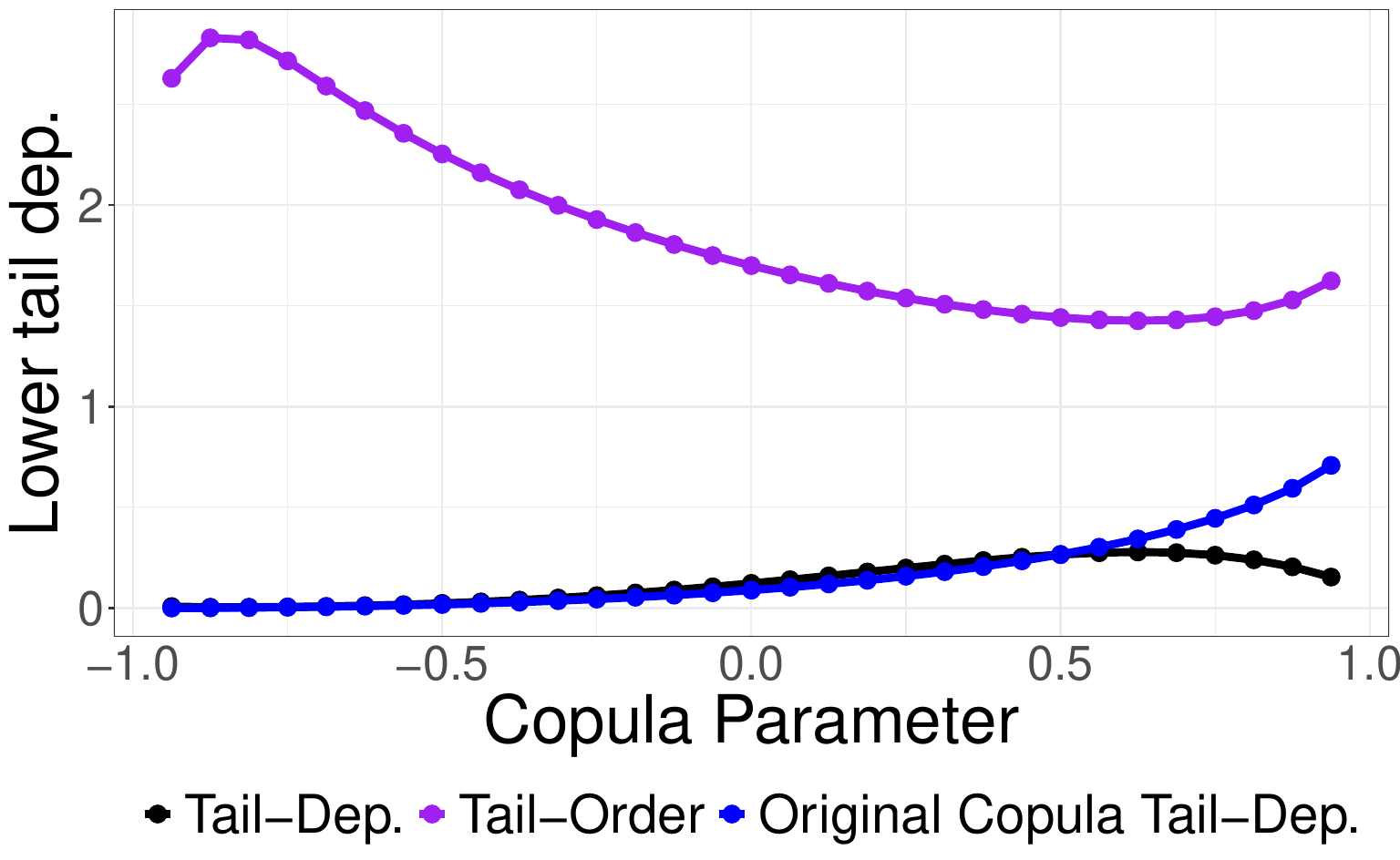}
\caption{Lower tail $t$-MAG$(1)$.}
\end{subfigure}
~
\begin{subfigure}[t]{0.3\textwidth}
\centering
\includegraphics[scale=0.2]{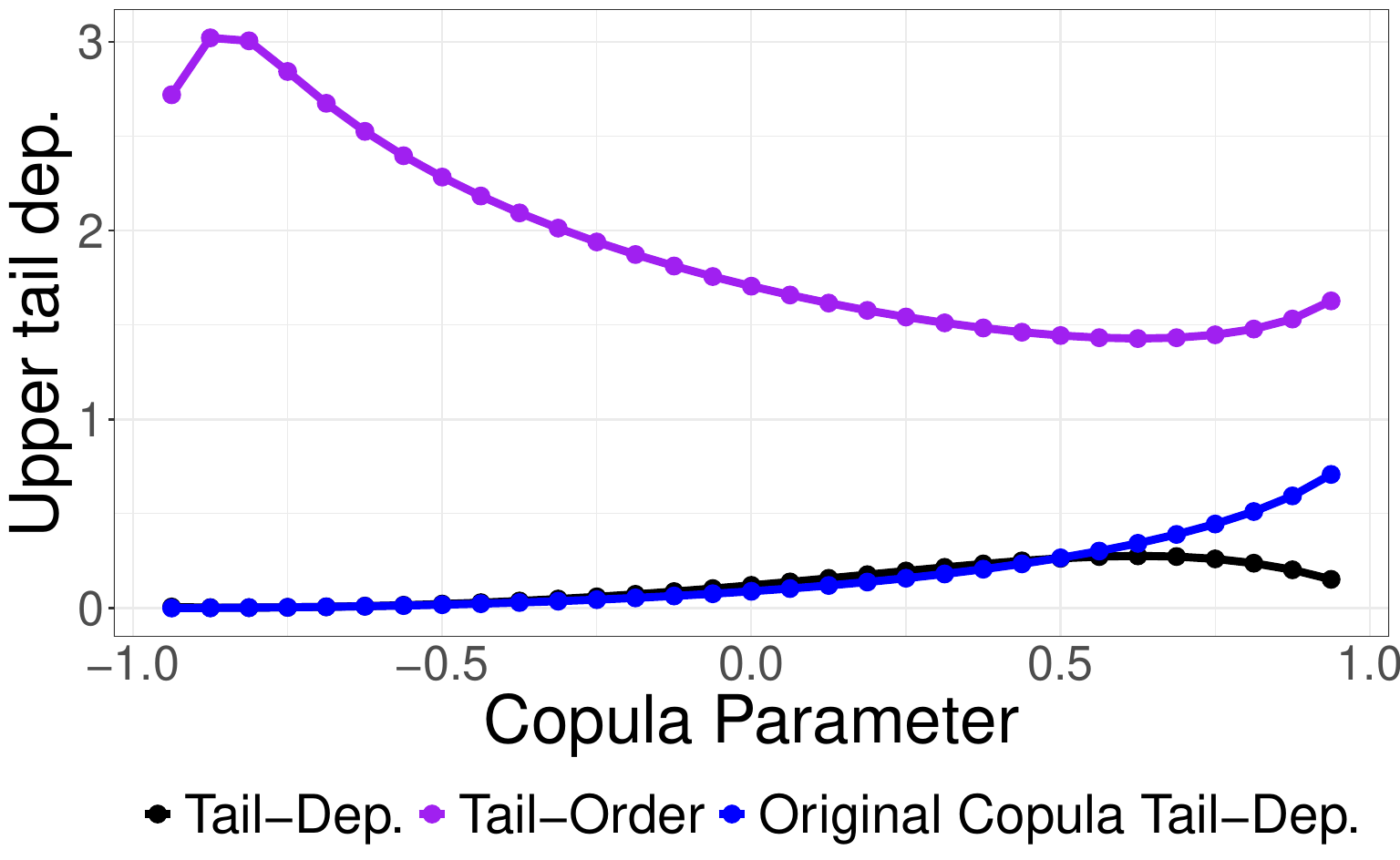}
\caption{Upper tail $t$-MAG$(1)$.}
\end{subfigure}
\caption{Numerical and simulation-based Spearman's rho, upper and lower \SP{5\%-tail dependence coefficients and order} of the copula of consecutive observations from a \SP{$t$-MAG$(1)$,} \SP{along with the dependence measures of the original copula $K_{21}$.}}
\label{Fig:DependenceMeasures_MAG(1)_t}
\end{figure}

\begin{figure}
\centering
\begin{subfigure}[t]{0.3\textwidth}
\centering
\includegraphics[scale=0.2]{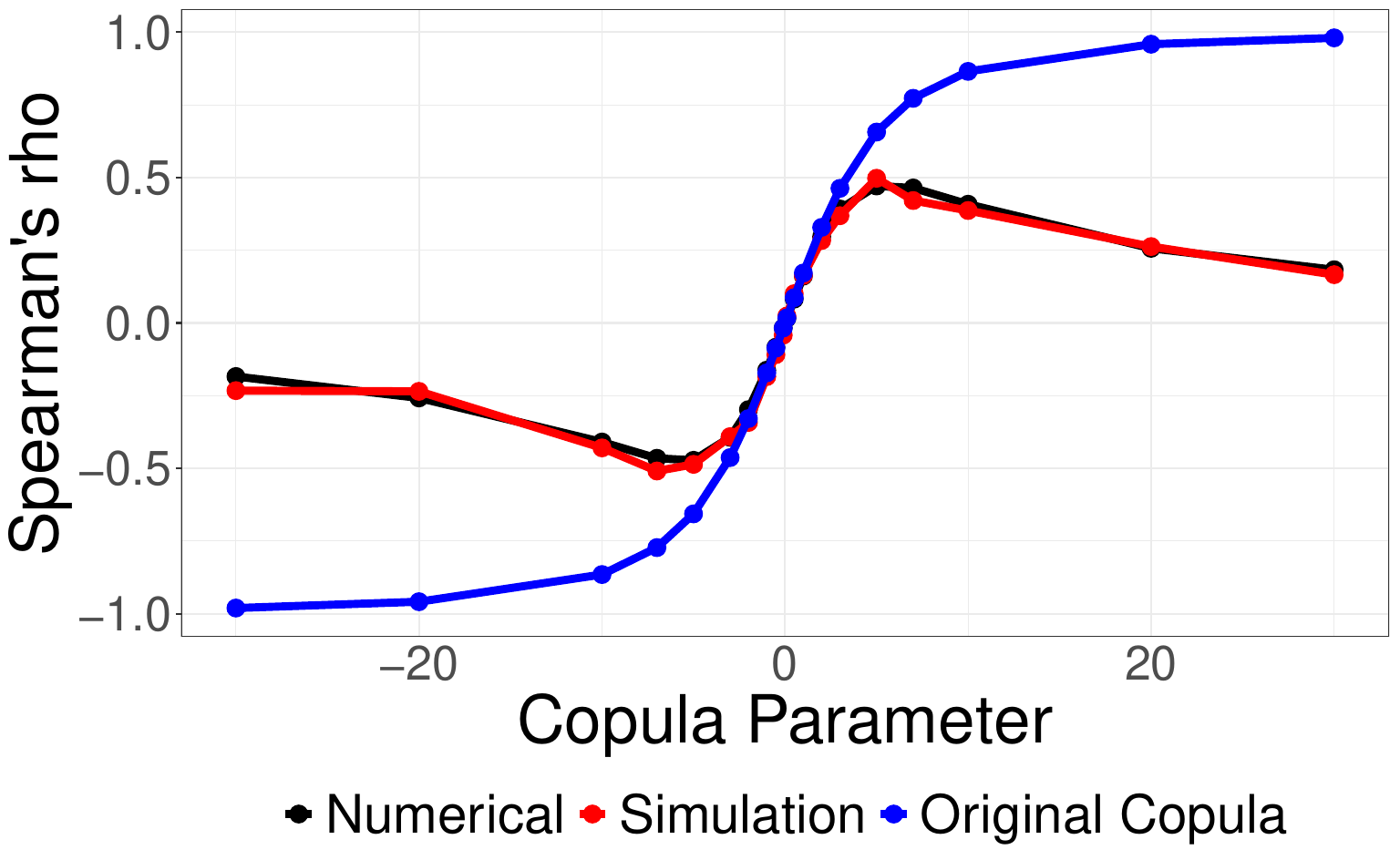}
\caption{Spearman's rho Frank-MAG$(1)$.}
\end{subfigure}
~
\begin{subfigure}[t]{0.3\textwidth}
\centering
\includegraphics[scale=0.2]{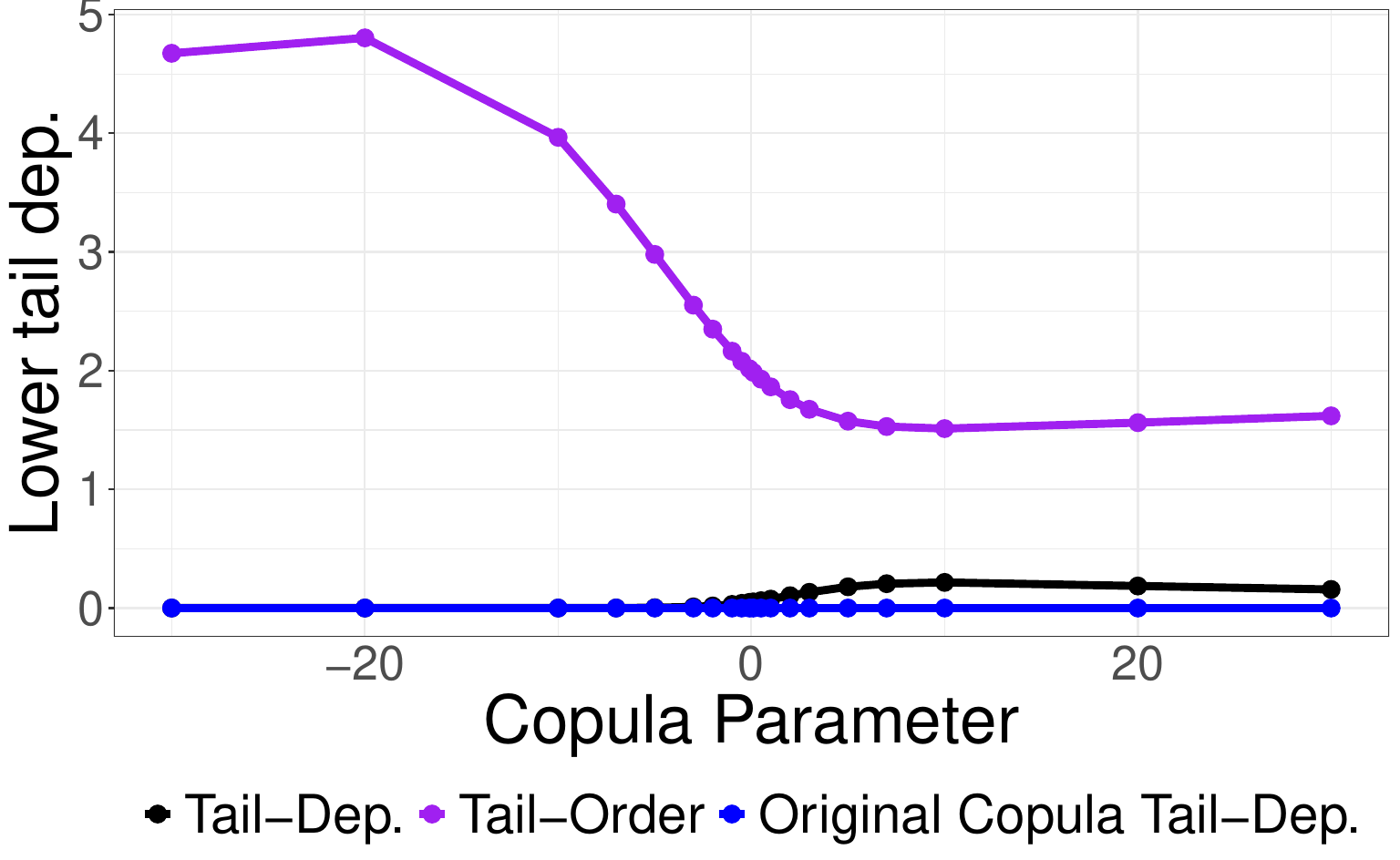}
\caption{Lower tail Frank-MAG$(1)$.}
\end{subfigure}
~
\begin{subfigure}[t]{0.3\textwidth}
\centering
\includegraphics[scale=0.2]{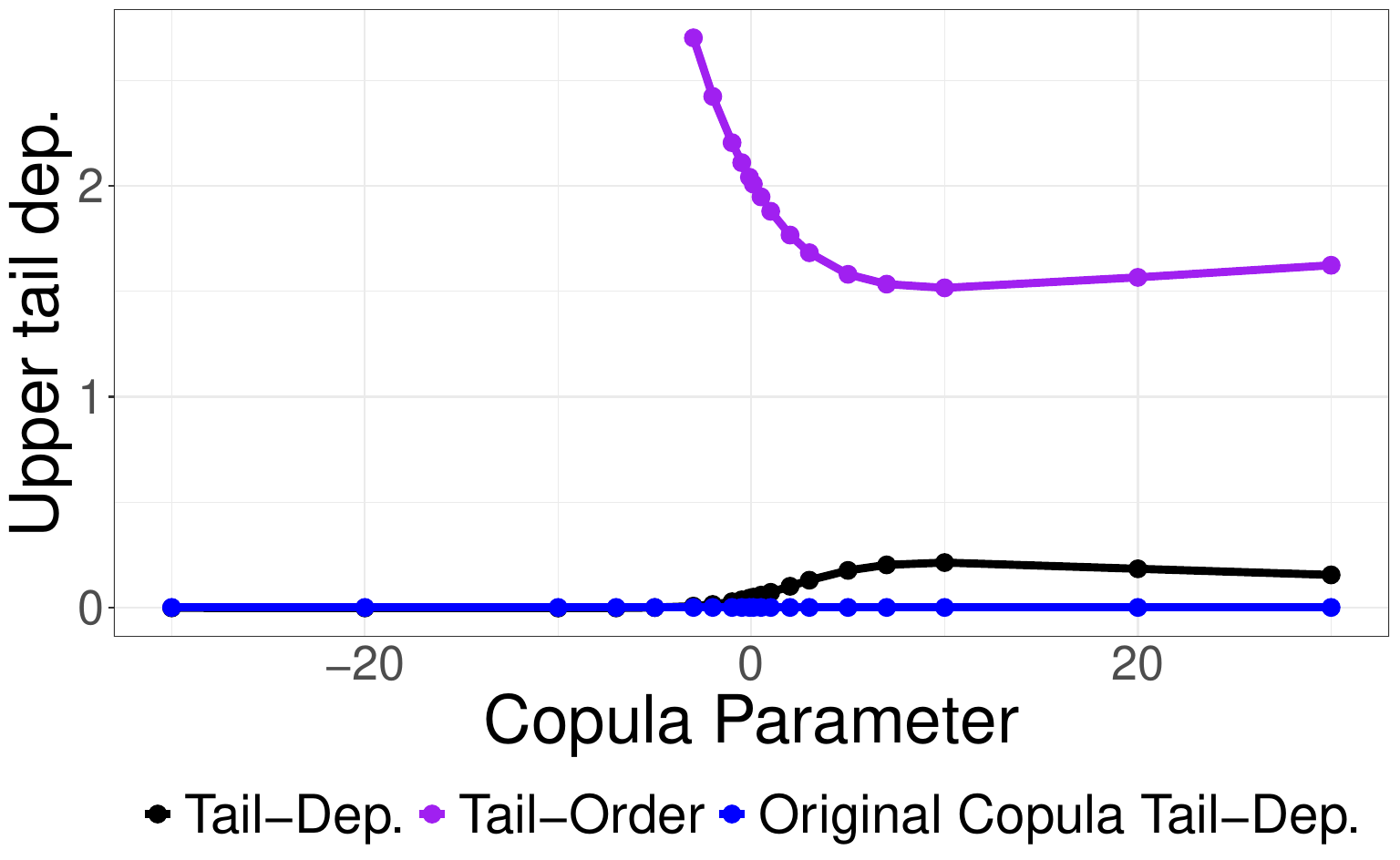}
\caption{Upper tail Frank-MAG$(1)$.}
\end{subfigure}
\caption{Numerical and simulation-based Spearman's rho, upper and lower \SP{5\%-tail dependence coefficients and order} of the copula of consecutive observations from a \SP{Frank-MAG$(1)$,} \SP{along with the dependence measures of the original copula $K_{21}$.}}
\label{Fig:DependenceMeasures_MAG(1)_Frank}
\end{figure}

\newpage

\begin{figure}
\centering
\begin{subfigure}[t]{0.3\textwidth}
\centering
\includegraphics[scale=0.2]{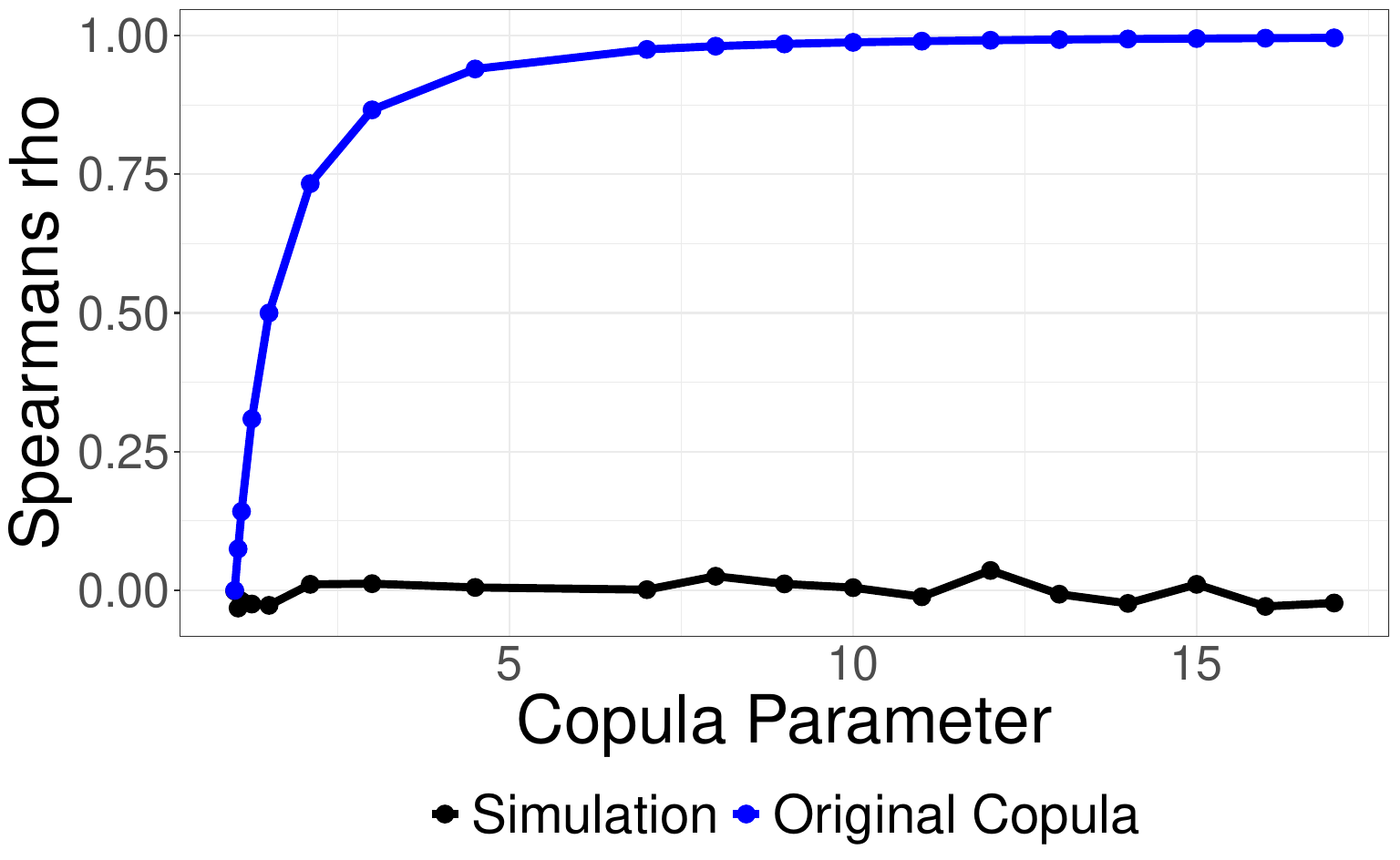}
\caption{\SP{$\theta = 0$.}}
\end{subfigure}
~
\begin{subfigure}[t]{0.3\textwidth}
\centering
\includegraphics[scale=0.2]{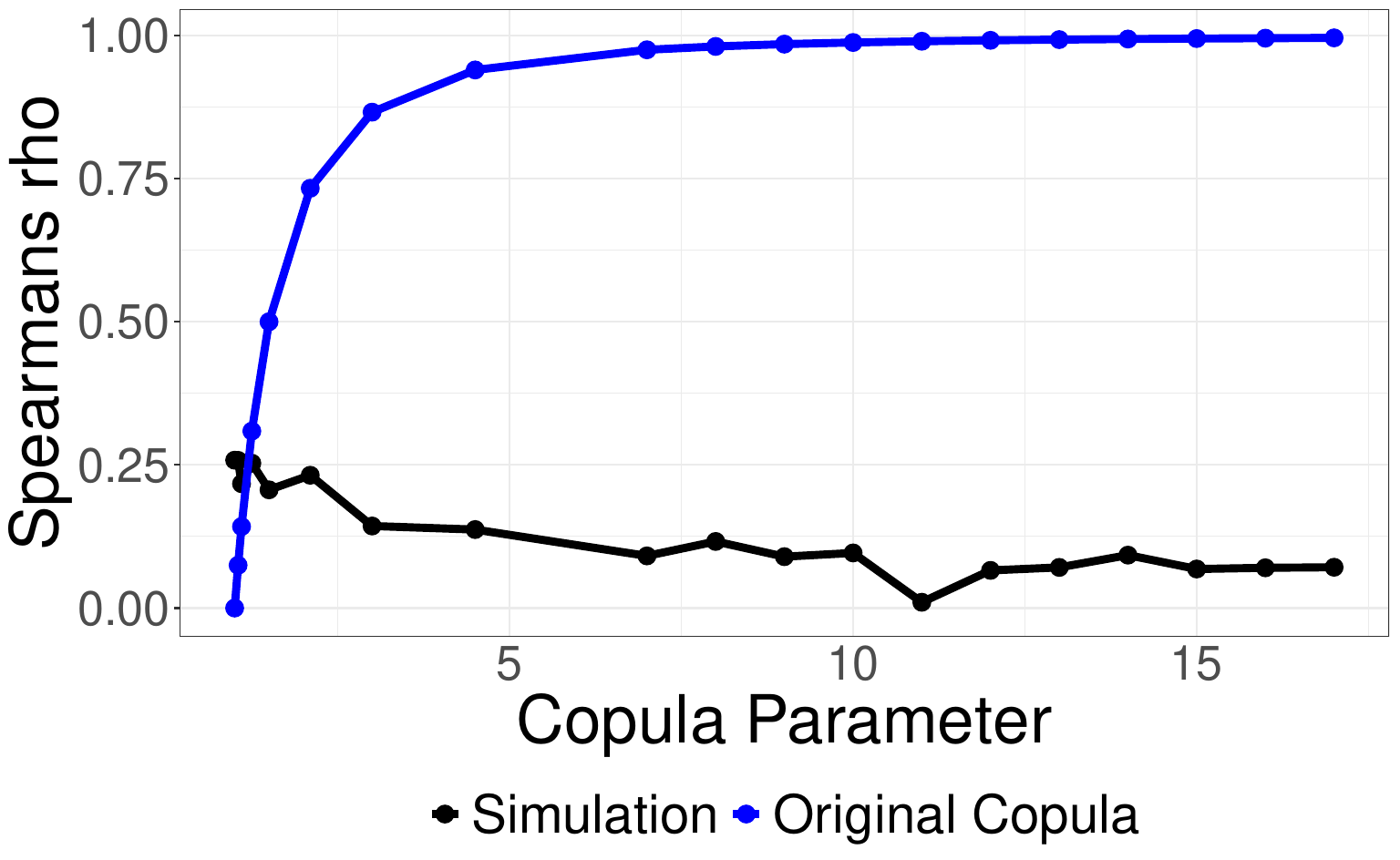}
\caption{\SP{$\theta = 0.25$.}}
\end{subfigure}
~
\begin{subfigure}[t]{0.3\textwidth}
\centering
\includegraphics[scale=0.2]{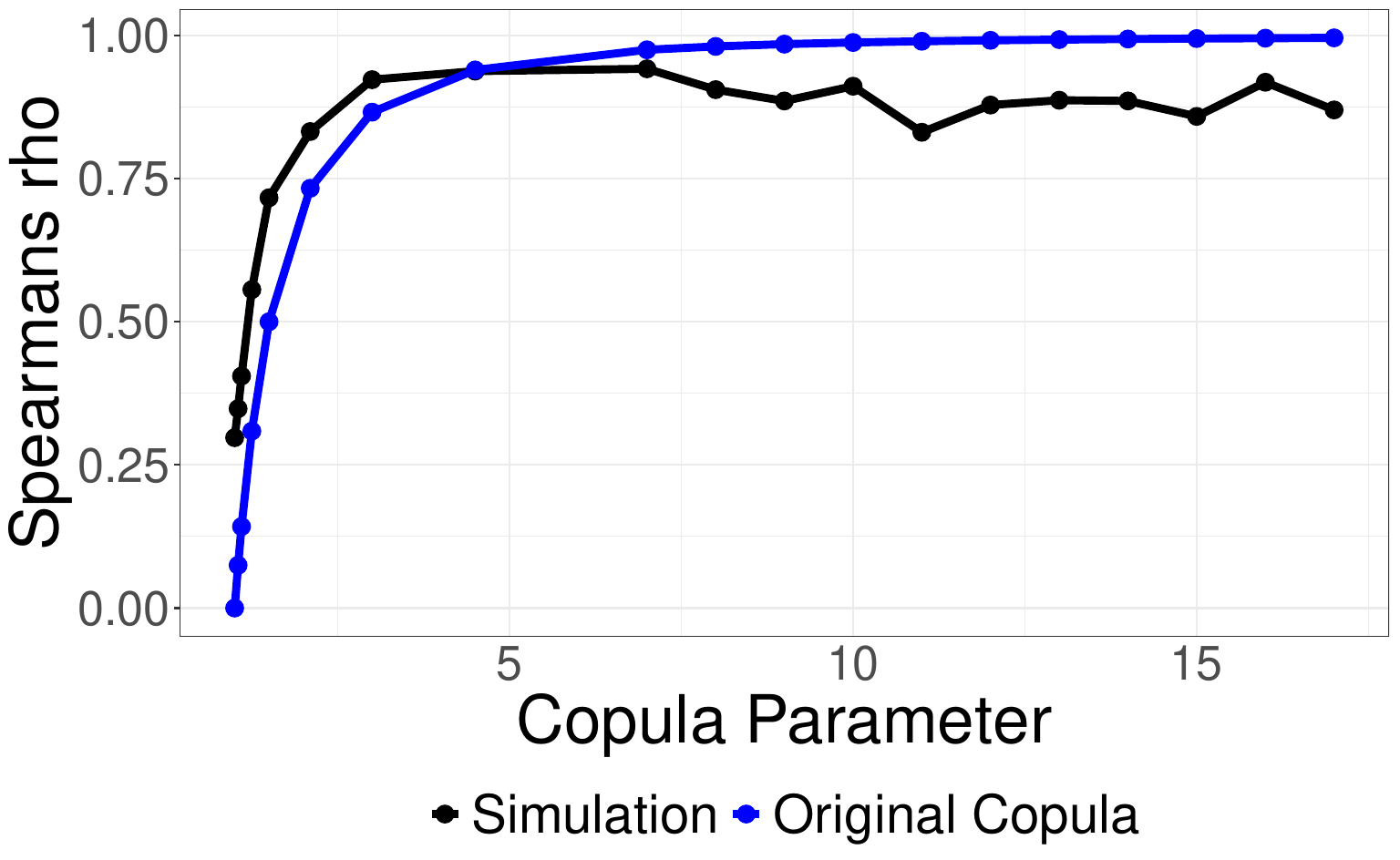}
\caption{\SP{$\theta = 0.9375$.}}
\end{subfigure}
\caption{\SP{Spearman's rho estimated from simulations of the model in Eq.~\ref{Eq:Model(p,q)_updating_Eq} with AR Gumbel, whose parameter is on the $x$-axis and MAG Gaussian copula with parameter $\theta$, along with the dependence measure values of the original copula $C_{21}$.}}
\label{Fig:ARMA(1,1)_rho}
\end{figure}

\begin{figure}
\centering
\begin{subfigure}[t]{0.3\textwidth}
\centering
\includegraphics[scale=0.2]{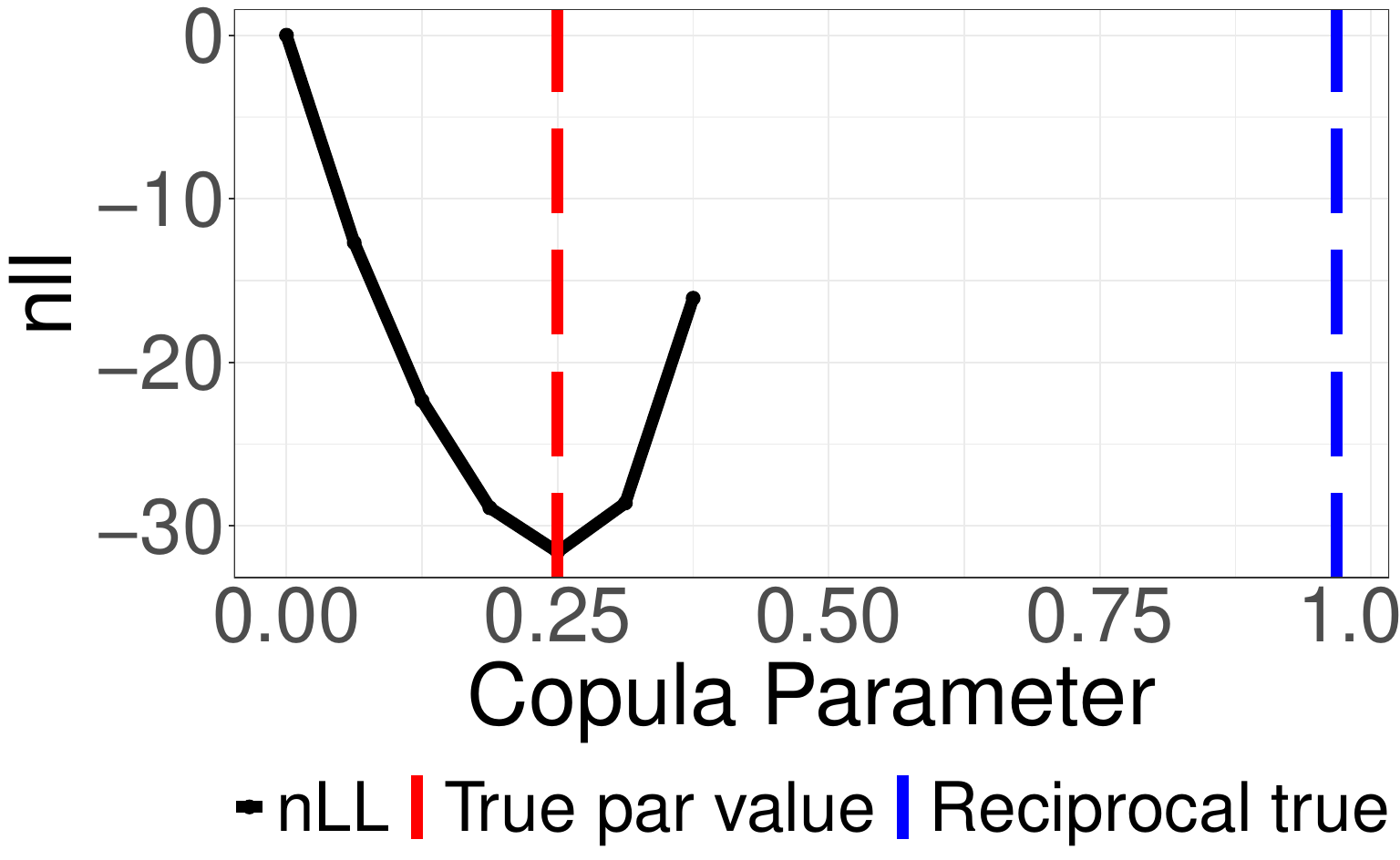}
\caption{NLL in relation to the MAG$(1)$ parameter value. The true parameter value is $\alpha = 0.25$.}
\end{subfigure}
~
\begin{subfigure}[t]{0.3\textwidth}
\centering
\includegraphics[scale=0.2]{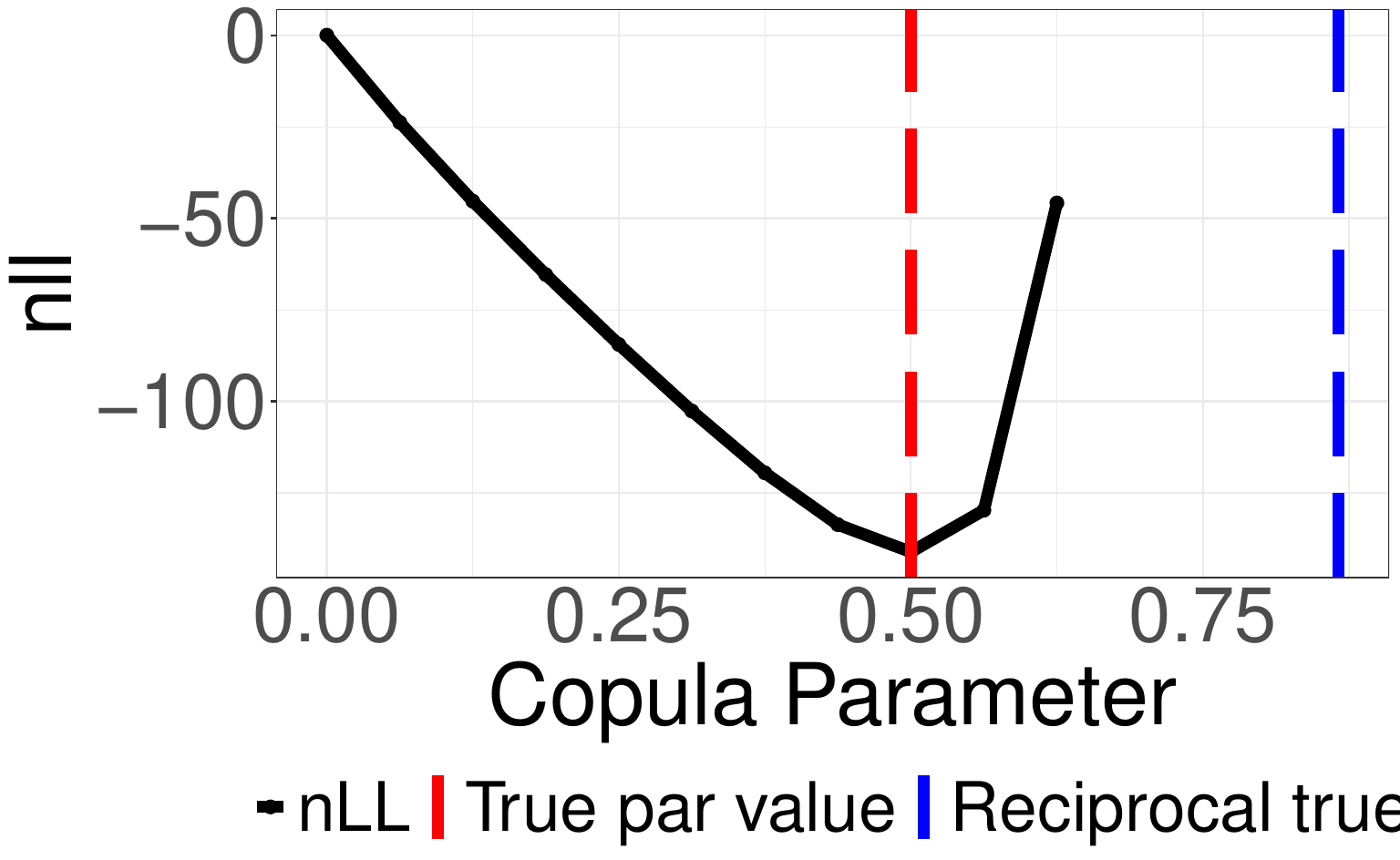}
\caption{NLL in relation to the MAG$(1)$ parameter value. The true parameter value is $\alpha = 0.5$.}
\end{subfigure}
~
\begin{subfigure}[t]{0.3\textwidth}
\centering
\includegraphics[scale=0.2]{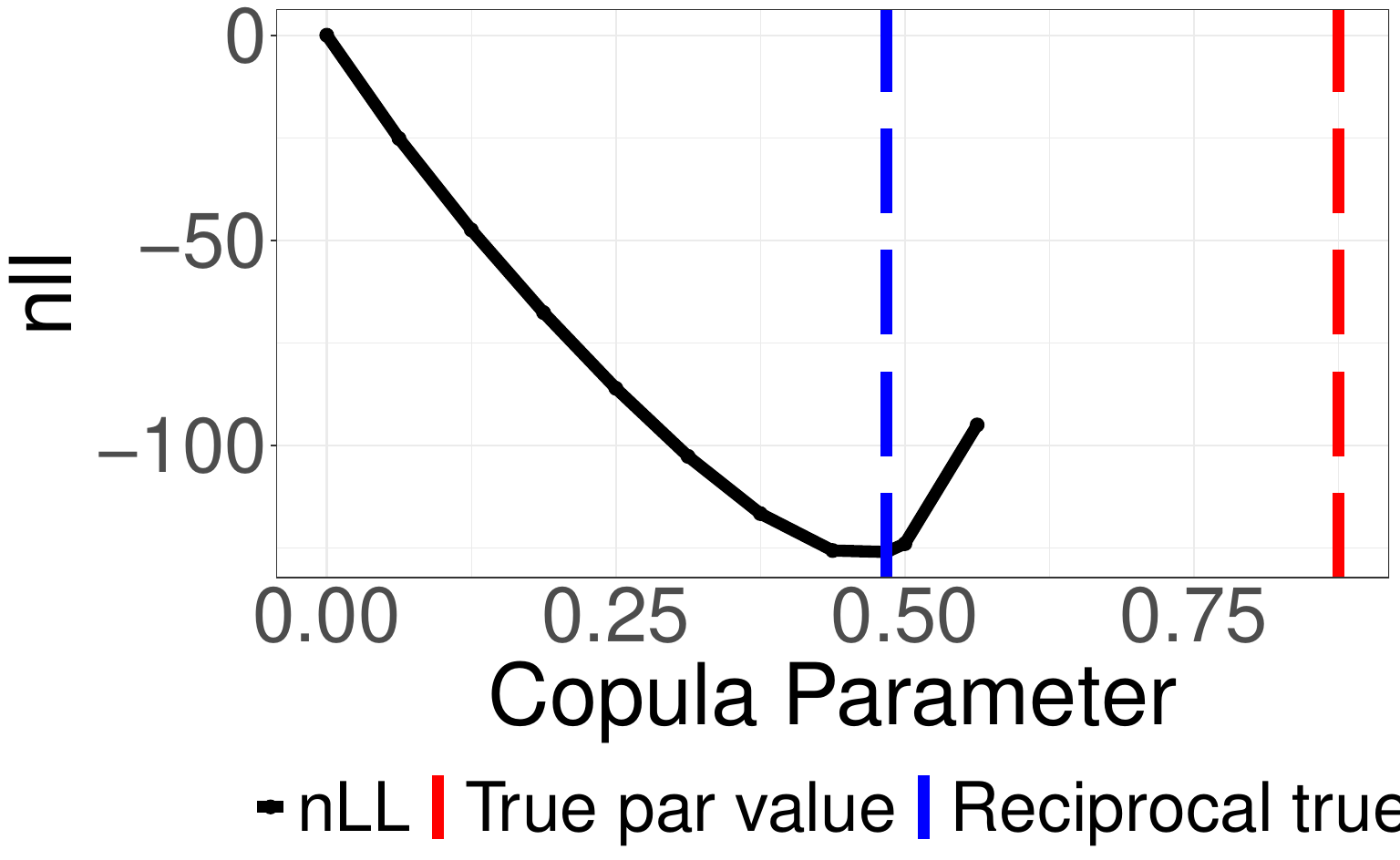}
\caption{NLL in relation to the MAG$(1)$ parameter value. The true parameter value is $\alpha = 0.875$.}
\end{subfigure}
\caption{The negative log-likelihood of a Gaussian-MAG$(1)$ model in dependence of the copula parameter. The black line is the NLL-value. The red line marks the true parameter value and the blue line the reciprocal true value, which is calculated as $\alpha^{\dagger} = \sqrt{1-\alpha^2}$.}
\label{Fig:nll_Gausian_MAG(1)}
\end{figure}

\begin{figure}
\centering
\begin{subfigure}[t]{0.3\textwidth}
\centering
\includegraphics[scale=0.2]{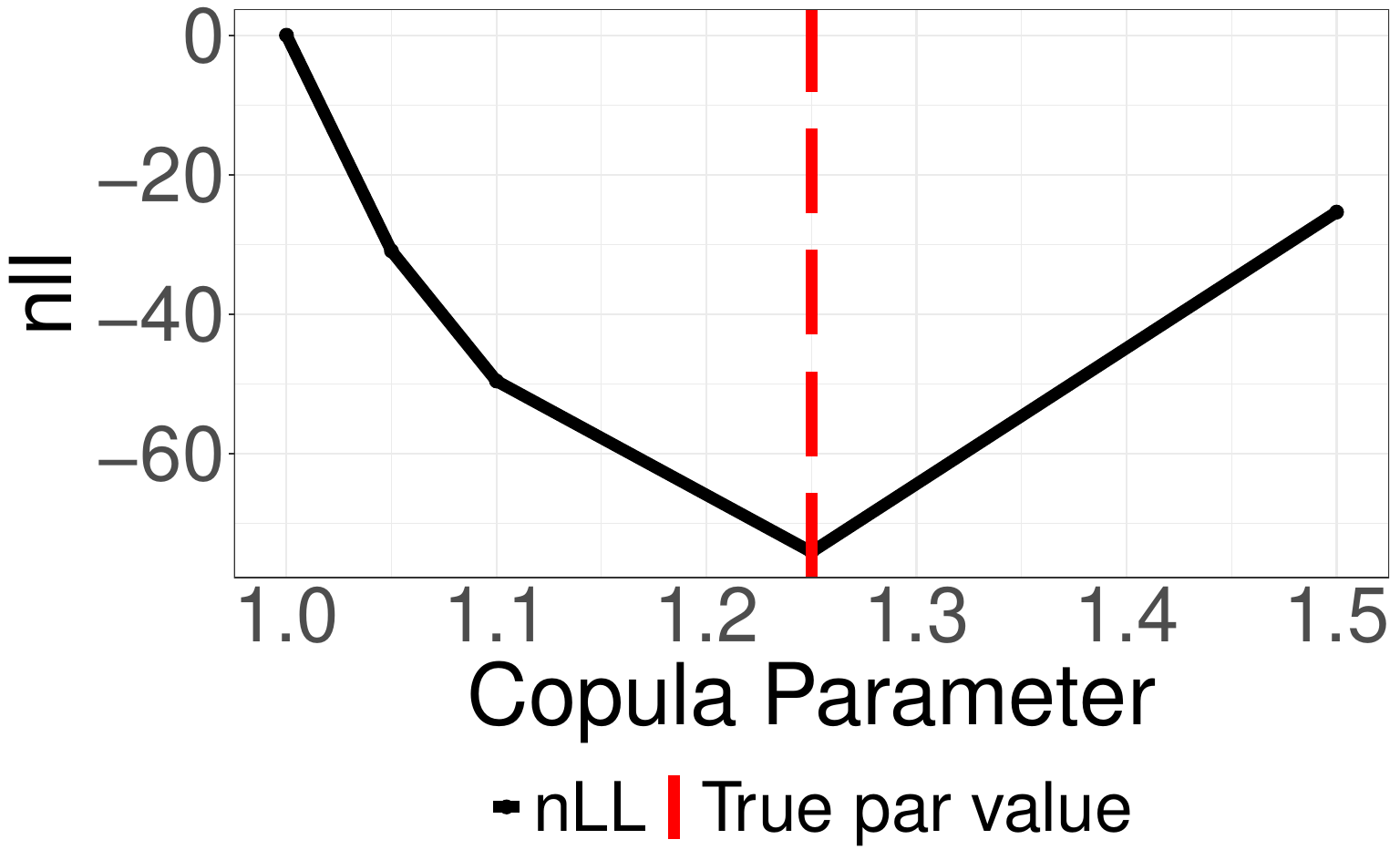}
\caption{NLL in relation to the MAG$(1)$ parameter value. The true parameter value is $\alpha = 1.25$.}
\end{subfigure}
~
\begin{subfigure}[t]{0.3\textwidth}
\centering
\includegraphics[scale=0.2]{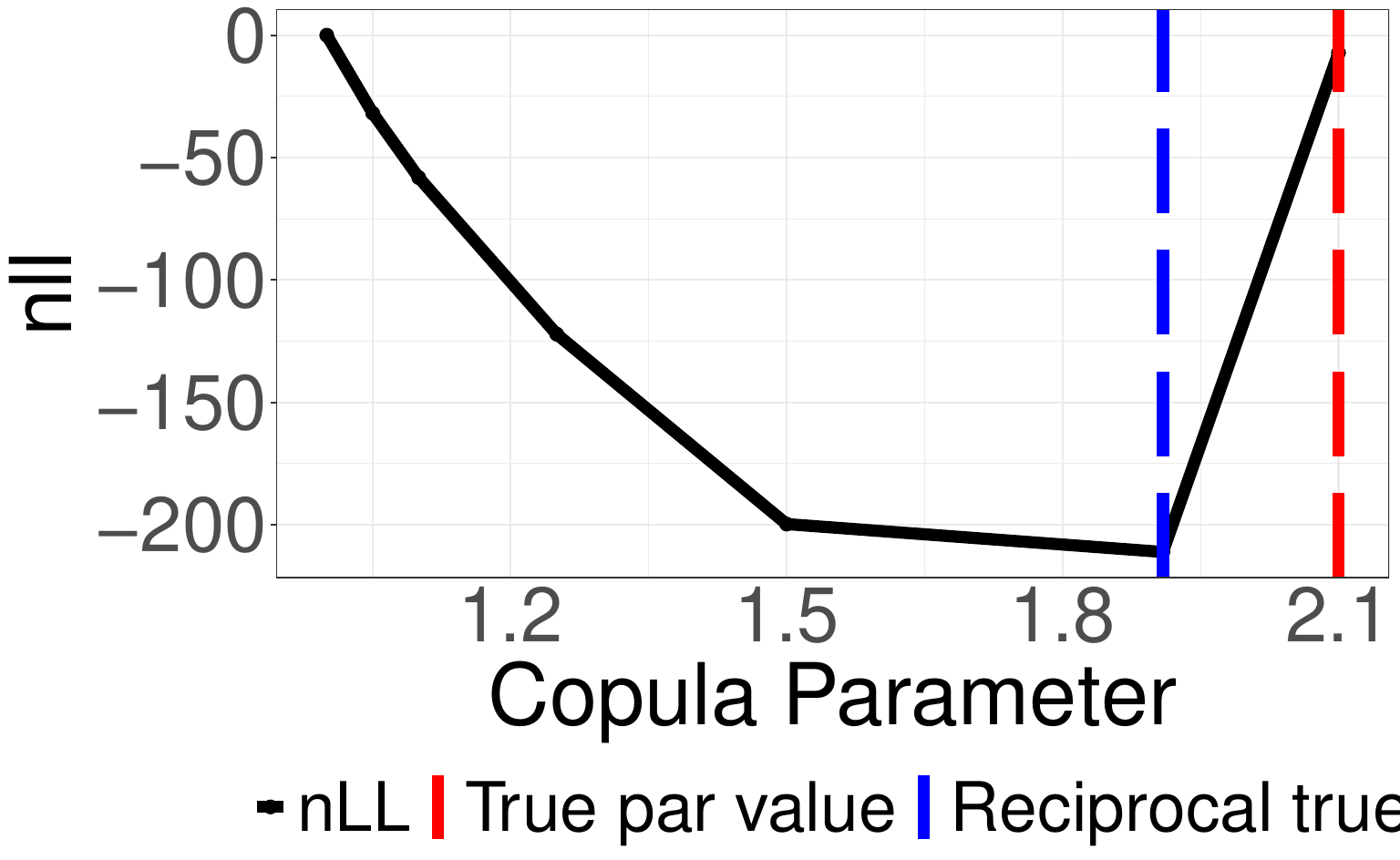}
\caption{NLL in relation to the MAG$(1)$ parameter value. The true parameter value is $\alpha = 2.1$.}
\end{subfigure}
~
\begin{subfigure}[t]{0.3\textwidth}
\centering
\includegraphics[scale=0.2]{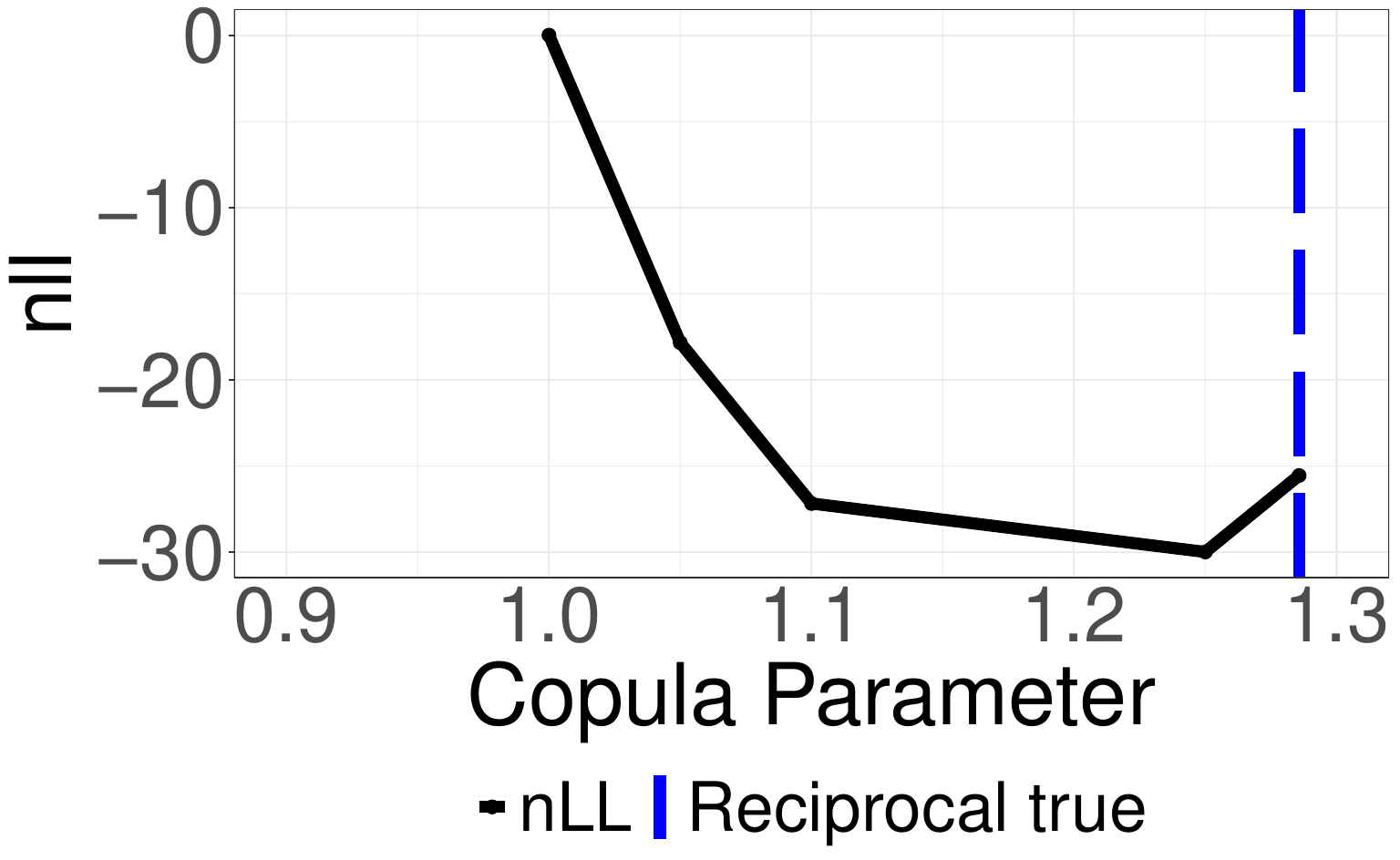}
\caption{NLL in relation to the MAG$(1)$ parameter value. The true parameter value is $\alpha = 4.5$.}
\end{subfigure}
\caption{The negative log-likelihood of a Gumbel-MAG$(1)$ model in dependence of the copula parameter. The black line is the NLL-value. The red line marks the true parameter value and the blue line the \SP{approximate reciprocal true value defined above.}}
\label{Fig:nll_Gumbel_MAG(1)}
\end{figure}

\begin{figure}
\centering
\begin{subfigure}[t]{0.3\textwidth}
\centering
\includegraphics[scale=0.2]{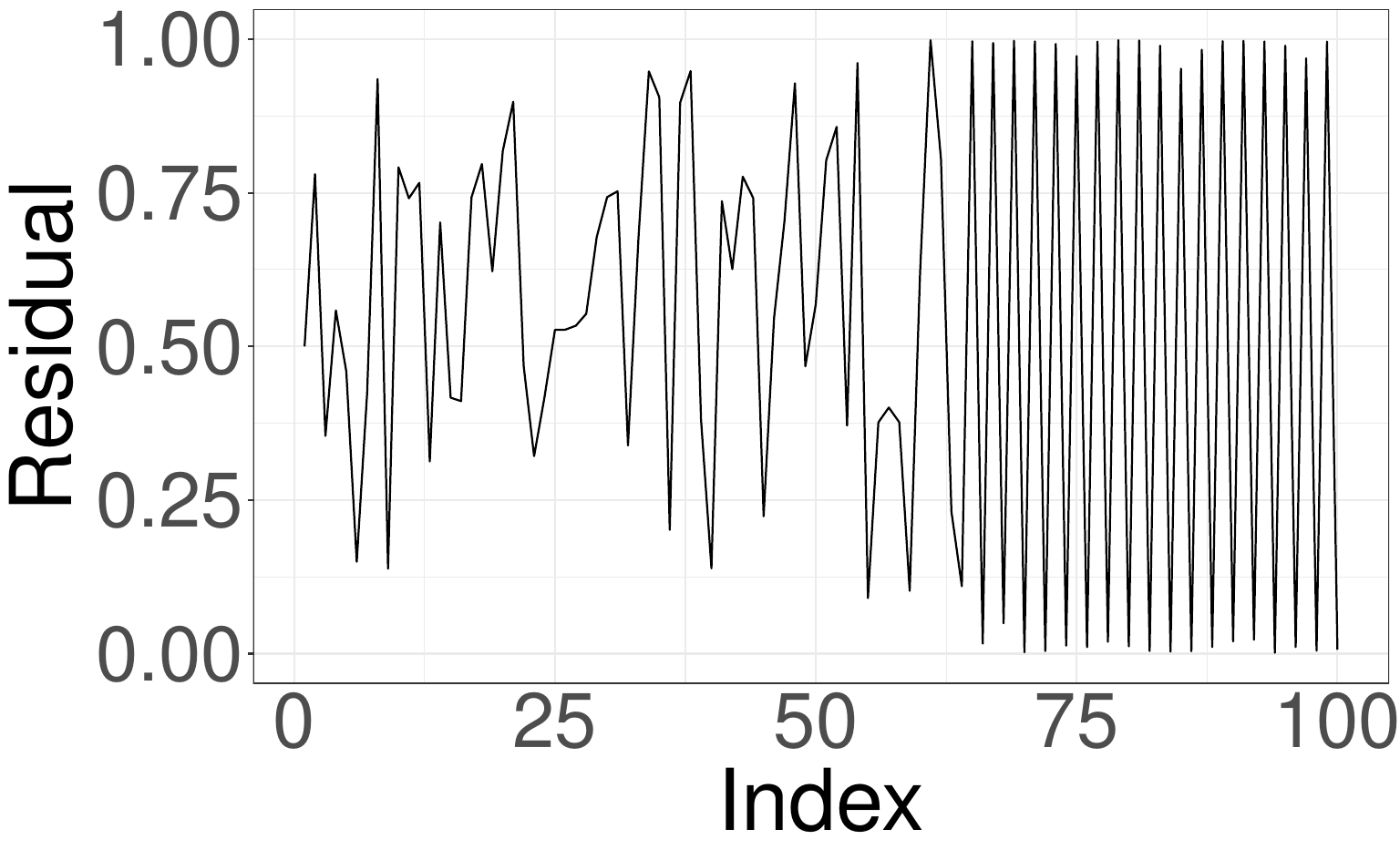}
\caption{NLL in relation to the MAG$(1)$ parameter value. The true parameter value is $\alpha = 1.25$.}
\end{subfigure}
~
\begin{subfigure}[t]{0.3\textwidth}
\centering
\includegraphics[scale=0.2]{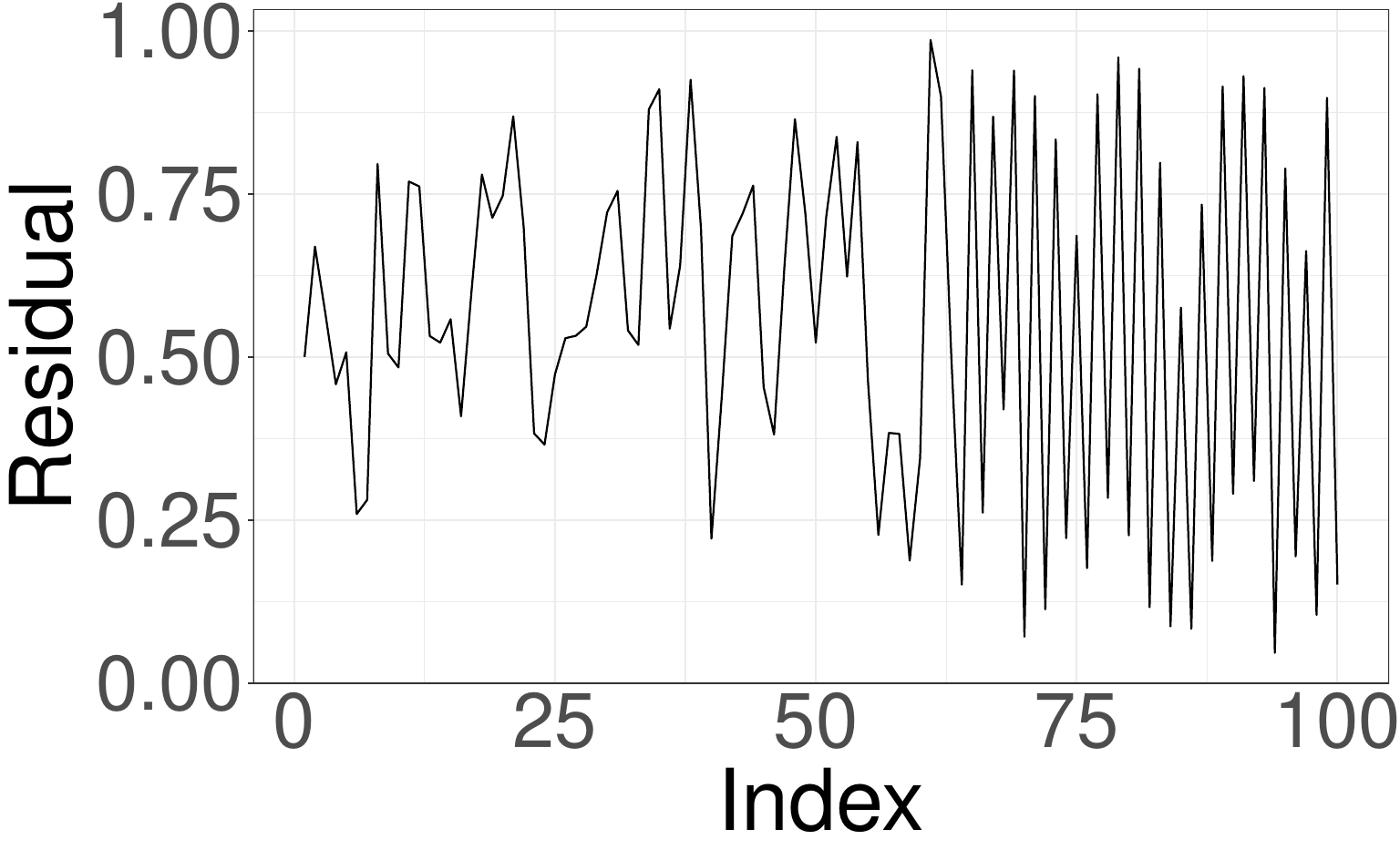}
\caption{NLL in relation to the MAG$(1)$ parameter value. The true parameter value is $\alpha = 2.1$.}
\end{subfigure}
\caption{Residuals of the Gaussian-MAG$(1)$ process. The left plot shows the residuals calculated with the true parameter value, which lies above the critical threshold. The left plot shows the residuals under the reciprocal of the true value.}
\label{Fig:MAG(1)_residuals}
\end{figure}

%\begin{figure}
%\centering
%\includegraphics[scale=0.2]{Paper_MA1_likelihood_Gumbel.pdf}
%\includegraphics[scale=0.2]{Paper_MA1_likelihood_Clayton.pdf}
%\end{figure}

\clearpage

\section{Forecasting Real Data}
\label{Sec:Forecasting}
We use the \SP{model from Eq.~\ref{Eq:Model(p,q)_updating_Eq}} to obtain probabilistic forecasts for univariate time series. Similar to the calculation of the likelihood, the construction of forecasts needs iterative calculations. In Algorithm~\ref{Algo:PARMA_forecast}, we present an algorithm that produces one-step ahead percentile forecasts for new data points. The iterative estimation of the latent processes $\{W_t\}$ and $\{\varepsilon_t\}$ is the same as for the likelihood calculation. Given these estimates the percentiles can be obtained by applying the conditional quantile function corresponding to the MAG-copula $K_{(p+1),\hdots,1}$.  We forecast two real time series. The first time series is quarterly US inflation. We take the \texttt{cpi} time series from the \texttt{tscopula} package in \texttt{R} and transform it to growth rates to calculate the inflation. The package is provided by \cite{mcneil2022time}. US inflation is also modeled with copula-based time series by \cite{mcneil2022time} and \cite{pappert2024moving}. \cite{bladt2025semiparametric} also forecast the time series. The second time series is German wind power production. The original data set by \cite{becker2017completion} comprises observations from different wind power farms in Germany with a 15min frequency. We aggregate these time series to daily production for the whole of Germany.
In both forecasting studies we split the time series into training, validation and test sets. The Gaussian-ARMA models, which serve as benchmarks, are selected by AIC using the \texttt{auto.arima} function from the \texttt{forecast} package in \texttt{R}, \SP{see \cite{hyndman2008automatic}.} Selection is performed by grid-search and by stepwise selection. Additionally we use the Gaussian-ARMA$(1,1)$ and Gaussian-ARMA$(0,0)$ models as benchmarks. Additionally we use the classic copula-based time series as benchmark. For the selection of model order we again rely on automatic selection. We experiment with different copulas and chose the combination which performs best on the validation data. The same holds for the \SP{model from Eq.~\ref{Eq:Model(p,q)_updating_Eq}}. The naming scheme of the models follows the naming by \cite{pappert2024moving}. First, the marginal specification is stated, ''kde'' for kernel-density estimation (from the \texttt{R}-package \texttt{kde1d}) and ''n'' for the normal distribution. Then AR$(p)$ denotes the copula-based time series with model order $p$ and \SP{''CoARMA$(p,q)$'' is short for the model from Eq.~\ref{Eq:Model(p,q)_updating_Eq} with model order $p$ and $q$.} The letters in the parenthesis indicate which copulas are used. E.g. (c180,i,n,c180,n) means that the AR$(1)$ copula is the 180$^\circ$-rotated Clayton copula, the AR$(2)$ copula is the independence copula and so on. ''n'' stands for the Gaussian (or normal) copula, ''g'' for the Gumbel copula and ''t'' for the $t$-copula. We note that the copula-based models are estimated in a two-step estimation scheme. First, the marginal distribution is estimated and second, the copula parameters are estimated on the probability integral transformed observations.
\begin{minipage}{\textwidth}
\begin{algorithm}[H]
\caption{Forecasting \SP{with the model from Eq.~\ref{Eq:Model(p,q)_updating_Eq}} based on \texttt{rvinecopulib}. \SP{\texttt{olddata} is the data that the model is trained on and new \texttt{newdata} the data on which the forecasting study in conducted. $F$ is a CDF estimated on the training data.}}\label{Algo:PARMA_forecast}
%\begin{algorithmic}
\SetKwInOut{Input}{input}
\SetKwInOut{Output}{output}
\Input{\texttt{olddata}, \texttt{newdata}, \texttt{model{\_}AR}, \texttt{model{\_}MA}, \SP{$p$, $q$} and $F$}% $m$, $F$}
\Output{matrix $\hat{U}$}
$e,x, W \gets \texttt{vector()}$ \\
%$p \gets \texttt{length(family{\_}AR)}$\\
%$q \gets \texttt{length(family{\_}MA)}$\\
$r \gets \max(p,q)$\\
$l, n \gets \texttt{length(olddata)}, \texttt{length(newdata)}$ \\
$x_{1:l} \gets \texttt{pobs(olddata)}$\\
$x_{(l+1):(l+n)} \gets F(\texttt{newdata})$\\
$e_{1:r}, W_{1:r} \gets (0.5,0.5)$\\
$\hat{U} = \texttt{matrix(ncol = n,nrow = 99)}$\\
\For{$t\gets (r+1)$ \KwTo $(l+n)$}{
	$e^{\dagger} \gets \texttt{inverse{\_}rosenblatt}(u = \left(W_{t-q}, e_{(t-q+1):(t-1)}, 0.5\right), \texttt{model{\_}MA})$\\
	$e_t \gets \texttt{rosenblatt}(x = (e^{\dagger}_{1:q},x_t), \texttt{model{\_}MA})_{(q+1)}$ \\
	$W^{\dagger} \gets \texttt{rosenblatt}(x = (W_{(t-p):(t-1)}, 0.5), \texttt{model{\_}AR})$\\
    $W_{t} \gets \texttt{inverse{\_}rosenblatt}((W^{\dagger}_{1:p}, e_t), \texttt{model{\_}AR})_{(p+1)}$\\
    %$d_t \gets \texttt{dvinecop}(u = (x_t, e_{(t-1):(t-q+1)}, W_{(t - q)}), \texttt{model{\_}MA})$
    }
\For{$t\gets (l+1)$ \KwTo $(n+l)$}{
	$\alpha \gets \texttt{1:99 / 100}$\\
	$U^{\dagger} \gets \texttt{rosenblatt}(x = (W_{(t-q)},e_{(t-q+1):(t-1)}, 0.5), \texttt{model{\_}MA})$\\
    $\hat{U}_{(t-l),(1:m)} \gets \texttt{inverse{\_}rosenblatt}((U^{\dagger}_{(1:q)}, \alpha), \texttt{model{\_}MA})_{(q+1)}$
    }
    \Return $\hat{U}$
%\end{algorithmic}
\end{algorithm}
%}
\end{minipage}
%
% Evaluation measures
%
As evaluation measures we use the continuous ranked probability score (CRPS), the predictive negative log-likelihood (NLL), the $5$/$95$-\% pinball score (PBS), the root mean squared error (RMSE), and the mean absolute error (MAE). The CRPS evaluates the whole probabilistic forecast. It estimates the integrated squared distance between the predicted CDF and the empirical CDF of the realized observation, compare \cite{gneiting2007strictly}. We approximate the CRPS by aggregating the pinball score values of all percentiles as is common in practice. Usually the CRPS is used for forecast evaluation only, we also report the in-sample CRPS to have a potential reference value.
The NLL also evaluates the whole predictive distribution. Note that the models are trained w.r.t. the NLL. The PBS evaluates quantile forecasts, the RMSE mean forecasts and the MAE median forecasts. For the copula models, the mean prediction is obtained by averaging all percentiles. The median prediction is just the 50\%-percentile. For the Gaussian-ARMA model mean and median prediction coincide. For all evaluation measures it holds that lower values mean better performance.\\
%
% Inflation 
%
We first discuss the forecasting study on US inflation. The time series contains 244 observations. We split the time series into $0.7$-train, $0.15-$validation and $0.15-$test set. 
The results of the in-sample fit are given in Table~\ref{Table:Inflation_train_evaluation}. We first recognize that the trivial model, Gaussian-ARMA$(0,0)$ is outperformed by all other models on the train data. This implies that the time series is predictable (this is expected for US inflation but nevertheless it is worth checking). The best fit is provided by the kde-AR$(5)$-(g,n,n,n,n) model, followed by the Gaussian-ARMA models and the \SP{(kde/n)-CoARMA$(3,1)$-(n,n,n)-n model.} The kde-model seems to have a better distributional fit, while the n-model minimizes the in-sample RMSE and MAE among the \SP{CoARMA-models.} We note that the Gaussian-ARMA$(1,1)$ and the \SP{n-CoARMA$(1,1)$-(n)-(n)} are the same model, compare proposition~\ref{prop:GaussianARMA}. The small differences in the fit-values are due to the two-step estimation of the copula-based model.
The out-of-sample one-step ahead forecasting performance is evaluated on the validation data, see Table~\ref{Table:Inflation_val_evaluation}. Surprisingly the trivial model minimizes the NLL. This result hints at general problems in predicting US inflation. The temporal dependence seems to change over time. Hence the results of the forecasting study on US inflation have to be interpreted with caution. Nevertheless, we proceed to select the best models of each class on the validation data and then evaluate them on the test set. The results are given in Table~\ref{Table:Inflation_test_evaluation}. The results are inconclusive. While the Gaussian-ARMA model clearly minimizes the NLL and the 5\%-PBS, the CRPS, MAE, RMSE and 95\%-PBS values of the Gaussian-ARMA and the copula-AR model do not differ much. The \SP{CoARMA} model does not minimize any of the criteria. We reiterate that these results have to be interpreted carefully because of the seemingly changing temporal dependence of US inflation. We also report that models whose stationary distribution is approximated by KDE suffer from the relatively low count of observations in quarterly US inflation (244 in total). If a new observation in the validation or test set lies beyond the points observed in the training set, the likelihood can be very small, in an extreme case it even vanishes. In this case we substitute by $10^{-10}$. Additionally, the PIT of such extreme observation yields a value very close to zero (or one), which can go beyond the machine precision in \texttt{R}. This leads to biased predictions after such extreme value.\\
Now we discuss the forecasting study on German wind power production. The original aggregated time series comprises 16 years of daily observations of German wind power production. Before testing the copula-based and ARMA models, which are supposed to model the stochastic component of a time series, we eliminate trend and cyclical components from the time series. We then take the first two years as train data, the third year as validation and the fourth year as test set. The in-sample fitting results are given in Table~\ref{Table:wind_train_evaluation}. Among the Gaussian-ARMA models the ARMA$(1,3)$ model fits best on the train data. The kde-AR$(3)$-(g,n,n) model is the best Markovian copula-based model, which is also the best fitting model on the train data. The \SP{kde-CoARMA$(3,1)$-(n,n,n)-(n)} model fits best among the \SP{CoARMA} models. On the validation data the \SP{kde-CoARMA$(3,1)$-(n,n,n)-(n)} model is best among the model class and among all models. The best models from each model class are then compared on the test data. We observe that the Markovian copula-based and the \SP{model from Eq.~\ref{Eq:Model(p,q)_updating_Eq}} outperform the Gaussian-ARMA model. The Markovian model seems to predict wind power slightly better. We conclude that estimating the stationary distribution with KDE enhances the performance This is also confirmed by comparing the Gaussian NLL for the marginal distribution with the KDE NLL on train, validation and test data. On all instances the KDE NLL is smaller. Furthermore we conclude that wind power is dominated by linear relations, as is indicated by the normal copula performing well.

%\begin{figure}
%\centering
%\includegraphics[scale=0.2]%{kde_PARMA_41_(g,i,n,g)_(t)_in_sample_fit.pdf}
%\caption{a}
%\end{figure}

\clearpage
\begin{table}[!t]
%\begin{center}
\caption{\small \textbf{In sample} mean CRPS, NLL, RMSE and MAE values of fits on quarterly US-inflation. The time series contains 244 observations. The values displayed here show the fit on the first 171 observations, which act as train data. The highlighted values are the ones which are best in the respective model class.}
\label{Table:Inflation_train_evaluation}
\centering
\resizebox{0.9\columnwidth}{!}{%
\begin{tabular}{p{5.7cm}|p{2cm}p{2cm}p{2cm}p{2cm}}
\hline\noalign{\smallskip}
Model & CRPS & NLL & RMSE & MAE
\\
\hline
\hline
Gaussian-ARMA Models & & & 
\\
\hline
%\noalign{\smallskip}\svhline\noalign{\smallskip}
Gaussian-ARMA(0,0) & $0.429$ & $199.8$ & $0.779$ & $0.580$
\\
Gaussian-ARMA(1,1) & $0.244$ & $102.2$ & $0.438$ & $0.332$
\\
Gaussian-ARMA(4,1) & $0.223$ & $86.9$ & $0.400$ & $0.306$
\\
Gaussian-ARMA(2,3) & $\textbf{0.221}$ & $\textbf{83.9}$ & $\textbf{0.393}$ & $\textbf{0.306}$
\\
\hline
\hline
Copula Markov models
\\
\hline
kde-AR$(1)$-(n) & $0.253$ & $102.0$ & $0.463$ & $0.358$
\\
kde-AR$(5)$-(g,n,n,n,n) & $\textbf{0.216}$ & $\textbf{77.7}$ & $\textbf{0.397}$ & $\textbf{0.300}$
\\
kde-AR$(5)$-(c180,i,n,c180,n) & $0.258$ & $105.8$ & $0.482$ & $0.359$
\\
\hline
\hline
CoARMA-models
\\
\hline
n-CoARMA$(4,1)$-(g,i,n,g)-(t) & $0.244$ & $101.2$ & $0.436$ & $0.332$
\\
n-CoARMA$(1,1)$-(n)-(n) & $0.245$ & $103.1$ & $0.440$ & $0.333$
\\
n-CoARMA$(3,1)$-(n,n,n)-(n) & $0.227$ & $91.6$ & $\textbf{0.408}$ & $\textbf{0.312}$
\\
kde-CoARMA$(1,1)$-(n)-(n) & $0.243$ & $97.3$ & $0.446$ & $0.3375$
\\
kde-CoARMA$(3,1)$-(n,n,n)-(n) & $\textbf{0.227}$ & $\textbf{86.6}$ & $0.417$ & $0.319$
\\
\hline
%\noalign{\smallskip}\hline\noalign{\smallskip}
\end{tabular}
}
%\end{center}
\end{table}

\begin{table}[!t]
%\begin{center}
\caption{\small \textbf{Validation} mean CRPS, NLL, PBS 5\% \& 95\%, RMSE, MAE and NLL values of forecasts of quarterly US-inflation. The time series contains 244 observations. The values displayed here show the evaluation on 37 validation-observations. The highlighted values are the ones which are best in the respective model class.}
\label{Table:Inflation_val_evaluation}
\centering
\resizebox{0.9\columnwidth}{!}{%
\begin{tabular}{p{5.7cm}|p{1.5cm}p{1.5cm}p{1.5cm}p{1.5cm}p{1.5cm}p{1.5cm}}
\hline\noalign{\smallskip}
Model & CRPS & NLL & PBS-5\% & PBS-95\% & RMSE & MAE
\\
\hline
\hline
Gaussian-ARMA Models & & & & & &
\\
\hline
%\noalign{\smallskip}\svhline\noalign{\smallskip}
Gaussian-ARMA(0,0) & $0.503$ & $\textbf{53.3}$ & $0.275$ & $\textbf{0.174}$ & $0.968$ & $0.690$
\\
Gaussian-ARMA(1,1) & $0.552$ & $100.4$ & $0.351$ & $0.237$ & $1.011$ & $0.699$
\\
Gaussian-ARMA(4,1) & $\textbf{0.478}$ & $85.3$ & $\textbf{0.274}$ & $0.207$ & $\textbf{0.870}$ & $\textbf{0.620}$
\\
Gaussian-ARMA(2,3) & $0.486$ & $89.0$ & $0.284$ & $0.189$ & $0.876$ & $0.638$
\\
\hline
\hline
Copula Markov models & & & & & &
\\
\hline
kde-AR$(1)$-(n) & $0.546$ & $127.9$ & $0.367$ & $0.248$ & $0.995$ & $0.675$
\\
kde-AR$(5)$-(g,n,n,n,n) & $\textbf{0.466}$ & $\textbf{79.7}$ & $\textbf{0.304}$ & $\textbf{0.204}$ & $\textbf{0.873}$ & $\textbf{0.592}$
\\
kde-AR$(5)$-(c180,i,n,c180,n) & $0.525$ & $117.3$ & $0.331$ & $0.297$ & $0.992$ & $0.663$
\\
\hline
\hline
CoARMA-models & & & & & &
\\
\hline
n-CoARMA$(4,1)$-(g,i,n,g)-(t) & $0.579$ & $80.4$ & $0.384$ & $0.2515$ & $1.023$ & $0.723$
\\
n-CoARMA$(1,1)$-(n)-(n) & $0.543$ & $99.3$ & $0.354$ & $0.223$ & $0.999$ & $0.685$
\\
n-CoARMA$(3,1)$-(n,n,n)-(n) & $0.466$ & $\textbf{96.3}$ & $\textbf{0.300}$ & $\textbf{0.178}$ & $\textbf{0.851}$ & $0.602$
\\
kde-CoARMA$(1,1)$-(n)-(n) & $0.540$ & $133.3$ & $0.353$ & $0.285$ & $0.984$ & $0.662$
\\
kde-CoARMA$(3,1)$-(n,n,n)-(n) & $\textbf{0.443}$ & $115.1$ & $0.304$ & $0.232$ & $0.855$ & $\textbf{0.541}$
\\
\hline
%\noalign{\smallskip}\hline\noalign{\smallskip}
\end{tabular}
}
%\end{center}
\end{table}

\begin{table}[!t]
%\begin{center}
\caption{\small \textbf{Test} mean CRPS, NLL, PBS 5\% \& 95\%, RMSE and MAE values of forecasts of quarterly US-inflation. The time series contains 244 observations. The values displayed here show the evaluation on 36 test-observations. The models shown here are the ones which performed best in their respective model class on the validation data. The highlighted values are the lowest ones in the respective column.}
\label{Table:Inflation_test_evaluation}
\centering
\resizebox{0.9\columnwidth}{!}{%
\begin{tabular}{p{5.7cm}|p{1.5cm}p{1.5cm}p{1.5cm}p{1.5cm}p{1.5cm}p{1.5cm}}
\hline\noalign{\smallskip}
Model & CRPS & NLL & PBS-5\% & PBS-95\% & RMSE & MAE
\\
\hline
\hline
%Gaussian-ARMA Models
%\\
%\hline
%Gaussian-ARMA(0,0) & $0.4897$ & $45.1434$ & $0.1453$ & $0.1954$ & $0.8428$ & $0.7001$
%\\
%Gaussian-ARMA(1,1) & $0.3739$ & $40.3961$ & $0.1279$ & $0.1718$ & $0.6314$ & $0.5176$
%\\
Gaussian-ARMA(4,1) & $\textbf{0.332}$ & $\textbf{36.7}$ & $\textbf{0.124}$ & $\textbf{0.175}$ & $0.576$ & $0.444$
\\
%Gaussian-ARMA(2,3) & $0.3222$ & $34.9335$ & $0.1265$ & $0.1539$ & $0.5563$ & $0.4356$
%\\
%\noalign{\smallskip}\svhline\noalign{\smallskip}
%Gaussian-trivial & $-$ & $-$ & $-$ & $-$ & $-$ & $-$ 
%\\
%\hline
%\hline
%Copula Markov models & & & & & &
%\\
%\hline
%kde-AR$(1)$-(n) & $0.546$ & $127.9186$ & $0.3671$ & $0.2477$ & $0.9954$ & $0.6748$
% \\
kde-AR$(5)$-(g,n,n,n,n) & $0.338$ & $47.8$ & $0.161$ & $\textbf{0.175}$ & $\textbf{0.563}$ & $\textbf{0.439}$
\\
%kde-AR$(5)$-(c180,i,n,c180,n) & $0.313$ & $55.4683$ & $0.1815$ & $0.182$ & $0.6096$ & $0.3952$
%\\
%\hline
%\hline
%CoInnoCo-models & & & & & &
%\\
%\hline
%n-CoInnoCo$(4,1)$-(g,i,n,g)-(t) & $0.4343$ & $49.7574$ & $0.1965$ & $0.1871$ & $0.7051$ & $0.5794$
%\\
%n-CoInnoCo$(1,1)$-(n)-(n) & $0.3786$ & $41.243$ & $0.1317$ & $0.1723$ & $0.6362$ & $0.5231$
%\\
n-CoARMA$(3,1)$-(n,n,n)-(n) & $0.353$ & $48.0$ & $0.139$ & $0.177$ & $0.601$ & $0.466$
\\
%kde-CoInnoCo$(1,1)$-(n)-(n) & $0.3805$ & $70.7438$ & $0.1745$ & $0.2247$ & $0.6168$ & $0.4908$
%\\
%kde-CoInnoCo$(3,1)$-(n,n,n)-(n) & $0.3501$ & $68.0325$ & $0.1701$ & $0.2163$ & $0.5887$ & $0.4391$
%\\
\hline
%\noalign{\smallskip}\hline\noalign{\smallskip}
\end{tabular}
}
%\end{center}
\end{table}
\clearpage

%%%%%%%%%%%%%%%%%%%%%%%%%%%%%%%%%%%%%%%%%%%%%%%%%%%%%%
%%%%%%%%%%%%%%%%%%%%%%%%%%%%%%%%%%%%%%%%%%%%%%%%%%%%%%
%%%%%%%%%%%%%%%%%%%%%%%%%%%%%%%%%%%%%%%%%%%%%%%%%%%%%%

\clearpage
\begin{table}[!t]
%\begin{center}
\caption{\small \textbf{In sample} mean CRPS, NLL, RMSE and MAE values of fits on German wind power production. The time series contains 16 years of observations. The values displayed here show the fit on observations from the first two years (730 observations), which act as train data. The highlighted values are the ones which are best in the respective model class.}
\label{Table:wind_train_evaluation}
\centering
\resizebox{0.9\columnwidth}{!}{%
\begin{tabular}{p{5.7cm}|p{2cm}p{2cm}p{2cm}p{2cm}}
\hline\noalign{\smallskip}
Model & CRPS & NLL & RMSE & MAE
\\
\hline
\hline
Gaussian-ARMA Models & & & 
\\
\hline
Gaussian-ARMA(0,0) & $0.497$ & $923.9$ & $0.858$ & $0.713$
\\
Gaussian-ARMA(1,1) & $0.403$ & $775.9$ & $0.700$ & $0.575$
\\
Gaussian-ARMA(3,1) & $\textbf{0.401}$ & $773.6$ & $\textbf{0.698}$ & $\textbf{0.571}$
\\
Gaussian-ARMA(1,3) & $\textbf{0.401}$ & $\textbf{773.5}$ & $\textbf{0.698}$ & $\textbf{0.571}$
\\
\hline
\hline
Copula Markov models
\\
\hline
kde-AR$(3)$-(g,n,n) & $\textbf{0.399}$ & $\textbf{751.9}$ & $\textbf{0.697}$ & $\textbf{0.567}$
\\
kde-AR$(3)$-(t,n,n) & $0.401$ & $756.7$ & $0.700$ & $0.570$
\\
\hline
\hline
CoARMA-models
\\
\hline
kde-CoARMA$(3,1)$-(n,n,n)-(n) & $\textbf{0.401}$ & $\textbf{757.1}$ & $\textbf{0.699}$ & $\textbf{0.570}$
\\
kde-CoARMA$(3,1)$-(g,n,n)-(n) & $0.402$ & $759.4$ & $0.700$ & $0.572$
\\
kde-CoARMA$(1,1)$-(g)-(t) & $0.402$ & $758.1$ & $0.700$ & $\textbf{0.570}$
\\
kde-CoARMA$(2,2)$-(g,n)-(n,t) & $0.441$ & $855.6$ & $0.764$ & $0.634$
\\
\hline
%\noalign{\smallskip}\hline\noalign{\smallskip}
\end{tabular}
}
%\end{center}
\end{table}

\begin{table}[!t]
%\begin{center}
\caption{\small \textbf{Validation} mean CRPS, NLL, Pinball 5\% \& 95\%, RMSE and MAE values of forecasts of German wind power. The evaluation is performed on one year of data (365 observations). The highlighted values are the ones which are best in the respective model class.}
\label{Table:Wind_val_evaluation}
\centering
\resizebox{0.9\columnwidth}{!}{%
\begin{tabular}{p{5.7cm}|p{1.5cm}p{1.5cm}p{1.5cm}p{1.5cm}p{1.5cm}p{1.5cm}}
\hline\noalign{\smallskip}
Model & CRPS & NLL & PBS-5\% & PBS-95\% & RMSE & MAE
\\
\hline
\hline
Gaussian-ARMA Models & & & & & &
\\
\hline
Gaussian-ARMA(0,0) & $0.535$ & $489.5$ & $0.191$ & $0.159$ & $0.920$ & $0.762$
\\
Gaussian-ARMA(1,1) & $0.409$ & $393.4$ & $\textbf{0.151}$ & $0.131$ & $0.711$ & $0.578$
\\
Gaussian-ARMA(3,1) & $\textbf{0.406}$ & $\textbf{391.3}$ & $0.152$ & $\textbf{0.129}$ & $\textbf{0.707}$ & $\textbf{0.574}$
\\
Gaussian-ARMA(1,3) & $0.407$ & $391.5$ & $0.152$ & $\textbf{0.129}$ & $\textbf{0.707}$ & $\textbf{0.574}$
\\
\hline
\hline
Copula Markov models & & & & & &
\\
\hline
kde-AR$(3)$-(g,n,n) & $0.408$ & $388.9$ & $\textbf{0.153}$ & $0.132$ & $0.715$ & $0.574$
\\
kde-AR$(3)$-(t,n,n) & $\textbf{0.406}$ & $\textbf{385.2}$ & $0.155$ & $\textbf{0.128}$ & $\textbf{0.709}$ & $\textbf{0.570}$
\\
\hline
\hline
CoARMA-models & & & & & &
\\
\hline
kde-CoARMA$(3,1)$-(n,n,n)-(n) & $\textbf{0.406}$ & $\textbf{384.8}$ & $\textbf{0.152}$ & $\textbf{0.128}$ & $\textbf{0.709}$ & $0.571$
\\
kde-CoARMA$(3,1)$-(g,n,n)-(n) & $0.407$ & $385.7$ & $0.154$ & $\textbf{0.128}$ & $0.711$ & $0.573$
\\
kde-CoARMA$(1,1)$-(g)-(t) & $0.407$ & $386.7$ & $0.158$ & $0.130$ & $0.713$ & $\textbf{0.571}$
\\
kde-CoARMA$(2,2)$-(g,n)-(n,t) & $0.463$ & $443.8$ & $0.173$ & $0.139$ & $0.805$ & $0.658$
\\
\hline
%\noalign{\smallskip}\hline\noalign{\smallskip}
\end{tabular}
}
%\end{center}
\end{table}

\begin{table}[!t]
%\begin{center}
\caption{\small \textbf{Test} mean CRPS, NLL, Pinball 5\% \& 95\%, RMSE and MAE values of forecasts of German wind power production. The evaluation is performed on one year of test data (365 observations). The models shown here are the ones which performed best in their respective model class on the validation data. The highlighted values are the lowest ones in the respective column.}
\label{Table:Wind_test_evaluation}
\centering
\resizebox{0.9\columnwidth}{!}{%
\begin{tabular}{p{5.7cm}|p{1.5cm}p{1.5cm}p{1.5cm}p{1.5cm}p{1.5cm}p{1.5cm}}
\hline\noalign{\smallskip}
Model & CRPS & NLL & PBS-5\% & PBS-95\% & RMSE & MAE
\\
\hline
\hline
%Gaussian-ARMA(0,0) & $0.4725$ & $444.0837$ & $0.1506$ & $0.1579$ & $0.8148$ & $0.6793$
%\\
%Gaussian-ARMA(1,1) & $0.3883$ & $377.4488$ & $0.1401$ & $0.1398$ & $0.6799$ & $0.5445$
%\\
Gaussian-ARMA(3,1) & $0.391$ & $379.7$ & $0.140$ & $0.140$ & $0.684$ & $0.548$
\\
%Gaussian-ARMA(1,3) & $0.3909$ & $379.5697$ & $0.1393$ & $0.1412$ & $0.6842$ & $0.5471$
%\\
%\noalign{\smallskip}\svhline\noalign{\smallskip}
%\hline
%\hline
%Copula Markov models & & & & & &
%\\
%\hline
%kde-AR$(3)$-(g,n,n) & $0.3886$ & $371.7902$ & $0.138$ & $0.1366$ & $0.681$ & $0.5455$
%\\
kde-AR$(3)$-(t,n,n) & $\textbf{0.389}$ & $\textbf{372.2}$ & $0.139$ & $\textbf{0.136}$ & $\textbf{0.682}$ & $0.546$
\\
%\hline
%\hline
%CoInnoCo-models & & & & & &
%\\
%\hline
kde-CoARMA$(3,1)$-(n,n,n)-(n) & $\textbf{0.389}$ & $372.7$ & $\textbf{0.138}$ & $0.138$ & $\textbf{0.682}$ & $\textbf{0.545}$
\\
%kde-CoInnoCo$(3,1)$-(g,n,n)-(n) & $0.3928$ & $376.34$ & $0.1387$ & $0.1405$ & $0.6893$ & $0.5493$
%\\
%kde-CoInnoCo$(1,1)$-(g)-(t) & $0.3924$ & $375.2211$ & $0.1411$ & $0.1364$ & $0.6876$ & $0.5527$
%\\
%kde-CoInnoCo$(2,2)$-(g,n)-(n,t) & $0.4198$ & $413.6112$ & $0.1441$ & $0.1468$ & $0.728$ & $0.593$
%\\
\hline
%\noalign{\smallskip}\hline\noalign{\smallskip}
\end{tabular}
}
%\end{center}
\end{table}
\clearpage

\section{Discussion}
\label{Sec:Discussion}
We analyze the copula ARMA extension, Eq.~\ref{Eq:Model(p,q)_updating_Eq}, proposed by \cite{joe2014dependence}. The main feature of this process is that it is a copula generalization of the ARMA model with uniform stationary distribution.
% Relation to Gaussian-ARMA
We show that the model nests some Gaussian-ARMA processes in Proposition \ref{prop:GaussianARMA}. It is revealed that a $\Phi$-quantile transformed \SP{process from Eq.~\ref{Eq:Model(p,q)_updating_Eq}} of order $p$ and $q$, with AR and MAG copulas being the Gaussian copula, is a subset of a Gaussian-ARMA$(p,q+p-1)$ process. The additional MA-parameters are all completely determined by the copula parameters. This result is surprising since the \SP{process is involves a Markov copula of order $p$ and a $q$-dependent copula.} The derivation reveals that there are additional terms. The additional terms are due to the structure of the process. For $p=1$, there are no additional terms though. This finding brings up the question whether there is a process composed of MA-type and AR-type parts \SP{that yields a Gaussian ARMA$(p,q)$ as a special case for all $(p,q) \in \mathbb{N}^2$.} The MAGMAR-copula model by \cite{pappert2024moving} can generalize ARMA models with any order, however, in this approach an additional transformation is necessary to ensure a uniform stationary distribution. This is a drawback as theoretical properties are not as easy to derive and the computational demand is increased. \\
We also show that, with a suitable choice of AR and MAG copulas, the Gaussian-GARCH$(1,1)$ process can be recovered in proposition~\ref{prop:GARCH(1,1)}. In principle this enables modeling of time series with GARCH$(1,1)$ dynamics and arbitrary marginal distributions and complements the research on GARCH-mimicking copulas by \cite{dias2024garch}. The estimation of such process, however, is not straight-forward. We discuss a computationally friendly approach in Remark~\ref{Remarks:GARCH(1,1)}. The essence of the approach is to use iterative estimations. Implementing and testing this approach on real time series is a future research direction.
We proceed to investigate the copula MAG$(1)$ process in detail. This process is a basic building block of the model under investigation and also of the MAGMAR-copula model by \cite{pappert2024moving}. Simulations, numerical approximations and exemplary calculations on the Fr{\'e}chet-MAG$(1)$ process reveal that the dependence of the joint distribution of consecutive observations from a copula MAG$(1)$ process is limited. We regularly find that the 5\%-tail dependence coefficients associated to this joint distribution seem to be bounded by at least $\frac{1}{4}$. Numerical approximation of the tail order indicates that the tail dependence coefficients may also vanish for ''standard''-copulas (Gumbel, Clayton and $t$). For the Fr{\'e}chet-MAG$(1)$ process, the upper bound of $\frac{1}{4}$ can also be found analytically.\\
%These results give rise to the question whether there is an additional assumption under which the upper bound of $\frac{1}{2}$ proposed by \cite{embrechts2016bernoulli} can be sharpened. The key in answering this question is a suitable representation of the joint distribution from proposition~\ref{prop:Distributional_MAG(1)}.\\
%
% Identifiability
%
Furthermore we find that the Gaussian-MAG$(1)$ process, similar to the classic MA$(1)$ process, has two equivalent representations. The two representations can be obtained by permuting the innovations. This gives rise to potential \SP{near non-identifiability problems.} We argue that restricting the parameter space \SP{to avoid stronger dependence parameters is not problematic.} This bound of the parameter space coincides with the bound that is needed such that the estimation is consistent (cf. \cite{pappert2024moving}). It is not clear if other MAG$(1)$ processes have two equivalent representations.
%For the Gumbel-MAG$(1)$ process we explore identifiability issues and find similar results. However, it is not clear how the potential second representation can be constructed. This is a potential avenue for future research.\\
%
% PARMA(1,1)
%
We proceed to examine the dependence of consecutive observations of the \SP{model from Eq.~\ref{Eq:Model(p,q)_updating_Eq}} with model orders $p=q=1$ and show that the copulas can be chosen such that any correlation is attainable. Furthermore, we discuss distributional, dependence and estimation properties of the resulting process. A crucial condition for consistency is the stationarity and ergodicity of the estimated latent processes. This condition is on a high-level. Currently it is not clear how to translate this condition to properties of the AR and MAG copula. This question offers potential for further investigations.\\
For the general model with arbitrary model orders, we provide an algorithm, which calculates the likelihood iteratively. Additionally we provide an algorithm that performs one-step ahead probabilistic forecasting.\footnote{The implemented \texttt{R}-code for estimation, simulation and forecasting is available upon request.} Using these algorithms we fit models to US inflation and German wind power production and conduct a forecasting study. Conducting a forecasting study on US inflation turns out to be challenging due to suspected changes in the temporal dependence over time. This is emphasized by the trivial model minimizing the predictive negative log-likelihood score on the validation data. Taking this problem into account, we nevertheless conclude -- with some caution -- that the Gaussian-ARMA$(4,1)$ model is the most suitable model among the models under consideration to forecast US inflation. One major advantage of copula-based models is that the stationary distribution can be chosen freely. For US inflation this advantage does not seem to help much. This could be caused by the relatively small number of observations, which additionally have a relatively strong temporal dependence. For German wind power production, more observations are available and we observe that estimating the stationary distribution with kernel-density estimation in fact enhances the models performances in comparison with models with normal stationary distributions. For both time series, linear temporal dependencies are revealed to dominate in terms of forecasting. In future research it would be interesting to apply the models to temporal dependencies, which are dominated by truly non-linear dynamics.

\section*{Acknowledgement}
\noindent
This research was (partially) funded in the course of TRR 391 \textit{Spatio-temporal Statistics for the Transition of Energy and Transport} (520388526) by the Deutsche Forschungsgemeinschaft (DFG, German Research Foundation).

%\section*{Data Availability Statement}
%\noindent
%The data is publicly available through the \texttt{R} package \texttt{tscopula}.

%\section*{Conflict of Interest Statement}
%\noindent
%The author declares that there is no conflict of interest.

%\section*{Funding Statement}
%\noindent
%This research was (partially) funded in the course of TRR 391 \textit{Spatio-temporal Statistics for the Transition of Energy and Transport} (520388526) by the Deutsche Forschungsgemeinschaft (DFG, German Research Foundation).

%\section*{Associated Key Words}
%\noindent
%Copula, Copula-based time series, Dependence Modeling, Non-linear time series, Persistency, Vine Copula.

%\section*{Mathematics Subject Classification}
%\noindent
%62-02, 62H05, 62M09, 62M10, 62P20, 60G10

\newpage
\bibliography{bibi.bib}

\appendix

\section{Supporting Results}
\subsection{Special cases of the CoARMA$(1,1)$ process}
\label{App:Special_cases_CoInnoCo(1,1)}
In proposition~\ref{prop:CICARMA(1,1)}, Eq.~\ref{Eq:CoInnoCo(1,1)_joint_distribution} the joint distribution of consecutive observations of the \SP{CoARMA$(1,1)$} process is derived as
\begin{align*}
F_{U_t, U_{t-1}}(x_1, x_2) = \int_0^1 \int_0^{C_{2|1}^{-1}\big(K_{2|1}(x_2|w_2) \big | w_2\big)} K_{2|1}(x_1|w_1) c_{21}(w_1, w_2)\, dw_1 \, dw_2.
\end{align*}
We now consider special cases.
\begin{itemize}
\item[a)] Let $K_{21} = C^{\perp}$, then $K_{2|1}(u|v) = u$ and
\begin{align*}
F_{U_t, U_{t-1}}(x_1, x_2) =& \int_0^1 \int_0^{C_{2|1}^{-1}\big(K_{2|1}(x_2|w_2) \big | w_2\big)} K_{2|1}(x_1|w_1) c_{21}(w_1, w_2)\, dw_1 \, dw_2
\\
=& \int_0^1 \int_0^{C_{2|1}^{-1}(x_2 | w_2)} x_1 c_{21}(w_1, w_2)\, dw_1 \, dw_2
\\
=& \int_0^1 x_1 C_{2|1}(C_{2|1}^{-1}(x_2 | w_2)| w_2)\, dw_2
\\
=& x_1 x_2
\end{align*}
This result is expected since $U_t = \varepsilon_t$ for all $t$ for $K_{21} = C^{\perp}$.
\item[b)] Let $K_{21} = C^+$, then $K_{2|1}(u|v) = \mathds{1}\{ u \geq v \}$ and
\begin{align*}
F_{U_t, U_{t-1}}(x_1, x_2) =& \int_0^1 \int_0^{C_{2|1}^{-1}\big(K_{2|1}(x_2|w_2) \big | w_2\big)} K_{2|1}(x_1|w_1) c_{21}(w_1, w_2)\, dw_1 \, dw_2
\\
=& \int_0^1 \int_0^{C_{2|1}^{-1}\big(\mathds{1}\{x_2 \geq w_2\} \big | w_2\big)} \mathds{1}\{x_1 \geq w_2\} \, c_{21}(w_1, w_2)\, dw_1 \, dw_2
\\
=& \int_0^1 \int_0^{\min\{ C_{2|1}^{-1}\big(\mathds{1}\{x_2 \geq w_2\} \big | w_2\big), x_1\}} c_{21}(w_1, w_2)\, dw_1 \, dw_2
\\
=& \int_0^1 C_{2|1}\left(\min\left\{ C_{2|1}^{-1}\big(\mathds{1}\{x_2 \geq w_2\} \big | w_2\big), x_1\right\}\bigg| w_2\right)\, dw_2
\\
=& \int_0^1 \mathds{1}\{ w_2 \leq x_2 \} C_{2|1}(x_1 | w_2)\, dw_2
\\
=& C_{21}(x_1, x_2)
\end{align*}
This result is also expected.
\item[c)] Let $C_{21} = C^{\perp}$, then
\begin{align*}
F_{U_t,U_{t-1}}(x_1,x_2) =& \int_0^1 \int_0^{K_{2|1}(x_2|w_2)} K_{2|1}(x_1|w_1) \, dw_1 \, dw_2
\\
=& \int_0^1 K_{21}(x_1, K_{2|1}(x_2|w_2)) dw_2
\end{align*}
which is the copula given in proposition~\ref{prop:Distributional_MAG(1)}.
\end{itemize}
%
%\subsection{Fr{\'e}chet-MAG$(1)$ process}
%\begin{align}
%K_{1|2}(u_1|u_2) =& (1-\alpha)u_1 + \alpha \mathds{1}\{u_2 \leq u_1\}
%\\
%=&
%\begin{cases}
%0 & u_1 \leq 0
%\\
%(1-\alpha)u_1 & 0 < u_1 < u_2
%\\
%(1-\alpha)u_1 + \alpha & u_2 \leq u_1 \leq 1
%\\
%1 & u_1 > 1
%\end{cases}
%\\
%K_{1|2}^{-1}(u_1|u_2) =&
%\begin{cases}
%0 & u_1 \leq 0
%\\
%\frac{u_1}{1 - \alpha} & 0 < u_1 < (1-\alpha) u_2
%\\
%u_2 & (1-\alpha) u_2 \leq u_1 < (1-\alpha)u_2 + \alpha
%\\
%\frac{u_1 - \alpha}{1 - \alpha} & (1-\alpha)u_2 + \alpha %\leq u_1 \leq 1
%\\
%1 & u_1 > 1
%\end{cases}
%\end{align}

\section{Proofs}
\subsection{Proof of Proposition \ref{prop:GaussianARMA}}
\label{App:Proof:GaussianARMA}
\begin{proof}
We go step-by-step
\begin{itemize}
\item[a)] We use the expression from \cite{mcneil2022time} for the last element of the Rosenblatt transform associated to a stationary D-vine copula with Gaussian pair copulas. Using their result, we may write
\begin{align}
h^{-1}(x;u_1,\hdots,u_p) = \Phi\left( \frac{\Phi^{-1}(x) - \sum_{j = 1}^p\phi^{(p)}_j \Phi^{-1}(u_j)}{\sigma_{\alpha}} \right)
\end{align}
where $\sigma_{\alpha}^2 = \prod_{j=1}^p(1 - \alpha^2_j)$ and $\phi^{(p)}_j, j \in \{1,\hdots,p\}$ are given as in part a) of proposition~\ref{prop:GaussianARMA}. It immediately follows that
\begin{align*}
\Phi^{-1}(U_t) = \sum_{j = 1}^p \phi^{(p)}_j \Phi^{-1}(U_{t-j}) + \sigma_{\alpha} \varepsilon_t,
\end{align*}
which proves the claim. Again we note that this result is well-known in the literature. The result follows directly from \cite{mcneil2022time}. Compare also \cite{smith2010modeling}. The proof is given here for the sake of completeness.
\item[b)] For the MAG-process, the underlying vine is not a stationary D-vine, but a vine with independent margins. The last element of the Rosenblatt transform associated to such a vine is given as
\begin{align}
g^{-1}(x;u_1,\hdots,u_q) = h_{\beta_q}(\,\cdot\,;u_q) \circ \hdots \circ h_{\beta_1}(x;u_1).
\end{align}
For Gaussian pair copulas, we observe that the following structure emerges,
\begin{align}
g^{-1}(x;u_1,\hdots,u_q) = \Phi\left( \sigma_{\beta}^{-1} \Phi^{-1}(x) - \sum_{i = 0}^{q-1} \left[\prod_{j = 0}^i (1-\beta_{q-j}^2)^{-\frac{1}{2}} \right] \beta_{q-i} \Phi^{-1}(u_{i+1}) \right).
\end{align}
Then we see
\begin{align}
    U_t =& g(\varepsilon_t;\varepsilon_{t-1},\hdots,\varepsilon_{t-q})
    \\
    \iff \varepsilon_{t} =& g^{-1}(U_t;\varepsilon_{t-1},\hdots,\varepsilon_{t-q})
    \\
    \iff \sigma_{\beta} \Phi^{-1}(\varepsilon_t) =& \Phi^{-1}(U_t) - \sigma_{\beta}\sum_{i = 0}^{q-1} \left[\prod_{j = 0}^i (1-\beta_{q-j}^2)^{-\frac{1}{2}} \right] \beta_{q-i}\Phi^{-1}(\varepsilon_{t-q + i})
    \\
    \iff Y_t =& \eta_t + \sum_{i = 0}^{q-1} \left[\prod_{j = 0}^i (1-\beta_{q-j}^2)^{-\frac{1}{2}} \right] \beta_{q-i} \eta_{t-q+i}
    \\
    =& \eta_t + \sum_{j = 1}^q \theta_j^{(q)} \eta_{t - j}
\end{align}
where $\theta_j^{(q)} = \sigma_{\beta}^{-1}\left[ \prod_{k = 0}^{j-1} (1 - \beta_k^2)^{\frac{1}{2}} \right] \beta_j$ and $\eta_t = \sigma_{\beta} \Phi^{-1}(\varepsilon_t)$.
\item[c)] Following the procedure from part b), we have
\begin{align}
Y_t =& \Phi^{-1}(U_t)
\\
=& \eta_t + \sum_{i = 1}^{q-1} \theta_{i}^{(q)} \eta_{t-i} + \theta_{q}^{(q)} \sigma_{\beta} X_{t-q}.
\end{align}
Where it is understood that the sum $\sum_{i=1}^{q-1} \theta_{i}^{(q)} \eta_{t-i}$ is zero for $q = 1$. Now we see that $X_{t-q} = \sum_{j = 1}^p \phi^{(p)}_j X_{t-q-j} + \sigma_{\alpha} \varepsilon_{t-q}$ and for each $(X_{t-q-1},\hdots,X_{t-q-p})$ it holds that
\begin{align}
\theta_q^{(q)} \sigma_{\beta} X_{t-q-j} = Y_{t-j} -  \eta_{t-j} - \sum_{i = 1}^{q-1} \theta_i^{(q)} \eta_{t-j-i}
\end{align}
Consequently
\begin{align}
Y_t =& \eta_t + \sum_{i = 1}^{q-1} \theta_i^{(q)} \eta_{t-i} + \sum_{j = 1}^p \theta_q^{(q)} \sigma_{\beta} \phi_j^{(p)} X_{t-q-j} + \theta_q^{(q)} \sigma_{\beta} \sigma_{\alpha} \varepsilon_{t-q}
\\
=& \eta_t + \sum_{i = 1}^{q-1} \theta_i^{(q)} \eta_{t-i} + \theta_q^{(q)} \sigma_{\alpha} \eta_{t-q}
\\
&+ \sum_{j = 1}^p \phi_j^{(p)} [ Y_{t-j} -  \eta_{t-j} - \sum_{i = 1}^{q-1} \theta_i^{(q)} \eta_{t-j-i}]
\\
=& \eta_t + \sum_{j = 1}^p \phi^{(p)}_j Y_{t-j} + \sum_{i = 1}^{q-1} \theta_i^{(q)} \eta_{t-i} + \theta_q^{(q)} \sigma_{\alpha} \eta_{t-q} - \sum_{j = 1}^p \phi^{(p)}_j \eta_{t-j} - \sum_{j = 1}^p\sum_{i = 1}^{q-1} \phi_j^{(p)} \theta_i^{(q)} \eta_{t-j-i}
\\
=& \eta_t + \sum_{j = 1}^p \phi^{(p)}_j Y_{t-j} + \sum_{j = 1}^{p + q - 1} \Psi_j^{(p,q)} \eta_{t-j}
\end{align}
where
\begin{align}
\Psi_j^{(p,q)} =& \tilde{\theta}_j^{(q)} - \tilde{\phi}^{(p)}_j - \zeta_j^{(p,q)}
\\
\tilde{\theta}_j^{(q)} =& \theta_j^{(q)} \times \mathds{1}\{j \leq q \land q > 1\} + \theta_q \sigma_{\alpha} \times \mathds{1}\{j = q\}
\\
\tilde{\phi}_j^{(p)} =& \phi_j^{(p)} \times \mathds{1}\{j \leq p\}
\\
\zeta_j^{(p,q)} =& \left(\sum_{(r,s) \in S_j^{(p,q)}} \phi^{(p)}_r \theta_s^{(q)}\right) \times \mathds{1}\{q>1\}
\\
S_j^{(p,q)} =& \{(r,s) \in \mathbb{N}^2: r \leq p, s \leq q, r+s = j\}
\end{align}
\end{itemize}
\end{proof}
\subsection{Proof of proposition~\ref{prop:ARCH(1)}}
\label{App:Proof:ARCH(1)}
We derive the copula directly from the ARCH$(1)$ process,
\begin{align}
Y_t =& \sigma_t \eta_t, \quad \eta_t \stackrel{iid}{\sim} N(0,1)
\\
\sigma_t =& \sqrt{\alpha_0 + \alpha_1 Y_{t-1}^2}.
\end{align}
The unconditional distribution function of $Y_t$ is denoted by $\Psi$ and the corresponding PIT is denoted by $U_t = \Psi(Y_t)$.\footnote{The explicit form of $\Psi$ (or its derivative, the density) is derived in e.g. \cite{dias2024garch}.}  Then we can write
\begin{align}
Y_t =& \sqrt{\alpha_0 + \alpha_1 Y_{t-1}^2} \eta_t
\\
U_t =& \Psi\left[ \sqrt{\alpha_0 + \alpha_1 \Psi^{-1}(U_{t-1})^2} \Phi^{-1}(\varepsilon_t) \right]
\\
=& g(\varepsilon_t, U_{t-1})
\end{align}
Now we identify the mapping $g$ as the conditional quantile function corresponding to a copula $C_{21}$, which is evaluated at $\varepsilon_t$, conditioned on $U_{t-1}$, i.e. $U_t = g(\varepsilon_t, U_{t-1}) = C_{2|1}^{-1}(\varepsilon_t | U_{t-1})$. It follows that $C_{2|1}^{-1}$ is given as
\begin{align}
C_{2|1}^{-1}(u_1|u_2) = \Psi\left[ \sqrt{\alpha_0 + \alpha_1 \Psi^{-1}(u_2)^2} \Phi^{-1}(u_1) \right].
\end{align}
Then we can immediately see that the conditional CDF corresponding to $C_{12}$ is given as
\begin{align}
C_{2|1}(u_1|u_2) = \Phi\left[ \frac{\Psi^{-1}(u_1)}{\sqrt{\alpha_0 + \alpha_1 \Psi^{-1}(u_2)^2}} \right]
\end{align}
and the expression for the $C_{12}$ is obtained by integrating over the second argument. Similarly the copula density $c_{12}$ can be derived by taking the derivative of $C_{2|1}$ w.r.t. its first argument. Then the expression given in Eq.~9 of \cite{dias2024garch} can be recovered.
\subsection{Proof of proposition~\ref{prop:GARCH(1,1)}}
\label{App:Proof:GARCH(1,1)}
We follow the same strategy as in the derivation for the ARCH-copula in Sect.~\ref{App:Proof:ARCH(1)}. The main difference here is that the MAG and the AR part need to be identified. Let $\Psi$ be the unconditional CDF corresponding to a Gaussian-GARCH$(1,1)$ process and let $\{X_t\}_{t\in\mathbb{Z}}$ follow the modified-ARCH$(1)$ model given by the following updating equation
\begin{align}
X_t = \sqrt{(\alpha_0 + X_{t-1}^2)(\alpha_1 \eta_{t-1}^2 + \beta)}.
\end{align}
Using contraction arguments by \cite{douc2014nonlinear} we see that $\{X_t\}_{t\in\mathbb{Z}}$ is stationary if $\frac{1}{2}\mathbb{E}[\log(\alpha_1 \eta_0^2 + \beta)] < 0$, which is automatically fulfilled if the assumed GARCH$(1,1)$ is stationary.
Now let $\Gamma$ be the unconditional CDF of $\{X_t\}_{t\in\mathbb{Z}}$. The GARCH$(1,1)$ model is given as
\begin{align}
Y_t =& \sigma_t \eta_t
\\
\sigma_t =& \sqrt{\alpha_0 + \alpha_1 Y_{t-1}^2 + \beta_1 \sigma_{t-1}^2}
\end{align}
Now we define
\begin{align}
X_t :=& \sqrt{\alpha_1 Y_{t-1}^2 + \beta_1 \sigma_{t-1}^2}
\\
=& \sqrt{\alpha_1 Y_{t-1}^2 + \beta_1 \frac{Y_{t-1}^2}{\eta_{t-1}^2}}
\label{Eq:GARCH(1,1)_Xt}
\end{align}
and hence
\begin{align}
Y_t = \sqrt{\alpha_0 + X_{t-1}^2} \eta_t
\end{align}
Now we can substitute for $X_t$ in Eq.~\ref{Eq:GARCH(1,1)_Xt} and obtain
\begin{align}
X_t =& \sqrt{\alpha_1 (\alpha_0 + X_{t-1}^2) \eta_t^2 + \beta_1 (\alpha_0 + X_{t-1}^2)}
\\
=& \sqrt{(\alpha_0 + X_{t-1}^2)(\alpha_1 \eta_t^2 + \beta_1)}
\end{align}
Note that $X_t$ follows the modified-ARCH process with unconditional CDF $\Gamma$. Now we can identify that $X_t$ corresponds to the latent AR process and extract the copula associated to the process by applying the PIT. To this end we define $W_t = \Gamma(X_t)$ and we write
\begin{align}
W_t = \Gamma \left[ \sqrt{(\alpha_0 + \Gamma^{-1}(W_{t-1})^2)(\alpha_1 \Phi^{-1}(\varepsilon_t)^2 + \beta_1)} \right].
\end{align}
Following the procedure from proof~\ref{App:Proof:ARCH(1)}, we identify the above mapping as a conditional quantile function corresponding to a bivariate copula $C_{21}$, i.e.
\begin{align}
C_{2|1}^{-1}(u|v) =& \Gamma \left[ \sqrt{(\alpha_0 + \Gamma^{-1}(v)^2)(\alpha_1 \Phi^{-1}(u)^2 + \beta_1)} \right],
\\
C_{2|1}(u|v) =& \Phi\left[ \sqrt{\frac{\Gamma^{-1}(u)^2}{\alpha_1\alpha_0 + \alpha_1\Gamma^{-1}(v)^2} - \frac{\beta_1}{\alpha_1}} \right]
\end{align}
Integration over $v$ yields the desired result. Now for the MAG-copula, we again consider the relation between $Y_t$ and $X_t$ and define $U_t = \Psi(Y_t)$. Then we can write
\begin{align}
U_t = \Psi \left[ \sqrt{\alpha_0 + \Gamma^{-1}(W_{t-1})^2} \Phi^{-1}(\varepsilon_t) \right].
\end{align}
Again we identify the above mapping as a conditional quantile function corresponding to a copula $K_{21}$. That is
\begin{align}
K_{2|1}^{-1}(u|v) =& \Psi \left[ \sqrt{\alpha_0 + \Gamma^{-1}(v)^2} \Phi^{-1}(u) \right] \quad \text{and}
\\
K_{2|1}(u|v) =& \Phi\left[ \frac{\Psi^{-1}(u)}{\sqrt{\alpha_0 + \Gamma^{-1}(v)^2}} \right].
\end{align}
Again integration over $v$ yields the desired result and the proof is complete.
\subsection{Proof of proposition~\ref{prop:Distributional_MAG(1)}}
\label{App:Proof:MAG(1)}
\begin{proof}
We show properties step-by-step.
\begin{itemize}
\item[a)] The process $\{V_t\}$ given as \SP{$V_t = K_{2|1}^{-1}(\varepsilon_t|\varepsilon_{t-1})$} is $1$-dependent with $i.i.d.$ innovations and hence stationary and ergodic. See also \cite{joe2014dependence} and \cite{pappert2024moving}.
\item[b)] The derivation is given in \cite{pappert2024moving}. We still derive $F_{U_t,U_{t-1}}$ explicitly for the sake of completeness.
\begin{align}
F_{V_t,V_{t-1}}(x_1,x_2) =& \mathbb{P}(V_t \leq x_1, V_{t-1} \leq x_2)
\\
=& \mathbb{P}(K_{2|1}^{-1}(\varepsilon_t,\varepsilon_{t-1}) \leq x_1, K_{2|1}^{-1}(\varepsilon_{t-1},\varepsilon_{t-2}) \leq x_2)
\\
=& \int_0^1 \int_0^1 \mathbb{P}(K_{2|1}^{-1}(\varepsilon_t,w_{1}) \leq x_1, K_{2|1}^{-1}(w_{1},w_{2}) \leq x_2) dw_1 dw_2
\\
=& \int_0^1 \int_0^1 K_{2|1}(x_1,w_1) \mathds{1}\{ w_1 \leq K_{2|1}(x_2|w_2) \} dw_1 dw_2
\\
=& \int_0^1 K_{21}(x_1, K_{2|1}(x_2|w_2) dw_2
\end{align}
\item[c)] We recall that $K$ is PQD, if $K(x_1,x_2) \geq x_1x_2, \forall x_1,x_2 \in \SP{(0,1)}$, cf. \cite{joe2014dependence}, Sect.~2.8.1. Then we see that
\begin{align*}
F_{V_t,V_{t-1}}(x_1,x_2) =& \int_0^1 K_{12}(K_{2|1}(x_1|s), x_1) ds 
\\
\geq &  \int_0^1 K_{2|1}(x_2|s) x_1 ds
\\
=& x_1x_2.
\end{align*}
In the first line we used the PQD of $K$ and in the second line we solved the integral. For $K$ being NQD, we can show that $F_{V_t,V_{t-1}}$ is NQD in the same way.
\item[d)] We recall that $K_{2|1}$ is SI, in its first argument, if $K_{2|1}(x_2|x_1)$ is decreasing in $x_1$. Alternatively, consider the random vector $(Z_1,Z_2)$ which we define in such a way that $(Z_1,Z_2) \sim K$. Then $K$ being SI in its first argument is equivalent to $Z_2$ being SI in $Z_1$. We first realize that if $K_{2|1}(x_2|x_1)$ is decreasing in $x_1$, then $K^{-1}_{2|1}(y|x_1)$ is increasing in $x_1$. Then we see that $V_t = K_{2|1}^{-1}(\varepsilon_t, \varepsilon_{t-1})$ is SI in $\varepsilon_{t-1}$. Additionally, $\varepsilon_{t-1} = K_{2|1}(V_{t-1}|\varepsilon_{t-2})$ is increasing in $V_{t-1}$. In combination, $V_t$ is increasing in $V_{t-1}$.
\item[e)] Let $R$ be the infinite dimensional correlation matrix for $\{U_t\}_{t\in\mathbb{Z}}$. It holds that $R_{ii} = 1, \, \forall i \in \mathbb{Z}$ and $R_{i,i\pm 1} = \rho$. All elements that are not on the main and secondary diagonal are zero since the process is one-dependent. Hence $R$ is a tridiagonal matrix and we can use the result from \cite{tao2025tridiagonal} that establishes that $|\rho|\leq\frac{1}{2}$ has to hold in order for $R$ to be a correlation matrix.
\item[f)] We use the result from \cite{embrechts2016bernoulli} which establishes that $|\lambda_u|, |\lambda_l| \leq \frac{1}{2}$ for general $1$-dependent processes.
\item[g)] For $Y_t = \{\Phi^{-1}(V_t)\}$ it holds that $\{Y_t\}_{t\in\mathbb{Z}}$ follows the updating equation (compare proposition~\ref{prop:GaussianARMA}),
\begin{align*}
Y_t = \eta_t + \frac{\beta_1}{(1 - \beta_1)^{\frac{1}{2}}} \eta_{t-1}, \quad \eta_t \sim N(0,(1-\beta_1^2))
\end{align*}
For $\tilde{Y}_t = \{\Phi^{-1}(\tilde{V}_t)\}$ is holds that
\begin{align*}
\tilde{Y}_t = \tilde{\eta}_t + \frac{(1-\beta_1^2)^{\frac{1}{2}}}{\beta_1} \tilde{\eta}_{t-1}, \quad \tilde{\eta}_t \sim N(0,\beta_1^2).
\end{align*}
It is well known that $\{Y_t\}$ and $\{\tilde{Y}_t\}$ are two representations of the same MA$(1)$ process. The proposition follows immediately.
\end{itemize}
\end{proof}

\subsection{Proof of proposition~\ref{prop:CICARMA(1,1)}}
\label{App:Proof:CICARMA(1,1)}
\begin{proof}
We go step-by-step
\begin{itemize}
\item[a)] The time series $\{W_t\}$ follows an copula AR$(1)$ process, \SP{with $C_{21}$ being the copula of $(W_t,W_{t-1})$ for all $t$.} Assume that $C_{21}$ is chosen such that the process is stationary and ergodic (this is true under mild conditions). Then it follows that $\{U_t\}$ is also stationary and ergodic since it is calculated by aggregating the independent random variables $\varepsilon_t$ and $W_{t-1}$, \SP{i.e. $U_t = K_{2|1}^{-1}(\varepsilon_t|W_{t-1})$, implying that $K_{21}$ is the copula of $(U_t,W_{t-1})$.}
\item[b)] We consider the joint distribution of $(U_t, U_{t-1})$,
\begin{align*}
F_{U_t,U_{t-1}}(x_1, x_2) =& \mathbb{P}(U_t \leq x_1, U_{t-1} \leq x_2)
\\
=& \mathbb{P}(h(\varepsilon_t\, ,\, W_{t-1}) \leq x_1, h(\varepsilon_{t-1}, W_{t-2}) \leq x_2)
\\
=& \mathbb{P}(\varepsilon_t \leq K_{2|1}(x_1|W_{t-1}) \, ,\, \varepsilon_{t-1} \leq K_{2|1}(x_2|W_{t-2}))
\\
=&
\mathbb{P}(\varepsilon_t \leq K_{2|1}(x_1|W_{t-1}) \, ,\, W_{t-1} \leq C_{2|1}^{-1}(K_{2|1}(x_2|W_{t-2})| W_{t-2}))
\\
=& \int_0^1 \int_0^1 c_{21}(w_1,w_2) \mathbb{P}(\varepsilon_t \leq K_{2|1}(x_1|w_{1}) \, ,\, w_{1} \leq C_{2|1}^{-1}(K_{2|1}(x_2|w_{2})| w_{2})) \, dw_1 \, dw_2
\\
=& \int_0^1 \int_0^1 c_{21}(w_1,w_2) \, K_{2|1}(x_1|w_{1}) \, 1_{\{ w_{1} \leq C_{2|1}^{-1}(K_{2|1}(x_2|w_{2})| w_{2}) \}} \, dw_1 \, dw_2
\\
=& \int_0^1 \int_0^{C_{2|1}^{-1}(K_{2|1}(x_2|w_{2})| w_{2})} c_{21}(w_1,w_2) \, K_{2|1}(x_1|w_{1}) \, dw_1 \, dw_2,
\end{align*}
where we used that the joint density of $(W_t, W_{t-1})$ is given by the copula density $c_{21}$.\\
\end{itemize}
\end{proof}

\subsection{Proof of proposition~\ref{prop:Joint_Distributions_higher_orders}}
\label{App:Proof:Joint_Distribution_higher_orderss}
\begin{proof}
For $p=1, q=2$ we have \SP{that $C_{21}$ is the copula of $(W_t, W_{t-1})$ and $K_{321}$ the copula of $(U_t, \varepsilon_{t-1}, W_{t-2})$. The joint distribution of consecutive observations can be calculated as follows.} 
\begin{align*}
F_{U_t,U_{t-1}}(x_1, x_2) =& \mathbb{P}(U_t \leq x_1, U_{t-1} \leq x_2)
\\
=& \mathbb{P}(K_{3|21}^{-1}(\varepsilon_t|\varepsilon_{t-1},W_{t-2}) \leq x_1, K_{3|21}^{-1}(\varepsilon_{t-1}|\varepsilon_{t-2},W_{t-3}) \leq x_2)
\\
=& \mathbb{P}(\varepsilon_t \leq K_{3|21}(x_1|\varepsilon_{t-1}, W_{t-2}), \varepsilon_{t-1} \leq K_{3|21}(x_2|\varepsilon_{t-2}, W_{t-3})
\end{align*}
Now we condition on suitable random variables. It is important that the joint distribution of the random variables which the expression is conditioned on is known. Here we condition on $(\varepsilon_{t-1},W_{t-2},W_{t-3})$. The joint density of this random vector is $f_{\varepsilon_{t-1},W_{t-2},W_{t-3}}(u_1, u_2, u_3) =c_{21}(u_2, u_3)$ by construction, because $\varepsilon_{t-1}$ is independent of $(W_{t-2}, W_{t-3})$. Hence we can continue the calculation as follows
\begin{align*}
F_{U_t,U_{t-1}}(x_1, x_2) =& \SP{\int_0^1 \int_0^1 \int_0^1} \mathbb{P}(\varepsilon_t \leq K_{3|21}(x_1|e_1, w_2), e_1 \leq K_{3|21}(x_2|e_2, w_3)) c_{21}(w_2, w_3)  de_1 dw_2 dw_3
\\
=& \SP{\int_0^1 \int_0^1 \int_0^1} K_{3|21}(x_1|e_1, w_2) \times \mathds{1}\{ e_1 \leq K_{3|21}(x_2|e_2, w_3) \} c_{21}(w_2, w_3) \SP{de_1 dw_2 dw_3.}
\end{align*}
Now for $p=2, q=1$ we have \SP{that $C_{321}$ is the copula of $(W_t, W_{t-1}, W_{t-2})$ and $K_{21}$ the copula of $(U_t, W_{t-1})$. The joint distribution of consecutive observations can be calculated as follows.} 
\begin{align*}
F_{U_t,U_{t-1}}(x_1, x_2) =& \mathbb{P}(U_t \leq x_1, U_{t-1} \leq x_2)
\\
=& \mathbb{P}(K_{2|1}^{-1}(\varepsilon_t| W_{t-1}) \leq x_1, K_{2|1}^{-1}(\varepsilon_{t-1}| W_{t-2}) \leq x_2)
\\
=& \mathbb{P}(\varepsilon_t \leq K_{2|1}(x_1|W_{t-1}), \varepsilon_{t-1} \leq K_{2|1}(x_2|W_{t-2}))
\\
=& \mathbb{P}(\varepsilon_t \leq K_{2|1}(x_1|W_{t-1}), W_{t-1} \leq C_{3|21}^{-1}(K_{2|1}(x_2|W_{t-2}))|W_{t-2}, W_{t-3}))
\end{align*}
Now we condition on the random vector $(W_t, W_{t-1}, W_{t-2})$, which has joint density given by $c_{321}$.
\begin{align*}
F_{U_t,U_{t-1}}(x_1,x_2) =& \int_0^1 \int_0^1 \int_0^1 \Big[ K_{2|1}(x_1|w_1) c_{321}(w_1,w_2,w_3) \mathds{1}\{ w_1 \leq C_{3|21}^{-1}(K_{2|1}(x_2|w_2) | w_2, w_3) \}
\\
&\times c_{321}(w_1, w_2, w_3) \Big] dw_1 dw_2 dw_3 .
\end{align*}
For $p=q=2$ we have \SP{that $C_{321}$ is the copula of $(W_t, W_{t-1}, W_{t-2})$ and $K_{321}$ the copula of $(U_t, \varepsilon_{t-1}, W_{t-2})$. The joint distribution of consecutive observations can be calculated as follows.} 
\begin{align*}
F_{U_t,U_{t-1}}(x_1, x_2) =& \mathbb{P}(U_t \leq x_1, U_{t-1} \leq x_2)
\\
=& \mathbb{P}(\varepsilon_t \leq K_{3|21}(x_1|\varepsilon_{t-1}, W_{t-2}), \varepsilon_{t-1} \leq K_{3|21}(x_2|\varepsilon_{t-2}, W_{t-3}))
\end{align*}
We substitute $\varepsilon_{t-2} = C_{3|21}(W_{t-2}|W_{t-3}, W_{t-4})$ and condition on $(\varepsilon_{t-1}, W_{t-2}, W_{t-3}, W_{t-4})$ with joint density $c_{321}$.
\begin{align*}
F_{U_t,U_{t-1}}(x_1, x_2) =& \int_0^1 \int_0^1 \int_0^1 \int_0^1 K_{3|21}(x_1|e_1, w_2) c_{321}(w_2, w_3, w_4) \mathds{1}\{ e_1 \leq K_{3|21}(x_2|e_2, w_3) \} dw_2 dw_3 dw_4 de_1.
\end{align*}
\end{proof}

\end{document}